\newtheoremstyle{mythm}%
  {}%
  {}%
  {\itshape}%
  {}%
  {\bfseries}%
  {.}%
  {.5em}%
  {\thmname{#1}~\thmnumber{#2}\ifthenelse{\equal{\thmnote{#3}}{}}{}{~(\thmnote{#3})}}%
\newtheoremstyle{mydefn}%
  {}%
  {}%
  {\upshape}%
  {}%
  {\bfseries}%
  {.}%
  {.5em}%
  {\thmname{#1}~\thmnumber{#2}\ifthenelse{\equal{\thmnote{#3}}{}}{}{~(\thmnote{#3})}}%
\newtheoremstyle{myremark}%
  {}%
  {}%
  {\upshape}%
  {}%
  {\itshape}%
  {.}%
  {.5em}%
  {\thmname{#1}~\thmnumber{#2}\ifthenelse{\equal{\thmnote{#3}}{}}{}{~(\thmnote{#3})}}%
\theoremstyle{mythm}
\newtheorem{theo}{Theorem}[section]
\newtheorem{lem}[theo]{Lemma}
\newtheorem{cor}[theo]{Corollary}
\newtheorem{fact}[theo]{Fact}
\theoremstyle{mydefn}
\newtheorem{defn}[theo]{Definition}
\theoremstyle{myremark}
\theoremstyle{mythm}
\newcounter{claimcounter}
\newlist{caselist}{description}{10}
\setlist[caselist]{font=\itshape\mdseries}
\newlist{eroman}{enumerate}{2}
\setlist[eroman,1]{label=(\roman*),itemsep=0pt,topsep=1pt}
\setlist[eroman,2]{label=(\alph*), itemsep=0pt}
\newlist{ealph}{enumerate}{1}
\setlist[ealph]{label=(\Alph*),itemsep=0pt,topsep=1pt}
\newcounter{nlistcounter}
\renewcommand{\phi}{\varphi}
\newcommand{\Bigmid}{\;\Big|\;}
\newcommand{\ceil}[1]{\left\lceil#1\right\rceil}
\newcommand{\floor}[1]{\left\lfloor#1\right\rfloor}
\renewcommand{\mathbf}[1]{\textit{\bfseries #1}}
\renewcommand{\hat}[1]{{\widehat{#1}}}
\newcommand{\bigoh}[1]{O(#1)}
\newcommand{\Bigoh}[1]{O\bigl(#1\bigr)}
\newcommand{\setsize}[1]{ |#1| }
\newcommand{\set}[1]{ \{#1\} }
\newcommand{\setdescr}[2]{ \{#1\;\mid\; #2\} }
\newcommand{\defeq}{:=}
\renewcommand{\tilde}[1]{\widetilde{#1}}
\newcommand{\CB}{{\mathcal B}}
\newcommand{\CC}{{\mathcal C}}
\newcommand{\CE}{{\mathcal E}}
\newcommand{\CY}{{\mathcal Y}}
\newcommand{\CX}{{\mathcal X}}
\newcommand{\CM}{{\mathcal M}}
\renewcommand{\vec}[1]{{\boldsymbol #1}}
\newcommand{\bbeta}{{\boldsymbol\beta}}
\newcommand{\bgamma}{{\boldsymbol\gamma}}
\newcommand{\balpha}{{\boldsymbol\alpha}}
\newcommand{\Piso}{{\mathsf P}_{\textup{iso}}}
\newcommand{\LL}{{\mathsf L}}
\newcommand{\Liso}{\LL_{\textup{iso}}}
\newcommand{\Lcsp}{\LL_{\textup{csp}}}
\newcommand{\var}[1]{\left[#1\right]}
\newcommand{\Carb}{{C_{\text{arb}}}}
\newcommand{\Rarb}{{R_{\text{arb}}}}
\newcommand{\CCext}{{{\CC^\ast}}}
\newcommand{\extensionparameter}{e}
\newcommand{\Pcsp}{\mathsf P_{\text{\upshape csp}}}
\newcommand{\Pext}{\Pcsp(\CCext)}
\newcommand{\arityk}{k} %
\newcommand{\domD}{D} %
\newcommand{\relR}{R} 
\newcommand{\degd}{d} %
\newcommand{\refdegd}{d} %
\newcommand{\SP}{{\mathsf P}}
\newcommand{\SQ}{{\mathsf Q}}
\newcommand{\givar}[2]{[#1\mapsto #2]} %
\newcommand{\cspvar}[2]{[#1\mapsto #2]}
\newcommand{\dirH}{H} %
\newcommand{\graphcsp}{G(\CC)}
\newcommand{\graphcsptilde}{\tilde G(\CC)}
\newcommand{\polyf}{f}
\newcommand{\polyg}{g}
\newcommand{\isopi}{\pi} %
\newcommand{\assignI}{I} %
\newcommand{\ZZ}{\mathbb Z}
\newcommand{\QQ}{\mathbb Q}
\newcommand{\dom}{\operatorname{dom}}
\newcommand{\Dom}{\operatorname{Dom}}
\newcommand{\Var}{\operatorname{Var}}
\newcommand{\psolution}{$p$-solution\xspace} %
\newcommand{\psolutions}{$p$-solutions\xspace} %
\newcommand{\vE}{{\overrightarrow E}}
\newcommand{\cl}{\operatorname{cl}}
\newcommand{\rk}{\operatorname{rk}}
\newcommand{\Bd}{\partial}
\numberwithin{equation}{section}
 \newcommand{\mappsi}{\psi}
 \newcommand{\indm}{m}
 \newcommand{\aritykalt}{{\arityk'}}
 \newcommand{\varnormal}[1]{[#1]}
 \newcommand{\Phiiso}{\Phi_{\text{iso}}}
 \newcommand{\Phicsp}{\Phi_{\text{csp}}}
\newcommand{\matM}{M}
\newcommand{\vecx}{\vec x}
\newcommand{\vecy}{\vec y}
\newcommand{\vecb}{\vec b}
\newcommand{\dimN}{n}
\newcommand{\dimM}{m}
  \newcommand{\LisoLcsp}{$\Liso^\ell(G,H)$ / $\Lcsp^\ell(\CC)$\xspace}
  \newcommand{\PisoPcsp}{$\Piso^\ell(G,H)$ / $\Pcsp^\ell(\CC)$\xspace}
\begin{document}
\title{Linear Diophantine Equations, Group CSPs, and Graph Isomorphism}
\author{%
\begin{tabular}{c}
Christoph Berkholz\\
\normalsize HU Berlin\\
\normalsize berkholz@informatik.hu-berlin.de
\end{tabular}
\qquad
\begin{tabular}{c}
          \large Martin Grohe\\
          \normalsize RWTH Aachen University\\
          \normalsize grohe@informatik.rwth-aachen.de
\end{tabular}
}
\date{}
\maketitle

\begin{abstract}
In recent years, we have seen several approaches to the graph
isomorphism problem based on ``generic'' mathematical programming or
algebraic (Gr\"obner basis) techniques. For most of these, lower
bounds have been established. In fact, it has been shown that the
pairs of non-isomorphic CFI-graphs (introduced by Cai, F\"urer, and
Immerman in 1992 as hard examples for the combinatorial
Weisfeiler-Leman algorithm) cannot be distinguished by these
mathematical algorithms. A notable exception were the algebraic
algorithms over the field $\mathbb F_2$, for which no lower bound was
known. Another, in some way even stronger, approach to graph
isomorphism testing is based on solving systems of linear Diophantine
equations (that is, linear equations over the integers), which is
known to be possible in polynomial time. So far, no lower bounds for
this approach were known.

Lower bounds for the algebraic algorithms can best be proved in the
framework of proof complexity, where they can be phrased as lower
bounds for algebraic proof systems such as Nullstellensatz or the
(more powerful) polynomial calculus. We give new hard examples for
these systems: families of pairs of non-isomorphic graphs that are
hard to distinguish by polynomial calculus proofs simultaneously over
all prime fields, including $\mathbb F_2$, as well as examples that are hard
to distinguish by the systems-of-linear-Diophantine-equations
approach.

In a previous paper, we observed that the CFI-graphs are closely
related to what we call ``group CSPs'': constraint satisfaction
problems where the constraints are membership tests in some coset of a
subgroup of a cartesian power of a base group ($\mathbb Z_2$ in the case of the
classical CFI-graphs). Our new examples are also based on group CSPs
(for Abelian groups), but here we extend the CSPs by a few non-group
constraints to obtain even harder instances for graph isomorphism.


\end{abstract}

\section{Introduction}

The graph isomorphism problem is famous for its unsolved complexity
status, and despite exciting recent developments in graph isomorphism
testing \cite{Babai16GraphIsomorphism}, a polynomial time algorithm is
not in sight.  Recently, generic mathematical programming and
algebraic techniques applied to graph isomorphism have received
considerable
attention~\cite{atsman13,BerGro15,GroheOttoJournal.2016,mal14,codschsno14,DWWZ.2013}. The
basic idea is to
encode the isomorphism problem for two given graphs $G$ and $H$ into a
system of equalities and inequalities in variables $\givar{v}{w}$ for
vertices $v\in V(G)$ and $w\in V(H)$. The intended meaning of the
variable $\givar{v}{w}$ is to indicate whether $v$ is mapped to $w$
(value $1$) or not (value $0$).  The coding details depend on the
exact algorithmic framework: sometimes we use linear equalities and
inequalities, sometimes we use linear and quadratic equalities, and
sometimes we use additional variables such as
$\var{v_1\mapsto w_1,\ldots,v_\ell\mapsto w_\ell}$ indicating that
$v_i$ is mapped to $w_i$ for $i=1,\ldots,\ell$. Furthermore, we
interpret the equations over different fields and rings. Then we solve
or try to solve the system (using different methods like linear or
semi-definite programming or Gr\"obner bases), which should tell us
whether the given graphs are isomorphic, but not always does. All the
polynomial time algorithms based on this paradigm either correctly
detect that the graphs are non-isomorphic or give no definite
answer. In the former case, we say that the algorithm
\emph{distinguishes} the graphs. Hence to prove that the algorithm is
a \emph{complete} isomorphism test we have to show that it
distinguishes all pairs of non-isomorphic graphs. Not surprisingly,
most of these algorithms have been proved to be incomplete. Somewhat
surprisingly, despite the considerable variation of systems that have
been studied, it has turned out that all these algorithms are very
similar in their distinguishing power. In particular, they all fail to
distinguish the non-isomorphic pairs of CFI-graphs, introduced by Cai,
Fürer, and Immerman~\cite{caifurimm92} to prove that the
Weisfeiler-Leman (WL) algorithm, a combinatorial graph isomorphism
test, is incomplete. The distinguishing power as well as the running
time of all these algorithm is governed by a parameter $\ell$, which
is the degree of the polynomials considered by a Gr\"obner basis
algorithm, the
``level'' in a hierarchy of linear and semidefinite programming
relaxations, or the ``dimension'' of the WL algorithm.  Proving lower
bounds for any of the algorithms means proving lower bounds on the
parameter $\ell$ necessary to distinguish the input graphs. For almost
all of the algorithms, the CFI-graphs yield a lower bound on $\ell$
that is linear in the size $n$ of the input graphs. As the running
time of the algorithms is $n^{\Theta(\ell)}$, these lower bounds not
only show that the polynomial time restrictions of the algorithms are
incomplete isomorphism tests, but are in fact much stronger. There are
two algorithms among those considered in this context whose
incompleteness had not been established. The first is based on solving
systems of linear Diophantine equations (that is, linear equations
over the integers), which is possible in polynomial time (see, for
example, \cite{schri86}). The second is based on the Gr\"obner basis
algorithm over fields of characteristic $2$. We prove lower bounds for
both of these algorithms. Before we explain how these lower bounds are
obtained, let us discuss both algorithms in more detail.

The systems of linear Diophantine equations for two graphs $G,H$ are
obtained as follows. We start from a standard integer linear program
in the variables $\var{v\mapsto w}$ whose nonnegative integral
solutions are the isomorphisms between $G$ and $H$. As the standard
LP-relaxation is fairly weak, we strengthen the system using
lift-and-project methods, specifically the Sherali-Adams hierarchy
\cite{Sherali.1990}.
The $\ell$th level of the hierarchy for graph isomorphism consists of
$n^{\Theta(\ell)}$ linear equalities in the variables
$\var{v_1\mapsto w_1,\ldots,v_\ell\mapsto w_\ell}$.  Now instead of
dropping the integrality constraints, as one typically does in
combinatorial optimisation, we
drop the nonnegativity constraints and are left with a system of linear Diophantine equations. 
Our \emph{Diophantine isomorphism test} solves this system over the
integers, which is possible in time $n^{\Theta(\ell)}$, and then
answers ``non-isomorphic'' if no solution exists.  What is remarkable
about this algorithm is that it distinguishes all pairs of CFI-graphs,
and not only that, but also the variants of the CFI-graphs modulo $p$
for all primes $p$. (The CFI-graphs may be viewed as graph encodings
of systems of linear equations modulo 2, and they have natural
variants modulo $p$.) As the CFI-graphs and their variants are used in
all previous lower bound proofs---arguably, the CFI-construction is
the only systematic construction of hard examples for graph
isomorphism that is known---this explains why no lower bounds for
the Diophantine isomorphism test
were known.

Algebraic algorithms for graph isomorphism start from similar
equations in variables $\var{v\mapsto w}$ as the integer linear
program, except that nonnegativity constraints are replaced by
polynomial equations $\var{v\mapsto w}^2=\var{v\mapsto w}$ to ensure
$\set{0,1}$-solutions.
These algorithms can best be analysed by algebraic proof systems such
as Nullstellensatz~\cite{Beame.1994} or the (more powerful) polynomial
calculus \cite{Clegg.1996}, which captures the power of the Gr\"obner
basis algorithm.  In this setting, non-isomorphic graphs can be
efficiently distinguished if they have a refutation of low degree over
some field $\mathbb F$.  In a previous paper
\cite{BerGro15}, 
we established 
degree
lower bounds for graph isomorphism in the polynomial calculus over all
fields except fields of characteristic~$2$. 
These lower bounds were obtained by a reduction from the
so-called Tseitin tautologies in a version due to Buss et al.~\cite{Buss.2001}, for which lower bounds were known in all characteristics but~$2$. 
In this paper, the lower bounds are based on a 
different construction due to Alekhnovich and Razborov~\cite{AR01LowerBounds}, which also provides hard instances over fields of characteristic $2$. 
More significantly, we construct families of pairs of non-isomorphic
graphs for which we can prove lower bounds for the polynomial calculus
that simultaneously hold for all prime fields. 

To prove the lower bounds, both for linear Diophantine equations and
the polynomial calculus over all prime fields, we cannot use the
CFI-instances, because they are distinguished by both algorithms (in
the case of polynomial calculus: the CFI-instances modulo $p$ are
distinguished over the field $\mathbb F_p$). In \cite{BerGro15}, we
established a close connection between the CFI-instances and what we
call \emph{group CSPs}, that is, constraint satisfaction problems
where the constraints are membership tests in some coset of a
cartesian power of a base group. For the ``classical'' CFI-instances
modulo $p$, this group is $\ZZ_p$.  We can associate a pair of
non-isomorphic graphs with any instance of
 an unsatisfiable
group CSP, but unfortunately, this generalisation still does not suffice for the lower bound proof.\footnote{We suspect that 
the Diophantine isomorphism test
can distinguish these graphs for all group CSPs, or at least all Abelian group CSPs, but we can only prove this for Abelian groups that are direct products of 
prime groups $\ZZ_p$.
}
A crucial new idea of this paper is to enhance the group CSPs by an
additional constraint of bounded size. This yields what we call an
\emph{$e$-extended group CSP}, where $e$ is the size (number of
permitted values) of the non-group constraint. We show that for every
fixed $e$, instances of $e$-extended group CSPs can still be
translated to pairs of non-isomorphic graphs. We apply this
construction to group CSPs over the group $\ZZ_2\times\ZZ_3$ and use
the non-group constraint to introduce a ``disjunction'' between the
subgroups $\ZZ_2\times\{0\}$ and $\{0\}\times\ZZ_3$. The resulting
pairs of non-isomorphic graphs are hard to distinguish for linear
Diophantine equations and the polynomial calculus simultaneously over
all prime fields.

We believe that our construction, while still rooted in the
CFI-constructions, adds a genuinely new aspect and thus provides new
hard graph-isomorphism instances which may be useful in other contexts
as well.

\subsection*{Related Work}
The connection between the linear programming approach to graph
isomorphism and the $1$-di\-mensional WL algorithm (a.k.a colour
refinement) goes back to Tinhofer~\cite{tin86}. The correspondence
between the levels of the Sherali-Adams hierarchy and the
higher-dimensional WL was established by Atserias and Maneva
\cite{atsman13} and independently Malkin \cite{mal14} and later
refined by Grohe and Otto~\cite{GroheOttoJournal.2016}. O'Donnell
et. al. \cite{DWWZ.2013} and Codenotti et al.~\cite{codschsno14}
proved that that even the more powerful semi-definite Lasserre
hierarchy fails to distinguish CFI-graphs.

A related approach of applying algebraic techniques to graph
isomorphism was initiated by the authors of this paper in
\cite{BerGro15}. We proved lower bounds for the polynomial calculus
over all fields of characteristic $\neq2$ and also established close
connections between the algebraic approach, the linear programming
approach, and the WL-algorithm. For a detailed discussion of
these
connections, we refer the reader to \cite{BerGro15}.

\section{Linear Equations for Graph Isomorphism and CSP}

\subsection{Preliminaries}

In general, we use standard notation and terminology, but let us
highlight a few points. We denote the vertex and edge set of a directed
or undirected graph $G$ by $V(G)$ and $E(G)$, respectively. We denote
the edges of an undirected graph by $vw$ (instead of $\{v,w\}$) and
the edges of a directed graph by $(v,w)$. An \emph{orientation} of an
undirected graph $G$ is a directed graph $D$ such that $V(D)=V(G)$
and $E(D)$ contains exactly one of $(v,w),(w,v)$ for all $vw\in
E(G)$. 

If $G$ is an undirected graph, for every set $W\subseteq V(G)$ we let $E(W)$ be the set of all edges
incident with a vertex in $W$ and 
$\Bd(W)$, the \emph{boundary} of $W$,  the set of all edges incident with a vertex in $W$ and a vertex
in $V\setminus W$. Note that $\Bd(W)=\Bd(V\setminus W)=E(W)\cap
E(V\setminus W)$. 
If $D$ is a directed graph, for every subset
$W\subseteq V(D)$, we let $\Bd_-(W)$ be the set of all
edges of $D$ with head in $W$ and tail in $V\setminus W$ and $\Bd_+(D)$ the set of all edges of $H$ with tail in $W$ and head in
$V\setminus W$. 
We write $\Bd(v)$  instead of $\Bd(\{v\})$, and similarly
$\Bd_-(v),\Bd_+(v)$.

Both for undirected and directed $G$ and $W\subseteq V(G)$, 
by $G[W]$ we denote the induced subgraph of $G$ with vertex set $W$,
and we let $G\setminus W:=G[V(G)\setminus W]$. Moreover, for
$F\subseteq E(G)$ we let $G-F:=(V(G),E(G)\setminus F)$.

The \emph{degree} of a vertex $v$ of an undirected graph is
$|\Bd(v)|$, and the \emph{degree} of a vertex $v$ of a directed graph
is $|\Bd_-(v)|+|\Bd_+(v)|$. A directed or undirected graph is
$d$-regular if every vertex has degree $d$.

Recall that an instance of the \emph{constraint satisfaction problem (CSP)} is a triple
$({X},\domD,\CC)$, where ${X}$ is a finite set of
\emph{variables}, $\domD$ a finite \emph{domain}, and $\CC$ a set of
constraints of the form $\big(\vec x,R\big)$, where
$\vec x\in{X}^k$ and $R\subseteq \domD^k$, for some $k\ge
0$. An \emph{assignment} $\phi:{X}\to\domD$ \emph{satisfies} the
constraint if $\phi(\vec x)\in R$. The
\emph{arity} of the constraint is $k$, and the \emph{arity} of the
instance $({X},\domD,\CC)$ is the maximum arity of its
constraints.
When the domain is clear from the context we specify CSPs by the set
$\CC$ of their constraints and let the variables be given
implicitly. In this case, we refer to the set of variables of $\CC$ by
$\Var(\CC)$ and to the domain by $\Dom(\CC)$.

\subsection{Equations for Graph Isomorphism and CSP}

Given two graphs $G, H$ we introduce for $\ell\ge 1$ a
system of linear equations $\Liso^\ell(G,H)$. 
These systems form a hierarchy $\Liso^1\subset\Liso^2\subset\cdots$
and are equivalent to the Sherali-Adams hierarchy of relaxations for
a natural linear programming formulation of the graph isomorphism problem, see \cite{BerGro15} for a more detailed discussion of encodings. 
The variables of $\Liso^\ell(G,H)$ are $\var\pi$ for sets $\pi\subseteq V(G)\times
V(H)$ of size $|\pi|\le \ell$. We interpret these sets as partial mappings from
$V(G)$ to $V(H)$. For sets $\pi$ that do not correspond to partial
mappings the system will have an equation $[\pi]=0$, and thus we can
ignore such $\pi$. To emphasise the partial-mapping view, we write
$\var{v_1\mapsto w_1,\ldots,v_m\mapsto w_m}$ or $\var{\vec
  v\mapsto\vec w}$ instead of
$\var{\{(v_1,w_1),\ldots,(v_m,w_m)\}}$. We also write $\var{\pi,v\mapsto
  w}$ instead of $\var{\pi\cup\{(v,w)\}}$.
We say that $\pi$ is a \emph{partial isomorphism}
from $G$ to $H$ if it is an injective partial mapping that additionally preserves adjacencies, that is
$vw\in E(G)\iff \pi(v)\pi(w)\in E(H)$. If $G$ and $H$ are coloured
graphs, partial isomorphisms are also required to preserve colours.
We let $\Liso^\ell(G,H)$ be the following system of
linear equations:
\begin{align}
\label{eq:Liso_1}
  \sum_{v\in V(G)}\var{\pi,v\mapsto w}&=\var \pi&\parbox[t]{7cm}{for all $\pi\subseteq V(G)\times
V(H)$ of size $|\pi|\le \ell-1$ and all $w\in V(H)$,}\\
\label{eq:Liso_2}
  \sum_{w\in V(H)}\var{\pi,v\mapsto w}&=\var \pi&\parbox[t]{7cm}{for all $\pi\subseteq V(G)\times
V(H)$ of size $|\pi|\le \ell-1$ and all $v\in V(G)$,}\\
\label{eq:Liso_3}
\var{\emptyset}&=1,\\
\label{eq:Liso_4}
\var{\pi}&=0&\parbox[t]{7cm}{for all $\pi\subseteq V(G)\times
V(H)$ of size $|\pi|\le \ell$ such that $\pi$ is not a partial isomorphism.}
\end{align}
Note that for all $\ell\geq 2$ the system $\Liso^\ell(G,H)$ has a nonnegative integral solution if and only if the graphs are isomorphic.

In the same way we define for a CSP $\CC$ and $\ell\ge 1$ the system
of linear equations $\Lcsp^\ell(\CC)$.
The variables of
our system are $\var \psi$ for sets $\psi\subseteq \Var(\CC)\times\Dom(\CC)$ of size $|\psi|\le \ell$. We interpret sets as partial mappings from
$\Var(\CC)$ to $\Dom(\CC)$, which are intended to be partial solutions, that is,
partial mappings that satisfy all constraints whose variables are in the
domain of $\psi$. We denote the domain of $\psi$ by $\dom(\psi)$ and
also use notations like $\var{x_1\mapsto \gamma_1,\ldots,x_m\mapsto \gamma_m}$ or $\var{\vec
  x\mapsto\vec \gamma}$  or $\var{\psi,x\mapsto
  \gamma}$.
We let $\Lcsp^\ell(\CC)$ be the following system of
linear equations:
\begin{align}
  \label{eq:lcsp1}
  \sum_{\gamma\in D}\var{\psi,x\mapsto \gamma}&=\var
                                                \psi&\parbox[t]{7cm}{for
                                                      all $\psi\subseteq
                                                      \Var(\CC)\times\Dom(\CC)$ of size $|\psi|\le \ell-1$ and all $x\in \Var(\CC)$,}\\
  \label{eq:lcsp2}
\var{\emptyset}&=1,\\
  \label{eq:lcsp3}
\var{\psi}&=0&\parbox[t]{7cm}{for all $\psi\subseteq \Var(\CC)\times\Dom(\CC)$ of
               size $|\psi|\le \ell$ such that $\psi$ is not a partial solution.}
\end{align}
If $\CC$ is a $k$-ary CSP-instance and $\ell\ge k$, then the system
$\Lcsp^\ell(\CC)$ has a nonnegative integral solution if and only if
$\CC$ is satisfiable.  We are interested in the (not necessarily
nonnegative) integral solutions of $\Liso$ and $\Lcsp$. As one step
towards this we also consider a certain type of rational solutions:
for an integer $p$, a \emph{\psolution} of a system of linear equations
is a satisfying assignment over $\QQ$ that only assigns values from
$\{0\}\cup\{p^z\mid z\in\ZZ\}$.
The next lemma states a criterion for the existence of integral
solutions.

\begin{lem}
  \label{lem:p-solution}
  Let $\mathsf L$ be a system of linear equations over $\ZZ$, and let
  $p,q\in\ZZ$ be co-prime. If $\mathsf L$ has a $p$-solution and a
  $q$-solution, then it has an integral solution.
\end{lem}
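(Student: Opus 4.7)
The plan is to take an appropriate affine combination of the $p$-solution and the $q$-solution, choosing the coefficients so that (i) the combination is still a solution of $\mathsf L$ and (ii) it has only integer entries. The first point is automatic from the structure of solution sets of an inhomogeneous linear system: the set of rational solutions of $\mathsf L$ is an affine subspace of $\QQ^n$, so if $x_p$ and $x_q$ are solutions and $\alpha+\beta=1$, then $\alpha x_p+\beta x_q$ is again a solution.

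To arrange (ii), I would use the definition of a $p$-solution. Since every entry of $x_p$ is either $0$ or a power $p^z$ with $z\in\ZZ$, the denominators occurring in $x_p$ are all powers of $p$. Hence there is some $N\ge 0$ with $p^N x_p\in\ZZ^n$, and analogously some $M\ge 0$ with $q^M x_q\in\ZZ^n$. From $\gcd(p,q)=1$ we get $\gcd(p^N,q^M)=1$, so by B\'ezout's identity there exist integers $s,t$ with $sp^N+tq^M=1$. Setting $\alpha:=sp^N$ and $\beta:=tq^M$, we have $\alpha+\beta=1$, and
\[
  \alpha x_p+\beta x_q \;=\; s\bigl(p^N x_p\bigr)+t\bigl(q^M x_q\bigr)\in\ZZ^n,
\]
which is therefore the desired integral solution of $\mathsf L$.

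I do not expect a significant obstacle here; the argument is essentially the Chinese Remainder Theorem (or B\'ezout) applied to the denominators. The only point requiring any care is remembering that $\mathsf L$ is inhomogeneous, so we must combine $x_p$ and $x_q$ affinely (coefficients summing to $1$) rather than merely linearly; that is why B\'ezout is used exactly as above, with the equation $sp^N+tq^M=1$ playing the double role of ensuring affineness and integrality.
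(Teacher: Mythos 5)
Your argument is correct and is essentially the paper's own proof: both clear denominators to get $p^{N}x_p,q^{M}x_q\in\ZZ^n$, apply B\'ezout to the coprime powers of $p$ and $q$, and observe that the resulting integral combination is affine (coefficients summing to $1$) and hence still solves the inhomogeneous system. The only cosmetic difference is that the paper uses a single common exponent for both solutions, which changes nothing.
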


\begin{proof}
  Suppose that $\mathsf L$ is of the form
  $\setdescr{\vecx}{\matM\vecx=\vecb}$ for a matrix
  $\matM\in\ZZ^{\dimM\times\dimN}$ and a vector $\vecb \in
  \ZZ^\dimM$.
  Let $\vecx, \vecy \in \QQ^\dimN$ be solution vectors over $\{0\}\cup\{p^z\mid
  z\in\ZZ\}$ and $\{0\}\cup\{q^z\mid z\in\ZZ\}$, respectively.
  If one of these solutions is already integral there is nothing to prove.
  Otherwise, let $z\geq 1$ be maximal such that $\vecx$
  contains a value of the form $p^{-z}$ or $\vecy$
  contains a value of the form $q^{-z}$.
  Note that $p^{z}\vecx, q^{z}\vecy \in \ZZ^\dimN$.
  Because $p^{z}$ and $q^{z}$ are relatively prime, there are integers
  $\alpha, \beta \in \ZZ$ such that $\alpha p^{z}+\beta q^{z}=1$.
  Now we have
  \begin{equation}
    \label{eq:8}
    \matM\cdot(\alpha p^{z}\vecx+\beta q^{z}\vecy) = \alpha
    p^{z}\matM\vecx+\beta q^{z}\matM\vecy = \alpha
    p^{z}\vecb+\beta q^{z}\vecb = (\alpha
    p^{z}+\beta q^{z})\vecb = \vecb.
  \end{equation}
  Hence, $\alpha p^{z}\vecx+\beta q^{z}\vecy$ is an integral
  solution for $\mathsf L$.
\end{proof}

\subsection{Tseitin Tautologies}\label{sec:tseitin}
For a directed graph $H$, and Abelian group $\Gamma$, and a mapping
$\sigma\colon V(H)\to\Gamma$, the \emph{$\Gamma$-Tseitin tautology}
$\CC^{H,\Gamma,\sigma}$ is the following CSP with domain $\Gamma$ and variables $x_{e}$ for $e\in E(H)$.
For every $v\in V(H)$ of degree $\arityk$, the CSP
$\CC^{H,\Gamma,\sigma}$ has a $\arityk$-ary
constraint $C^{H,\Gamma,\sigma}(v)$ defined by the equation
\begin{equation}
  \sum_{e\in\Bd_+(v)}x_{e}-\sum_{e\in\Bd_-(v)}x_{e}=\sigma(v)  \label{eq:Tseitin_constraint}
\end{equation}
over $\Gamma$. 
It is easy to see
that if
$\sum_{v\in V(H)}\sigma(v)\neq 0$, then  $\CC^{H,\Gamma,\sigma}$ has no solution. 
For $\Gamma=\ZZ_2$, 
the $\Gamma$-Tseitin tautologies are the classical Tseitin tautologies
\cite{Tseitin1983}. 
The graph $H$ is typically a $k$-regular
expander graph (see Appendix~\ref{sec:expander}).

$\ZZ_p$-Tseitin tautologies were defined in \cite{Buss.2001} and have
been used to prove degree lower bounds for polynomial calculus over
all fields whose characteristic $q$ contains a primitive $p$th root of
unity.  In \cite{AR01LowerBounds}, this lower bound was extended to
all fields of characteristic $q\neq p$,
even for the more restricted variant
of \emph{Boolean $\ZZ_p$-Tseitin tautologies} $\CB^{H,\ZZ_p,\sigma}$, which have the same
variables and constraints as the $\ZZ_p$-Tseitin tautologies, but the
smaller domain $\{0,1\}$.  %

\section{Extended Group CSPs}
\label{sec:exgroupcsp}

We recall the notion of group CSPs introduced in \cite{BerGro15}, but restrict ourselves to finite Abelian groups $\Gamma$ (written additively).
An instance of a \emph{$\Gamma$-CSP} has domain $\Gamma$ and constraints of the form
$\big((x_1,\ldots,x_\arityk),\Delta+\bgamma\big)$, where $\Delta\le\Gamma^\arityk$ is a subgroup and $\Delta+\bgamma$ a coset for some
$\bgamma\in\Gamma^\arityk$. For example, the $\Gamma$-Tseitin
tautologies $\CC^{H,\Gamma,\sigma}$ are $\Gamma$-CSPs (but not their
Boolean versions).

With each constraint
$C=\big((x_1,\ldots,x_\arityk),\Delta+\bgamma\big)$, we associate the
\emph{homogeneous} constraint $\tilde
C=\big((x_1,\ldots,x_\arityk),\Delta\big)$. For an instance
$\CC$, we let $\tilde{\CC}=\{\tilde C\mid C\in\CC\}$.
For every group CSP $\CC$ we define two graphs $\graphcsp$, $\graphcsptilde$ that are isomorphic if and only if $\CC$ is satisfiable.
Let $\CC$ be a $\Gamma$-CSP. We construct a
coloured graph $G(\CC)$ as follows.
\begin{itemize}
\item For every variable $x\in\Var(\CC)$ we take vertices $\gamma^{(x)}$ for all
  $\gamma\in \Gamma$. 
  We colour all these vertices with
  a fresh colour $L^{(x)}$.
\item For every constraint $C=((x_1,\ldots,x_\arityk),\Delta+\bgamma)\in\CC$ we add
  vertices $\bbeta^{(C)}$ for all $\bbeta\in\Delta+\gamma$. 
  We colour
  all these vertices with a fresh colour $L^{(C)}$. If
  $\bbeta=(\beta_1,\ldots,\beta_\arityk)$, we add an edge
  $\{\bbeta^{(C)},\beta_i^{(x_i)}\}$ for all $i\in[\arityk]$. We
    colour this edge with colour $M^{(i)}$.
\end{itemize}
We let $\graphcsptilde$ be the graph $G(\tilde\CC)$ where for all
constraints $C\in\CC$ we identify the
two colours $L^{(C)}$ and $L^{(\tilde C)}$.
We call $\graphcsp$ and $\graphcsptilde$ the \emph{CFI-graphs} over $\CC$.
The
CFI-graphs over the $\ZZ_2$-Tseitin tautologies $\CC^{H,\ZZ_2,\sigma}$ are just the ``standard'' CFI-graphs,
going back to Cai, F\"urer, and Immerman~\cite{caifurimm92}. These graphs have
been intensively studied and applied in the finite-model-theory
literature (and elsewhere).

It was show in \cite[Lemma 2.1]{BerGro15} that the CFI-graphs over
$\CC$ are isomorphic if and only if $\CC$ is
satisfiable.
The next lemma additionally shows \psolutions can be transferred from
the corresponding $\Lcsp$ to $\Liso$.

\begin{lem}\label{lem:integral_solution_csp_to_gi}
  Let $\Gamma$ be an Abelian group and $\CC$ a $\Gamma$-CSP of arity $\arityk$. 
  \begin{enumerate}[label=(\alph*)]
  \item \label{item:csp_and_gi_isomorphic}
    $\CC$ is satisfiable if and only if $\graphcsp$ and $\graphcsptilde$ are isomorphic.
    \item \label{item:integral_solution_csp_to_gi} 
    If $\Lcsp^{\arityk\ell}(\CC)$ has a \psolution, then so does $\Liso^\ell(\graphcsp,\graphcsptilde)$. 
  \end{enumerate}
\end{lem}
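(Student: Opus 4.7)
Part (a) is the content of [BerGro15, Lemma~2.1], so for it I would simply cite that reference; the real work is in part (b). My plan is to pull a given $p$-solution $\phi$ of $\Lcsp^{\arityk\ell}(\CC)$ back to a function $\phi'$ on the variables of $\Liso^\ell(\graphcsp,\graphcsptilde)$ along a natural map $\pi\mapsto\psi(\pi)$ from subsets of $V(\graphcsp)\times V(\graphcsptilde)$ to partial CSP assignments. Since the pullback stays in $\{0\}\cup\{p^z\mid z\in\ZZ\}$, only the defining linear equations of $\Liso^\ell$ then remain to be checked.

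To build the map, for a color-preserving $\pi$ of size at most $\ell$ I let every variable-vertex pair $(\gamma^{(x)},\gamma'^{(x)})\in\pi$ contribute the CSP assignment $x\mapsto \gamma-\gamma'$, and every constraint-vertex pair $(\bbeta^{(C)},\bbeta'^{(\tilde C)})\in\pi$, with $C$ of scope $(x_1,\ldots,x_\arityk)$, contribute $x_i\mapsto \beta_i-\beta_i'$ for each $i\in[\arityk]$. The resulting set $\psi(\pi)\subseteq\Var(\CC)\times\Gamma$ has size at most $\arityk\ell$, so $[\psi(\pi)]$ is a variable of $\Lcsp^{\arityk\ell}(\CC)$. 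I set $\phi'([\pi]):=\phi([\psi(\pi)])$ for color-preserving $\pi$ and $\phi'([\pi]):=0$ otherwise; if $\pi$ induces conflicting values on some variable, then $\psi(\pi)$ fails to be a partial mapping and $\phi([\psi(\pi)])=0$ by equation~(2.7) of $\Lcsp$.

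It remains to verify the four defining equations of $\Liso^\ell$ for $\phi'$. Equation~(2.3) reduces to $\phi([\emptyset])=1$, and equation~(2.4) reduces to showing that every non-partial-isomorphism $\pi$ produces a conflict in $\psi(\pi)$: non-injectivity is immediate, and an adjacency violation between $(\bbeta^{(C)},\bbeta'^{(\tilde C)})\in\pi$ and $(\beta_i^{(x_i)},\gamma'^{(x_i)})\in\pi$ with $\gamma'\neq \beta_i'$ forces $x_i$ to receive both $\beta_i-\beta_i'$ and $\beta_i-\gamma'$. For the sum equations (2.1) and (2.2) with $|\pi|\le\ell-1$ fixed, only $v$ of the same color as the fixed $w$ can contribute. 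If $w=\gamma'^{(x)}$ is a variable-vertex, summing $v=\gamma^{(x)}$ over $\gamma\in\Gamma$ sweeps the implied assignment $x\mapsto \gamma-\gamma'$ over all of $\Gamma$, and a single application of equation~(2.5) of $\Lcsp^{\arityk\ell}$ to $[\psi(\pi)]$ delivers $\phi([\psi(\pi)])=\phi'([\pi])$. If $w=\bbeta'^{(\tilde C)}$ with $\bbeta'\in\Delta$ and $C$ defining the coset $\Delta+\bgamma$, summing $v=\bbeta^{(C)}$ over $\bbeta\in\Delta+\bgamma$ sweeps the induced tuple $(\beta_i-\beta_i')_{i=1}^{\arityk}$ over $\Delta+\bgamma-\bbeta'=\Delta+\bgamma$; iterating equation~(2.5) $\arityk$ times expands $\phi([\psi(\pi)])$ into $\sum_{\balpha\in\Gamma^\arityk}\phi([\psi(\pi),\vec x\mapsto\balpha])$, terms with $\balpha\notin\Delta+\bgamma$ vanish by equation~(2.7) since they violate $C$, and the surviving sum matches our sweep over $v$. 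Equation~(2.2) is the symmetric argument with $G$ and $\tilde G$ swapped.

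The main obstacle I anticipate is the partial-isomorphism-to-partial-mapping translation behind equation~(2.4): one must verify that every kind of failure of $\pi$ to be a partial isomorphism (in particular, adjacency violations mixing constraint- and variable-vertex pairs) is reflected by $\psi(\pi)$ failing to be a partial mapping, so that $\phi'([\pi])=0$ regardless of whether one falls back on the definition or on $\phi$'s vanishing on $[\psi(\pi)]$. A closely related subtlety is the equality $\Delta+\bgamma-\bbeta'=\Delta+\bgamma$ in the constraint-vertex case of the sum equation, which holds precisely because the homogenization $\tilde C$ places $\bbeta'$ in $\Delta$; this alignment between $\graphcsptilde$ and $G(\tilde\CC)$ is what lets the iterated (2.5)/(2.7) argument sweep exactly the right coset.
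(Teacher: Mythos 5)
Your proposal is correct and follows essentially the same route as the paper: part (a) by the argument of \cite{BerGro15}, and part (b) by pulling the \psolution back along exactly the map $\isopi\mapsto\mappsi_\isopi$ (differences $\gamma-\gamma'$, resp.\ $\beta_i-\beta_i'$) that the paper uses, then verifying the equations via \eqref{eq:lcsp1} and \eqref{eq:lcsp3}. Your explicit checks that every failure of $\isopi$ to be a partial isomorphism forces a conflict in $\mappsi_\isopi$, and that the coset sweep $\Delta+\bgamma-\bbeta'=\Delta+\bgamma$ aligns with the iterated application of \eqref{eq:lcsp1}, are points the paper leaves implicit, but they are the same argument.
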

\begin{proof}
  For the first statement we repeat the argument from \cite{BerGro15},
  showing that for every satisfying assignment $\phi$ of $\CC$ there
  is an isomorphism $\isopi_{\phi}$ between $\graphcsp$ and
  $\graphcsptilde$ and for every isomorphism $\isopi_{\phi}$ between
  $\graphcsp$ and $\graphcsptilde$ there is a satisfying assignment
  $\phi_\isopi$ of the $\CC$.  Let $G=(V,E):=G(\CC)$ and
  $\tilde G=(\tilde V,\tilde E):=\tilde G(\CC)$.  Let
  $\phi\colon\CX\to\Gamma$ be a satisfying assignment for $\CC$. We
  define a mapping $\isopi_\phi\colon V\to\tilde V$ as follows:
  \begin{itemize}
  \item For every $x\in\CX$ and $\gamma\in\Gamma$ we let
    $\isopi_\phi(\gamma^{(x)}):=\big(\gamma-\phi(x)\big)^{(x)}$.
  \item For every $C=(x_1,\ldots,x_\arityk,\Delta+\bgamma)\in\CC$ and every
    $\bbeta=(\beta_1,\ldots,\beta_\arityk)\in\Delta+\bgamma$ we let 
    \[
    \isopi_\phi(\bbeta^{(C)}):=\big(\beta_1-\phi(x_1),\ldots,\beta_\arityk-\phi(x_\arityk)\big)^{(C)}.
    \]
    To see that this is well defined, note that $\phi(\vec
    x):=\big(\phi(x_1),\ldots,\phi(x_\arityk)\big)\in\Delta+\bgamma$, because
    $\phi$ satisfies the constraint $C$. Thus 
    \[\bbeta - \phi(\vec
    x)=\big(\beta_1-\phi(x_1),\ldots,\beta_\arityk-\phi(x_\arityk)\big)\in\Delta.
    \] 
  \end{itemize}
  It is easy to see that the mapping $\isopi_\phi$ is bijective. To see that it
  is an isomorphism, consider, for some constraint
  $C=\bigl((x_1,\ldots,x_\arityk),\Delta+\bgamma\bigr)\in\CC$ and some $i\in[\arityk]$, a vertex
  $\bbeta^{(C)}$, where $\bbeta=(\beta_1,\ldots,\beta_\arityk)\in\Delta+\bgamma$, and a vertex
  $\gamma^{(x_i)}$, where $\gamma\in\Gamma$. Then 
  \begin{align*}
  \{\bbeta^{(C)},\gamma^{(x_i)}\}\in
  E&\iff\beta_i=\gamma\iff\beta_i-\phi(x_i)=\gamma-\phi(x_i)\\
  &\iff
  \{\isopi_\phi(\bbeta^{(C)}),\isopi_\phi(\gamma^{(x_i)})\}\in \tilde E.
  \end{align*}
  To prove the backward direction, suppose that $\isopi$ is an isomorphism
  from $G$ to $\tilde G$. We define an assignment $\phi_\isopi:\CX\to\Gamma$ by
  \[
  \phi_\isopi(x)^{(x)}=\isopi^{-1}(0^{(x)}).
  \]
  (Here $0^{(x)}$ denotes the $x$-copy of the
  unit element $0\in\Gamma$ in the graph $\tilde G$.) To see that
  $\phi_\isopi$ is a satisfying assignment, consider a constraint
  $C=(x_1,\ldots,x_\arityk,\Delta+\bgamma)\in\CC$. Let
  $\bbeta=(\beta_1,\ldots,\beta_\arityk)$ with $\beta_i=\phi_\isopi(x_i)$. 
  We need to prove that $\bbeta\in\Delta+\bgamma$. We have
  $\isopi(\beta_i^{(x_i)})=0^{(x_i)}$. As $\vec 0=(0,\ldots,0)\in\Delta$,
  the vertex $\vec 0^{(\tilde C)}\in \tilde V$ has edges to all
  vertices $\isopi(\beta_i^{(x_i))})$. Thus the vertex $\isopi^{-1}(\vec
  0^{(\tilde C)})$ has colour $L^{(C)}=L^{(\tilde C)}$ and edges to
  the vertices $\beta_i^{(x_i)}$. This implies that $\isopi^{-1}(\vec
  0^{(\tilde C)})=\balpha^{(C)}$ for some $\balpha\in\Delta+\bgamma$ and
  $\balpha=(\beta_1,\ldots,\beta_\arityk)=\bbeta$.
This concludes the proof of \ref{item:csp_and_gi_isomorphic}.
\\
For the translation \ref{item:integral_solution_csp_to_gi} of satisfying assignments from $\Lcsp^{\arityk\ell}$ to $\Liso^\ell$  let $\Phicsp\colon \Var\bigl(\Lcsp^\ell(\CC)\bigr)\to \QQ$ be a \psolution. 
We define an assignment $\Phiiso\colon \Var\bigl(\Liso^\ell(\graphcsp,\graphcsptilde)\bigr)\to \QQ$ as follows. We let $\Phiiso(\var{\emptyset})=1$.
Let $\isopi=\{(v_1, w_1),\ldots,(v_\indm, w_\indm)\}$. 
If $\isopi$ is not a partial isomorphism we let $\Phiiso(\var{\isopi})=0$. 
Otherwise, all $(v_i, w_i)$ are of the form $(\gamma^{(x)},\gamma'^{(x)})$ or $(\balpha^{(C)},\bbeta^{(C)})$.
We let 
\begin{align}
  \mappsi_{\isopi} \defeq \qquad &\{(x,\gamma - \gamma')\mid (\gamma^{(x)},\gamma'^{(x)}) \in \isopi\} \\
  \cup \;&\{(x_j,\alpha_j - \beta_j)\mid
           ((\alpha_1,\ldots,\alpha_\aritykalt)^{(C)},(\beta_1,\ldots,\beta_\aritykalt)^{(C)})
           \in \isopi\\
  &\hspace{2.6cm}\text{for some
           }C=((x_1,\ldots,x_{k'}),\Delta+\bgamma)\in\CC\text{ and
           }1\le j\le k'\}
\end{align}
 and set $\Phiiso(\var{\isopi}) \defeq \Phicsp(\var{\mappsi_\isopi})$, noting that $\setsize{\mappsi_\isopi} \leq \arityk\setsize{\isopi}\leq k\ell$. 
 It is clear that if $\Phicsp$ takes values from $\{0\}\cup\{p^z\mid z\in \ZZ\}$ for some prime $p$, then so does $\Phiiso$.
We have to check that this assignment satisfies the equations \eqref{eq:Liso_1}--\eqref{eq:Liso_4} from $\Piso^\ell$.  
First note that \eqref{eq:Liso_3} and \eqref{eq:Liso_4} are satisfied by definition. 
We show that $\Phiiso$ satisfies all equations of the form \eqref{eq:Liso_1} $\sum_{w\in V(H)}\var{\isopi,v\mapsto w}=\var \isopi$ for some $v$, the argument for \eqref{eq:Liso_2} is symmetric. 
First suppose that $v=\gamma^{(x)}$, it follows that
\begin{align}
   \sum_{w\in V(H)}\Phiiso(\varnormal{\isopi,\gamma^{(x)}\mapsto w}) 
   &= \sum_{\gamma'\in \Gamma}\Phiiso(\varnormal{\isopi,\gamma^{(x)}\mapsto \gamma'^{(x)}}) \\
   &= \sum_{\gamma'\in \Gamma}\Phicsp(\varnormal{\mappsi_\isopi,x\mapsto \gamma'}) \label{eq:locallabel1}
   = \Phicsp(\varnormal{\mappsi_\isopi})
   = \Phiiso(\varnormal{\isopi}),
 \end{align} 
where \eqref{eq:locallabel1} follows from \eqref{eq:lcsp1}.
Now suppose that $v=\bbeta^{(C)}$ for some $\bbeta=(\beta_1,\ldots,\beta_\aritykalt)$ and $C=((x_1,\ldots,x_\aritykalt), \Delta+\bgamma)$.
Similar as above we have
\begin{align}
  \sum_{w\in V(H)}\Phiiso(\varnormal{\isopi,\bbeta^{(C)}\mapsto w}) 
  &= \sum_{\balpha\in \Gamma^\aritykalt}\Phiiso(\varnormal{\isopi,\bbeta^{(C)}\mapsto \balpha^{(C)}}) \\
  &= \sum_{\balpha\in \Gamma^\aritykalt}\Phicsp(\varnormal{\mappsi_\isopi,x_1\mapsto \alpha_1,\ldots,x_\aritykalt\mapsto \alpha_\aritykalt}) \label{eq:locallabel3}\\
  &= \Phicsp(\varnormal{\mappsi_\isopi}) \label{eq:locallabel4}
  = \Phiiso(\varnormal{\isopi}).
\end{align}
This concludes the proof of \ref{item:integral_solution_csp_to_gi}.
\end{proof}

Now we extend group CSPs by small non-group constraints
and provide a similar graph encoding for them as for group CSPs.
Extended group CSPs will later play a crucial role in the lower bound
arguments for the Diophantine equations
(Section~\ref{sec:diophantine}) and the polynomial calculus %
(Section~\ref{sec:PClowerbounds}).  Let $\Gamma$ be
a finite Abelian group.  An $\extensionparameter$-\emph{extended}
$\Gamma$-CSP has a constraint set $\CCext=\CC\cup\{\Carb\}$, where
$\CC$ defines a $\Gamma$-CSP, and $\Carb = (\vec x,\Rarb)$ is an
additional constraint with $\setsize{\Rarb}\leq \extensionparameter$.
For $\bgamma\in \Rarb$ we let
$\CC_\bgamma := \CC \cup \{(\vec x,\{\bgamma\})\}$ be the CSP obtained
by fixing the variables in the constraint $\Carb$. Observe that
$\CC_\bgamma$ is a $\Gamma$-CSP, for every $\bgamma\in\Gamma^k$, even
if $\CC^*$ is not.
Furthermore, $\CC\cup\{\Carb\}$ is satisfiable if and only if there
exists an $\bgamma\in \Rarb$ such that $\CC_\bgamma$ is satisfiable.
We now show how to encode extended group CSPs into instances of graphs
isomorphism, that is, we prove an analogue of
Lemma~\ref{lem:integral_solution_csp_to_gi} for extended group CSPs. Let us start by
reviewing a well-known ``or-construction'' for graph
isomorphism (see \cite{kobtorsch93}). For mutually disjoint graphs
$G_1,\ldots,G_\ell$, we let $G_1\uplus\ldots\uplus G_\ell$ be the
disjoint union of the $G_i$ (we also write
$\biguplus_{i=1}^\ell G_i$), and we let
$\langle G_1,\ldots,G_\ell\rangle$ be the graph obtained from the
disjoint union $G_1\uplus\ldots\uplus G_\ell$ by adding fresh vertices
$v_1,\ldots,v_\ell$ and edges from $v_1$ to all vertices in $V(G_1)$
and from $v_i$ to all vertices in $V(G_{i-1})\cup V(G_i)$ for all
$i\ge 2$.
Thus $\langle G_1,\ldots,G_\ell\rangle$ encodes an ordered sequence of the graphs $G_i$ 
and it is not hard to show that two sequence graphs $\langle G_1,\ldots,G_\ell\rangle$ and $\langle H_1,\ldots,H_\ell\rangle$ are isomorphic if and only if all pairs $G_i, H_i$ are isomorphic.

\begin{defn}\label{def:or_construction}
  Let $(G^0_i,G^1_i)_{i\in [\ell]}$ be a sequence of pairs of graphs. 
  We define the graph pair $(G^0,G^1) = \bigvee_{i\in [\ell]}(G^0_{i},G^1_i)$ as follows. For $j\in\{0,1\}$ let
  \begin{align}
    G^j &= \biguplus\, \Bigl\{\langle G^{a_1}_{1},\ldots,G^{a_\ell}_{\ell}\rangle \Bigmid \textstyle\sum_{i=1}^\ell a_i \equiv j \pmod 2\Bigr\}.
  \end{align}
\end{defn}

\begin{lem}\label{lem:or_construction}
 Let $(G^0,G^1) = \bigvee_{i\in [\ell]}(G^0_{i},G^1_i)$. Then $G^0$ and $G^1$ are isomorphic if and only if there exists an $i$ such that $G^0_{i},G^1_i$ are isomorphic.
\end{lem}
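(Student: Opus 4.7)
The plan is to prove both implications directly using the fact (recalled just before Definition~\ref{def:or_construction}) that two sequence graphs $\langle G_1, \ldots, G_\ell \rangle$ and $\langle H_1, \ldots, H_\ell \rangle$ are isomorphic if and only if $G_i \cong H_i$ for each $i$. Throughout, abbreviate $V_j = \{\vec a \in \{0,1\}^\ell : \sum_{i=1}^\ell a_i \equiv j \pmod 2\}$ and $\gamma(\vec a) = \langle G^{a_1}_1, \ldots, G^{a_\ell}_\ell \rangle$, so that $G^j = \biguplus_{\vec a \in V_j} \gamma(\vec a)$.

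For the backward direction, suppose $G^0_{i_0} \cong G^1_{i_0}$ for some $i_0 \in [\ell]$. I would use the bit-flip bijection $\tau\colon V_0 \to V_1$ that toggles the $i_0$-th coordinate. For every $\vec a \in V_0$, the tuples $\vec a$ and $\tau(\vec a)$ agree in every coordinate except $i_0$, and at that coordinate the two graphs are isomorphic by hypothesis. The cited sequence-graph property then yields $\gamma(\vec a) \cong \gamma(\tau(\vec a))$, and the disjoint union of these component isomorphisms over all $\vec a \in V_0$ gives the desired isomorphism $G^0 \cong G^1$.

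For the forward direction, suppose $\pi\colon G^0 \to G^1$ is an isomorphism. The first step is to observe that each $\gamma(\vec a)$ is a connected component of its respective $G^j$: the fresh vertices $v_1, \ldots, v_\ell$ in $\gamma(\vec a)$ together with any block $V(G^{a_i}_i)$ form a connected subgraph, since consecutive $v_i, v_{i+1}$ share neighbours in $V(G^{a_i}_i)$. Hence $\pi$ induces a bijection between $\{\gamma(\vec a) : \vec a \in V_0\}$ and $\{\gamma(\vec b) : \vec b \in V_1\}$. Pick any $\vec a \in V_0$ and let $\vec b \in V_1$ be its image; by the sequence-graph property, $G^{a_i}_i \cong G^{b_i}_i$ for every $i \in [\ell]$. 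Because $\sum_i a_i$ and $\sum_i b_i$ have different parities, at least one index $i_0$ satisfies $a_{i_0} \ne b_{i_0}$, so $\{a_{i_0}, b_{i_0}\} = \{0,1\}$, and for this index the isomorphism $G^{a_{i_0}}_{i_0} \cong G^{b_{i_0}}_{i_0}$ gives $G^0_{i_0} \cong G^1_{i_0}$.

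The one mildly delicate point is the identification of the $\gamma(\vec a)$ as honest connected components in the forward direction; this hinges on the $G^{a_i}_i$ being non-empty, which will be the case in all applications below. If desired, one could make the argument formally unconditional by colouring the fresh glue vertices $v_1,\ldots,v_\ell$ so that each $\gamma(\vec a)$ is a uniquely recognisable block of $G^j$; the component-matching step then goes through without any connectivity assumption and the rest of the argument is unchanged.
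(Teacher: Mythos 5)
Your proof is correct and follows essentially the same route as the paper's: the parity bit-flip bijection between sequence graphs for the backward direction, and the component-matching plus parity argument for the forward direction, both relying on the stated property of sequence graphs $\langle G_1,\ldots,G_\ell\rangle$. You make explicit two points the paper glosses over (the connectivity of each sequence graph, which requires the $G^{a_i}_i$ to be non-empty, and the parity reason why matched tuples must differ in some coordinate), but the argument is the same.
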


\begin{proof}
  Suppose that $G^0$ and $G^1$ are isomorphic. 
  As both graphs consist of $2^{\ell-1}$ connected components, every isomorphism is a combination of isomorphisms between the sequence graphs and hence between the components of the corresponding sequence graphs. 
  As all pairs of sequence graphs from $G^0$ and $G^1$ differ in at least one component, it follows that some pair $G^0_{i}$, $G^1_i$ has to be isomorphic. 
  For the other direction suppose that  $G^0_i$ and $G^1_i$ are isomorphic. 
  There is a bijection between the sequence graphs of $G^0$ and $G^1$ such that $\langle G^{a_1}_{1},\ldots,G^{a_\ell}_{\ell}\rangle$ is matches with $\langle G^{b_1}_{1},\ldots,G^{b_\ell}_{\ell}\rangle$ where $b_i = 1-a_i$ and $b_j=a_j$ for $j\neq i$.
  By combining the isomorphism between  $G^0_i$ and $G^1_i$ with automorphisms on $G^{a_j}_j$ for $j\neq i$ it follows that all pairs of sequence graphs and hence  $G^0$ and $G^1$ are isomorphic. 
\end{proof}

\begin{lem}\label{lem:extended_group_to_gi}
  Suppose that  $\CCext = \CC\cup (\vec x, \Rarb)$ is an $\extensionparameter$-extended group CSP of arity $\arityk$ and let $(G^0_\CCext,G^1_\CCext) \defeq \bigvee_{\bgamma\in \Rarb}(G(\CC_{\bgamma}),\tilde G(\CC_{\bgamma}))$.
\begin{enumerate}[label=(\alph*)]
 \item \label{item:extended_group_to_gi-a}
 $G^0_\CCext$ and $G^1_\CCext$ are isomorphic if and only if $\CCext$ is satisfiable. 
  \item \label{item:extended_group_to_gi-b}
  The size of $G^0_\CCext$ and $G^1_\CCext$ is bounded by $\Bigoh{2^\extensionparameter\setsize{\CCext}}$ 
    \item \label{item:extended_group_to_gi-d}
  If  $\Lcsp^{\arityk\ell}(\CC_\bgamma)$ has a \psolution for some $\bgamma\in\Rarb$, then  $\Liso^\ell(G^0_\CCext,G^1_\CCext)$ has a \psolution. 
\end{enumerate}
\end{lem}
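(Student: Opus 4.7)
Part \ref{item:extended_group_to_gi-a} follows by chaining equivalences: by the definition of extended group CSPs (recalled just before the lemma), $\CCext$ is satisfiable iff $\CC_\bgamma$ is satisfiable for some $\bgamma\in\Rarb$; by Lemma~\ref{lem:integral_solution_csp_to_gi}\ref{item:csp_and_gi_isomorphic} this is equivalent to $G(\CC_\bgamma)\cong\tilde G(\CC_\bgamma)$ for some $\bgamma\in\Rarb$; and by Lemma~\ref{lem:or_construction} this in turn is equivalent to $G^0_\CCext\cong G^1_\CCext$. Part \ref{item:extended_group_to_gi-b} is a direct size count: each $G(\CC_\bgamma)$ and $\tilde G(\CC_\bgamma)$ has size $O(|\CCext|)$; each sequence graph stitches together $L := |\Rarb|\leq\extensionparameter$ of these with $L$ fresh connecting vertices, giving size $O(\extensionparameter|\CCext|)$; and each of $G^0_\CCext,G^1_\CCext$ is a disjoint union of $2^{L-1}\leq 2^{\extensionparameter-1}$ such sequence graphs.

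The substance is in part \ref{item:extended_group_to_gi-d}. Fix $\bgamma^*=\bgamma_{i^*}\in\Rarb$ for which $\Lcsp^{\arityk\ell}(\CC_{\bgamma^*})$ has a \psolution, and let $\Phi^*$ be the corresponding \psolution of $\Liso^\ell(G(\CC_{\bgamma^*}),\tilde G(\CC_{\bgamma^*}))$ supplied by Lemma~\ref{lem:integral_solution_csp_to_gi}\ref{item:integral_solution_csp_to_gi}. Flipping the $i^*$-th coordinate of the selector $(a_1,\ldots,a_L)\in\{0,1\}^L$ is a bijection $\mu$ between the connected components of $G^0_\CCext$ and of $G^1_\CCext$, since toggling one bit toggles the parity of $\sum a_i$. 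I would define $\Phi$ on a partial map $\pi$ of size $\leq\ell$ by setting $\Phi(\pi)=0$ unless $\pi$ is \emph{compatible}, meaning every $(v,w)\in\pi$ lies in a matched pair of sequence graphs $(S,\mu(S))$ and either both $v,w$ lie in the $i^*$-th constituents of $S$ and $\mu(S)$, or $w$ is the unique identity copy of $v$ in $\mu(S)$ (which forces $v$ to be a non-$i^*$ constituent vertex or a connecting vertex of $S$). On compatible $\pi$, letting $\pi^{(S,i^*)}$ denote the restriction of $\pi$ to the $i^*$-th constituents of the matched pair $(S,\mu(S))$, I set
\[ \Phi(\pi)\;:=\;\prod_{S}\Phi^*\bigl(\pi^{(S,i^*)}\bigr), \]
with the convention $\Phi^*(\emptyset)=1$; when the selector has $a_{i^*}=1$, one applies $\Phi^*$ to the reverse of $\pi^{(S,i^*)}$, using that $\Liso^\ell$ is symmetric under swapping its two graph arguments.

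The remaining task is to check the equations of $\Liso^\ell(G^0_\CCext,G^1_\CCext)$. Equation \eqref{eq:Liso_3} is the empty product. Equation \eqref{eq:Liso_4}: if $\pi$ is not a partial isomorphism, it either fails compatibility (so $\Phi(\pi)=0$) or its defect---non-injectivity, adjacency, or colour violation---localises to the $i^*$-th constituents of a single matched pair, so the corresponding $\Phi^*$-factor vanishes by \eqref{eq:Liso_4} for $\Phi^*$. For \eqref{eq:Liso_1}, the incompatible case is trivial; otherwise, writing $S^0=\mu^{-1}(S^1)$ for the component of $G^0_\CCext$ matched with the component $S^1$ containing $w$, only $v\in V(S^0)$ contribute. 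If $w$ is a connecting vertex or in a non-$i^*$ constituent of $S^1$, the sum collapses to the single identity-copy term equal to $\Phi(\pi)$; if $w$ lies in the $i^*$-th constituent of $S^1$, the factors from matched pairs $S'\neq S^0$ pull out of the sum and the inner sum reduces, via \eqref{eq:Liso_1} applied to $\Phi^*$ (valid since $|\pi^{(S^0,i^*)}|\leq|\pi|\leq\ell-1$), to $\Phi^*(\pi^{(S^0,i^*)})$, again giving $\Phi(\pi)$. Equation \eqref{eq:Liso_2} is symmetric, and the \psolution property is preserved because products of values in $\{0\}\cup\{p^z\mid z\in\ZZ\}$ stay in that set. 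The main obstacle is the compatibility bookkeeping, particularly the edges running between a connecting vertex $v_{i^*}$ (or $v_{i^*+1}$) and the $i^*$-th constituent: these cause a mild interaction between the ``identity-copy'' part and the ``$\Phi^*$'' part of $\Phi$, but no clash arises because compatibility already pins the connecting-vertex endpoint to its identity copy while the $i^*$-constituent endpoint is controlled by $\Phi^*$, which respects adjacency within the $i^*$-constituent.
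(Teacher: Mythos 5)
Your proposal is correct and follows essentially the same route as the paper's proof: parts (a) and (b) by combining Lemma~\ref{lem:integral_solution_csp_to_gi}\ref{item:csp_and_gi_isomorphic} with Lemma~\ref{lem:or_construction}, and part (c) by fixing the bit-flip bijection between sequence graphs, using the identity on the unchanged constituents, and plugging in the transferred \psolution (and its reverse) on the $\bgamma$-constituents. Your write-up is in fact more explicit than the paper's, which only sketches the component-wise definition and omits the product formula and the verification of equations \eqref{eq:Liso_1}--\eqref{eq:Liso_4}.
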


\begin{proof}
  For $\CCext=\CC\cup\{(\vec x,\Rarb)\}$ consider the $\Gamma$-CSPs
  $\CC_{\bgamma}$ as defined above and let $G^0_{{\bgamma}}\defeq
  G(\CC_{\bgamma})$ and $G^1_{{\bgamma}}\defeq \tilde
  G(\CC_{\bgamma})$ be the corresponding CFI-graphs, which are of size
  $\bigoh{\setsize{\CC_\bgamma}}$.
  By definition, $G^0_\CCext$ and $G^1_\CCext$ have size $\Bigoh{2^\extensionparameter\|\CCext\|}$ and by Lemma~\ref{lem:or_construction} they are isomorphic if and only if $G^0_{\CC_{\bgamma}}$ and $G^1_{\CC_{\bgamma}}$ are isomorphic for some $\bgamma\in\Gamma$. 
  As this holds if and only if the corresponding $\CC_{\bgamma}$ is satisfiable, it follows that $G^0_\CCext$ and $G^1_\CCext$ are isomorphic if and only if $\CCext$ is satisfiable.

For \ref{item:extended_group_to_gi-d} suppose that $\Lcsp^{\arityk\ell}(\CC_\bgamma)$ has a \psolution.
From Lemma~\ref{lem:integral_solution_csp_to_gi}\ref{item:integral_solution_csp_to_gi} it follows that there is a \psolution for $\Liso^\ell(G^0_{\CC_\bgamma},G^1_{\CC_\bgamma})$ and $\Liso^\ell(G^1_{\CC_\bgamma},G^0_{\CC_\bgamma})$. 
We can fix a bijection between the sequence graphs in $G^0_\CCext$ and $G^1_\CCext$ such that every pair of sequence graphs differs only in component $\bgamma$. 
To define the \psolution, we first set $\var{\pi}=0$ if $\pi$ is not a
partial isomorphism between the corresponding components of the
sequence graphs that are matched by the bijection. 
Otherwise, let $\pi$ be a mapping between components $G^0_i$ and $G^1_i$ of
two matched sequence graph.
If $G^0_i=G^1_i$, then we set $\var{\pi}=1$ if it is a subset of the
identity mapping and $\var{\pi}=0$, else.
If $G^0_i\neq G^1_i$, one of both graphs is a copy of $G^0_{\CC_\bgamma}$ and the
other is a copy of $G^1_{\CC_\bgamma}$.
In this case we let $\var{\pi}$ as defined by the corresponding \psolution for
$\Liso^\ell(G^0_{\CC_\bgamma},G^1_{\CC_\bgamma})$ or
$\Liso^\ell(G^1_{\CC_\bgamma},G^0_{\CC_\bgamma})$.
\end{proof}

\section{Lower Bounds for Linear Diophantine Equations}
\label{sec:diophantine}

In this section we prove one of our main results, the lower bound for
the \emph{Linear-Diophantine-Equations algorithm} for graph
isomorphism testing. Recall that the algorithm works as follows. The
input consists of two graphs $G,G'$, in addition we have a parameter
$\ell\ge 1$. The algorithm computes the system $\Liso^\ell(G,G')$ of
linear equations with integer coefficient (see
\eqref{eq:Liso_1}--\eqref{eq:Liso_4}) and solves it over the integers
(for example by using the polynomial time algorithm described in
\cite{schri86}). If the system has no solution, the the algorithm
answers ``not isomorphic''. If it has a solution, the algorithm
answers ``possibly isomorphic''. The running time of the algorithm is
$n^{O(\ell)}$, where $n$ is the number of vertices of the input
graphs. The algorithm is clearly sound, that is, always gives a
correct answer. To show that it is complete (for some $\ell$), we
would have to prove that for all pairs $G,G'$ of non-isomorphic input
graphs the system $\Liso^\ell(G,G')$ has no integral solution. Our theorem
shows that this fails in rather strong sense: for $\ell=o(n)$, there
are non-isomorphic input graphs for which the system does have an
integral solution.

\begin{theo} \label{thm:diophantine_lowerbound}
  For every $\ell\ge 1$ there are non-isomorphic 3-regular graphs $G,\tilde
  G$ of size $|G|=|\tilde G|=O(\ell)$ such that $\Liso^\ell(G,\tilde G)$ has an integral solution.
\end{theo}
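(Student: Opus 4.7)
The plan is to combine Lemma~\ref{lem:extended_group_to_gi}, which turns extended group CSPs into pairs of graphs while preserving \psolutions, with Lemma~\ref{lem:p-solution}, which lifts a $p$-solution and a $q$-solution for coprime $p,q$ to an integral solution. Concretely, I will exhibit an unsatisfiable $e$-extended $\Gamma$-CSP $\CCext$ over $\Gamma=\ZZ_2\times\ZZ_3$ of size $O(\ell)$ with constant arity $k$ and constant $e$, together with elements $\bgamma_2,\bgamma_3\in\Rarb$ such that $\Lcsp^{k\ell}(\CC_{\bgamma_2})$ has a $2$-solution and $\Lcsp^{k\ell}(\CC_{\bgamma_3})$ has a $3$-solution. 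Then Lemma~\ref{lem:extended_group_to_gi}\ref{item:extended_group_to_gi-d} transfers both \psolutions to $\Liso^\ell(G^0_{\CCext},G^1_{\CCext})$, part~\ref{item:extended_group_to_gi-a} ensures the two graphs are non-isomorphic, and Lemma~\ref{lem:p-solution} with $(p,q)=(2,3)$ produces an integral solution of $\Liso^\ell$.

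For the construction, fix a $3$-regular Ramanujan graph $H$ on $n=\Theta(\ell)$ vertices with linear edge-expansion (see Appendix~\ref{sec:expander}), choose an orientation, distinguish a vertex $v^*\in V(H)$, and introduce a fresh variable $y$ with domain $\Gamma$. Let $\CC$ be the $\Gamma$-CSP on variables $\{x_e:e\in E(H)\}\cup\{y\}$ given by the equations $\sum_{e\in\Bd_+(v)}x_e-\sum_{e\in\Bd_-(v)}x_e=0$ for every $v\in V(H)\setminus\{v^*\}$ and $\sum_{e\in\Bd_+(v^*)}x_e-\sum_{e\in\Bd_-(v^*)}x_e-y=0$ at $v^*$; each of these is a homogeneous coset constraint over~$\Gamma$. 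Let $\Carb=(y,\Rarb)$ with $\Rarb=\{(1,0),(0,1)\}\subseteq\Gamma$, so that $e=2$ and the overall arity is $k\le 4$. For every $\bgamma\in\Rarb$, fixing $y=\bgamma$ turns $\CC$ into the $\Gamma$-Tseitin tautology on $H$ with charge $\bgamma$ at $v^*$ and $0$ elsewhere, whose total charge is $\bgamma\neq 0$; so $\CC_\bgamma$, and therefore $\CCext$, is unsatisfiable, and $G^0_{\CCext}\not\cong G^1_{\CCext}$.

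The crux is producing the two \psolutions. Consider $\bgamma_2=(1,0)$. The $\ZZ_2$-projection of $\CC_{\bgamma_2}$ is an unsatisfiable $\ZZ_2$-Tseitin tautology on the expander $H$; by the standard local-distribution argument~\cite{Buss.2001,AR01LowerBounds} (translating the classical Nullstellensatz degree lower bound to Sherali--Adams solutions), there is a $2$-solution $\Phi_2$ of $\Lcsp^{k\ell}$ for it. The $\ZZ_3$-projection has total charge $0$ and is satisfied globally by the all-zero assignment $\phi_3$, which yields the trivial $\{0,1\}$-valued Sherali--Adams solution $\Phi_3(\varnormal{\psi_3}):=[\psi_3\subseteq\phi_3]$. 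Because $\Gamma=\ZZ_2\times\ZZ_3$ has coprime factors, the $\Gamma$-CSP $\CC_{\bgamma_2}$ decomposes as the product of its coordinate projections, and the product $\Phi(\varnormal{\psi}):=\Phi_2(\varnormal{\psi_2})\cdot\Phi_3(\varnormal{\psi_3})$ (where $\psi_2,\psi_3$ are the coordinate projections of $\psi$) is a solution of $\Lcsp^{k\ell}(\CC_{\bgamma_2})$ whose values lie in $\{0\}\cup\{2^{-z}:z\ge 0\}$, i.e.\ a $2$-solution. Symmetrically, $\bgamma_3=(0,1)$ yields a $3$-solution.

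The main obstacle is conceptual: the natural Sherali--Adams solution of an unsatisfiable $\Gamma$-Tseitin tautology on an expander is only a $6$-solution (with values $6^{-z}$), which cannot be combined via Lemma~\ref{lem:p-solution}. The role of the extended constraint $\Carb$ is precisely to allow a selective fixing of $y$ into one of the two subgroups $\ZZ_2\times\{0\}$ or $\{0\}\times\ZZ_3$, which makes the complementary coordinate projection globally satisfiable (total charge $0$) while keeping the other projection unsatisfiable, decoupling the two factors of the Sherali--Adams relaxation and preserving $p$-solution purity. The size bound $|G^0_{\CCext}|=|G^1_{\CCext}|=O(2^e\ell)=O(\ell)$ follows from Lemma~\ref{lem:extended_group_to_gi}\ref{item:extended_group_to_gi-b}; making the graphs exactly $3$-regular (rather than of bounded degrees, with a $4$-ary constraint at $v^*$) requires only routine local gadget replacements, for example splitting the constraint at $v^*$ into a chain of ternary constraints via an auxiliary variable.
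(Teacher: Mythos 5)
Your proposal is correct and follows essentially the same route as the paper: a $2$-extended CSP over $\ZZ_2\times\ZZ_3$ whose unary constraint selects between the two coordinate subgroups, \psolutions for $p=2,3$ obtained by making the complementary coordinate projection trivially satisfiable, transfer to the graph side via Lemma~\ref{lem:extended_group_to_gi}, and combination of the $2$- and $3$-solutions via Lemma~\ref{lem:p-solution}. The one substantive step you delegate to a citation---the existence of a degree-$\Omega(n)$ $2$-solution of $\Lcsp$ for an unsatisfiable $\ZZ_2$-Tseitin tautology on an expander, with values that are exact powers of $2$---is precisely what the paper spends most of Section~\ref{sec:diophantine} establishing (the closure operator, matroid rank, and ``robust consistency'' machinery of Lemmas~\ref{lem:exp}--\ref{lem:free-ext} and Corollary~\ref{cor:p-group}), so your argument is complete only modulo that standard but nontrivial construction.
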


The rest of this section is devoted to a proof of this theorem.
Let $\CE$ be a family of 2-connected 3-regular expander graphs. For the
necessary definitions and the existence of such a family we refer the
reader to Appendix~\ref{sec:expander}.
 The only consequence of the expansion
property that we use is stated in the following lemma, which is proved in the appendix (as
  Corollary~\ref{cor:exp}).

\begin{lem}\label{lem:exp}
  There is  constant $c>0$ such that for every $G\in\CE$ and
  every set $X\subseteq E(G)$ there is a set
  ${\hat X}\supseteq X$ of size $|{\hat X}|\le c|X|$ such that $E(G)\setminus
  {\hat X}$ is either empty or the edge set of a 2-connected subgraph of $G$.
\end{lem}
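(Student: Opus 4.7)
My plan is to construct $\hat X$ by iteratively peeling off ``small defects'' from $G - X$ that obstruct 2-connectivity, and to bound the total cost using the edge expansion property of the family $\CE$. Let $\alpha > 0$ denote the edge expansion constant (guaranteed by the definition of $\CE$ in the appendix), so $|\partial_G(W)| \geq \alpha |W|$ for every $G \in \CE$ and every $W \subseteq V(G)$ with $|W| \leq |V(G)|/2$.

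The algorithm: initialize $\hat X_0 := X$ and $H_0 := G - X$. At each step $i \geq 1$, halt if $H_{i-1}$ is empty or 2-connected. Otherwise $H_{i-1}$ is either disconnected or has a cut vertex; pick a subset $S_i \subseteq V(H_{i-1})$ of size at most $|V(H_{i-1})|/2$ that is either (a) a smallest connected component of $H_{i-1}$, or (b) a set $A \cup \{v\}$ where $v$ is a cut vertex of $H_{i-1}$ and $A$ is a smallest union of components of $H_{i-1} - v$. Update $\hat X_i := \hat X_{i-1} \cup \{e \in E(G) : e \text{ is incident to some vertex of } S_i\}$ and $H_i := G - \hat X_i$. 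The process terminates because $|V(H_i)|$ strictly decreases, and by construction $X \subseteq \hat X$ while $E(G) \setminus \hat X = E(H_T)$ is either empty or forms a 2-connected subgraph.

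Let $V^* := \bigcup_i S_i$ be the total peeled set. Since $G$ is 3-regular, each peeled vertex is incident to at most three $G$-edges, giving $|\hat X| \leq |X| + 3|V^*|$, so the task reduces to proving $|V^*| = O(|X|)$. If $|V^*| > |V(G)|/2$, a separate argument shows $|X| = \Omega(|V(G)|)$, and we may instead take $\hat X := E(G)$ for which $|\hat X| \leq 3|V(G)|/2 = O(|X|)$ trivially. In the main case $|V^*| \leq |V(G)|/2$, edge expansion yields $\alpha|V^*| \leq |\partial_G(V^*)|$, and the crux is to show $|\partial_G(V^*)| = O(|X|)$. Every edge of $\partial_G(V^*)$ connects $V^*$ to $V(H_T)$, and such an edge is either originally in $X$, or was added at some cut-vertex peel step; the convention of including the cut vertex $v$ in $S_i$ ensures most newly added edges have both endpoints in $V^*$ (hence are internal, not boundary), and the remaining contributions can be tied to the bridge-block structure of $G - X$, whose complexity is itself controlled by $|X|$ via expansion.

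The main technical obstacle is this last charging step. A naive count gives only $|\partial_G(V^*)| \leq |X| + 3T$ where $T \leq |V^*|$ is the number of peel steps, leading to the circular inequality $(\alpha - 3)|V^*| \leq |X|$, useless for 3-regular graphs where $\alpha \leq 3$. Closing the gap requires a finer structural analysis, for example applying edge expansion to each ``leaf'' of the bridge-block decomposition of $G - X$ and summing: each such leaf of vertex-size $k$ has $G$-boundary $\geq \alpha k$, at most $O(1)$ edges of which survive in $H_{i-1}$, so the remaining $\Omega(\alpha k)$ edges must lie in $X$. Iterating this argument (or equivalently, using a weighted potential that charges each peel step against the $X$-edges separating it from the giant 2-connected remainder) decouples the number of cut-vertex peels from $|V^*|$ and yields the desired $|V^*| = O(|X|)$.
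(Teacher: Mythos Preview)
Your iterative peeling is a natural first attempt, and you correctly isolate the obstacle: combining $\alpha|V^*|\le|\partial_G(V^*)|$ with the crude bound $|\partial_G(V^*)|\le |X|+3T$ yields only $(\alpha-3)|V^*|\le|X|$, vacuous for $3$-regular graphs where necessarily $\alpha<3$. But the fix you sketch does not close this gap. When you ``iterate'' the leaf-expansion argument, the new leaves created after a peel have their $G$-boundary in $\hat X_i$, not in $X$; the edges you would need to charge against are edges between different peeled pieces $S_j$, and those can total $\Theta(|V^*|)$. Put differently, your per-leaf additive loss ``$-O(1)$'' sums to $-O(T)$, and nothing in your argument bounds $T$ by $O(|X|)$ independently of $|V^*|$. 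The ``weighted potential'' phrase names the goal but supplies no mechanism: to make it work you would have to show that every peel step consumes at least one fresh edge of $X$, and this is exactly what can fail once earlier peels have inserted non-$X$ edges into the current boundary. (The side case $|V^*|>n/2$ has the same defect: you assert $|X|=\Omega(n)$ there without justification.)

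The paper sidesteps iteration. It first passes to a vertex version (delete a set $W\subseteq V(G)$ instead of $X\subseteq E(G)$; the edge statement then follows by taking $W$ to be the set of endpoints of $X$). It analyses the block tree of $G\setminus W$ in one shot: orienting each tree edge toward the larger side produces a centroid bag $\beta(s)$ such that every branch $\alpha(t)$ hanging off it has size at most $n/2$, so expansion applies to each branch individually. Two structural facts then replace your missing charge. First, $3$-regularity together with $2$-connectivity of $G[\beta(s)]$ forces the cut vertex between $\beta(s)$ and $\alpha(t)$ to have at least two of its three edges inside $\beta(s)$, hence at most one edge into $\alpha(t)$; thus $\epsilon|\alpha(t)|\le 1+e(\alpha(t),W)$ with only a \emph{single} additive loss per branch. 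Second, because $G$ itself is $2$-connected, each branch must send at least one edge into $W$ (otherwise the lone cut vertex would disconnect $G$), so the number of branches is at most $|\partial(W)|\le 3|W|$. Summing gives $\sum_t|\alpha(t)|\le 6|W|/\epsilon$ with no term proportional to the deleted set. These two observations---at most one outgoing edge per branch, and a bound on the number of branches coming from $2$-connectivity of $G$ itself---are exactly the ingredients your scheme is missing.
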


For the rest of this section, we fix a graph $G\in\CE$ and let $V:=V(G)$, $E:=E(G)$,
$n:=|V|$, and $m:=|E|$. Note that $m=(3/2)n$, because $G$ is
3-regular. We let 
\[
\ell:=\floor{\frac{m-1}{3c}}.
\]
For every $X\subseteq E$, let $K_X=(W_X,Z_X)$ be the subgraph of $G$
with edge set $Z_X:=E\setminus {\hat X}$ (with ${\hat X}$ from
Lemma~\ref{lem:exp}) and vertex set $W_X$ consisting of all vertices
incident with an edge in $Z_X$. Then
$K_X$ is either empty or 2-connected. If
$|X|\le\ell$, then 
\[
|Z_X|\ge m-c\cdot\ell>\frac{2}{3}m
\]
and thus 
\begin{equation}
  \label{eq:5}
  |W_X|>\frac{2}{3}n,
\end{equation}
because a graph of maximum degree $3$ with more than $(2/3)m$ edges
has more than $(2/3)(2/3)m=(2/3)n$ vertices. 

We say that a set $X\subseteq E$ is \emph{closed} if every edge $e\in
E\setminus X$ is contained in a cycle $Z\subseteq E\setminus
X$. Here, for simplicity, we identify a cycle with its edge set. Note
that the intersection of two closed sets is closed. Hence for every
set $X\subseteq E$ the set
\[
\cl(X):=\bigcap_{\substack{Y\supseteq X\\Y\text{ closed}}}Y,
\]
which we call the \emph{closure} of $X$, is closed, and in fact the unique
inclusionwise minimal closed set that contains $X$.  Observe that
$x\not\in\cl(X)$ if and only if there is a cycle $Z\in E\setminus X$
such that $x\in Z$. The forward direction of this equivalence is
immediate from the definition of closed sets, and for the backward
direction, note that if $Z$ is a cycle then $Y=E\setminus Z$ is a closed set.

We note that the
operator $\cl:2^E\to 2^E$ is a closure operator in the sense of
matroid theory (see \cite[Section~1.4]{oxl11}), that is,
\begin{itemize}
\item $X\subseteq\cl(X)=\cl(\cl(X))$ for all $X\subseteq E$,
\item $X\subseteq Y$ implies $\cl(X)\subseteq\cl(Y)$ for all
  $X,Y\subseteq E$,
\item $y\in\cl(X\cup\{x\})\setminus\cl(X)$
implies $x\in\cl(X\cup\{y\})$ for all $X\subseteq E$ and $x,y\in
E$.
\end{itemize}
To verify the last property, known as the \emph{exchange
  property}, suppose that $y\in\cl(X\cup\{x\})\setminus\cl(X)$ and
$x\not\in\cl(X\cup\{y\})$. Then there is a cycle $Z_x\subseteq
E\setminus (X\cup\{y\})$ with $x\in Z_x$. As $y\not\in\cl(X)$, there
is a cycle $Z_y\subseteq E\setminus X$ with $y\in Z_y$. As
$y\in\cl(X\cup\{x\})$, we have $x\in Z_y$. But then $(Z_y\cup
Z_x)\setminus\{x\}$ contains a
cycle through $y$. This cycle has an empty intersection with
$X\cup\{x\}$, which contradicts $y\in\cl(X\cup\{x\})$.

Let us denote the matroid by $\CM$. The independent sets of $\CM$ are
the sets $I\subseteq E$
such that $x\not\in\cl(I\setminus \{x\})$ for all $x\in I$ (see \cite[Theorem~1.4.4]{oxl11}). A
\emph{basis} for $\CM$ is an inclusionwise maximal independent set,
and a \emph{basis} for a set $X\subseteq E$ is an inclusionwise maximal
independent set $I\subseteq X$. All bases of a set $X$ have the same
cardinality, the \emph{rank} $\rk(X)$. The \emph{rank} of the matroid
$\CM$ is $\rk(E)$.

Now let $Z$ be the edge set of a 2-connected subgraph of $G$. Then
$E\setminus Z$ is closed. Thus for every set $X$ we have
$\cl(X)\subseteq {\hat X}$ for the set ${\hat X}$ of Lemma~\ref{lem:exp}, and it
follows from the lemma that
\[
|\cl(X)|\le c\cdot|X|.
\]
This implies that the rank of the matroid $\CM$ is at least
$\ceil{m/c}$, because if $B$ is a basis of $\CM$ then $\cl(B)=E$.

We let $H$ be an arbitrary orientation of $G$ and $\vE:=E(H)$.  We let
$\Gamma$ be a finite Abelian group, $\sigma:V\to\Gamma$. We consider
the Tseitin tautology $\CC:=\CC^{H,\Gamma,\sigma}$ (see
Section~\ref{sec:tseitin}). Recall that the set of variables of $\CC$
is
\[
\Var(\CC)=\{x_e\mid e\in\vE\},
\]
and for each $v\in V$ the CSP $\CC$ has a constraint $C(v)$ expressed
by the following equation in the group $\Gamma$:
\begin{equation}
  \label{eq:2}
  \sum_{e\in \Bd_+(v)}x_e-\sum_{e\in \Bd_-(v)}x_e=\sigma(v).
\end{equation}
It will be convenient for
us to think of an undirected edge $vw\in E$, its orientation $(v,w)$ or
$(w,v)\in\vE$, and the variable $x_{(v,w)}$ or $x_{(w,v)}$ as the same
object, that is, identify the sets $E$ and $\vE$ and $\Var(\CC)$. Generically, we denote the
set by $E$, subsets by $X,Y,Z$, and elements by $x,y,z$, but
sometimes, we still denote the elements by
$vw$, $(v,w)$, or $x_{(v,w)}$ to indicate which role of an element we are thinking of at the moment.

For every subset $W\subseteq V$ we let $\sigma(W):=\sum_{w\in
  W}\sigma(w)$, and we let $C(W)$ be the constraint
\begin{equation}
  \label{eq:4}
  \sum_{e\in \Bd_+(W)}x_e-\sum_{e\in \Bd_-(W)}x_e=\sigma(W).
\end{equation}
The constraints $C(W)$ are not contained in $\CC$, but they are
implied by the constraints $C(v)$
of $\CC$, because equation~\eqref{eq:4} is just the sum of the
equations \eqref{eq:2} for $v\in W$. Thus every solution to $\CC$
satisfies all constraints $C(W)$. However, it is not the case that
every partial solution $\psi$ satisfies all constraints $C(W)$ with
$\Bd(W)\subseteq\dom(\psi)$, despite the fact that $\Bd(W)$ is the set of
all variables appearing in the constraint $C(W)$. 

For $k\ge 0$, we call $\psi\subseteq\Var(\CC)\times\Gamma$
\emph{$k$-consistent} if it is a partial mapping and for all
$W\subseteq V$ of size $|W|\le k$, if $\Bd(W)\subseteq\dom(\psi)$ then
$\psi$ satisfies the constraint $C(W)$. Note that $\psi$ is a partial
solution if and only if it is $1$-consistent. %

\begin{lem}\label{lem:dio1}
  Let $X\subseteq E$ and $\psi:X\to\Gamma$. Then
  $\psi$ is $k$-consistent if and only if the constraint $C(W)$ is
  satisfied for every $W\subseteq V$ such that $|W|\le k$ and $W$ is
  the vertex set of a connected component of the graph
  $G-X=(V,E\setminus X)$.
\end{lem}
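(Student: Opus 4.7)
The plan is to unfold the definition of $k$-consistency and note that the condition $\Bd(W)\subseteq \dom(\psi)=X$ is equivalent to saying that $W$ is a union of connected components of $G-X$. This gives the forward direction of the equivalence immediately: if $\psi$ is $k$-consistent and $W$ is a single connected component of $G-X$ with $|W|\le k$, then every edge with exactly one endpoint in $W$ must lie in $X$ (otherwise it would connect $W$ to another vertex inside $G-X$), so $\Bd(W)\subseteq X$ and $k$-consistency forces $C(W)$ to be satisfied.

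For the backward direction I take an arbitrary $W\subseteq V$ with $|W|\le k$ and $\Bd(W)\subseteq X$, and decompose it as a disjoint union $W=W_1\disjointunion \cdots\disjointunion W_r$ of connected components of $G-X$. Since $|W_i|\le |W|\le k$, each $W_i$ satisfies the hypothesis of the lemma, so $C(W_i)$ holds under $\psi$. The key step is then to observe that the equation $C(W)$ is precisely the sum (in $\Gamma$) of the equations $C(W_1),\ldots,C(W_r)$: the right-hand sides add up because $\sigma(W)=\sum_i\sigma(W_i)$, and on the left-hand side, the edges between two different components $W_i,W_j\subseteq W$ (which are necessarily in $X$, so $\psi$-values are defined there) appear once with a $+$ in the $C(W_i)$-equation and once with a $-$ in the $C(W_j)$-equation and thus cancel, while the edges in $\Bd(W)$ appear exactly once with the same sign as in $C(W)$, and edges internal to some $W_i$ or outside $W$ contribute nothing.

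There is no genuine obstacle here; the statement is essentially a bookkeeping consequence of the fact that $\Gamma$ is Abelian and the boundary map $W\mapsto C(W)$ is additive under disjoint union of vertex sets. The only mild subtlety worth mentioning explicitly is that the sum $\sum_i C(W_i)$ involves variables $x_e$ for edges $e$ either in $\Bd(W)$ or between different $W_i$'s, and all such $e$ lie in $X=\dom(\psi)$, so the sum is a legitimate equation that $\psi$ can be plugged into.
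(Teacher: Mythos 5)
Your proof is correct and follows essentially the same route as the paper: the forward direction is immediate from the definition, and the backward direction decomposes an arbitrary $W$ with $\partial(W)\subseteq X$ into connected components $W_1,\ldots,W_r$ of $G-X$ and sums the equations $C(W_i)$. If anything, you are slightly more careful than the paper's write-up, which asserts $\partial_\pm(W)=\bigcup_i\partial_\pm(W_i)$ (only an inclusion in general, since an edge of $X$ joining two distinct components $W_i,W_j\subseteq W$ lies in $\partial_+(W_i)\cap\partial_-(W_j)$ but not in $\partial(W)$), whereas you explicitly note that such edges cancel in the sum.
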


\begin{proof}
  Let $W_0\subseteq V$ such that $|W_0|\le k$ and $\partial(W_0)\subseteq X$. Note that
  for every connected component $W$ of $G-X$, if
  $W\cap W_0\neq\emptyset$ then $W\subseteq W_0$, because otherwise
  there are $w\in W\cap W_0$ and $w'\in W\setminus W_0$ such that
  $ww'\in E\setminus X$, which contradicts the assumption
  $\partial(W_0)\subseteq X$. Thus $W$ is the union of vertex sets
  $W_1,\ldots,W_p$ of connected components of $G\setminus X$. But then
  $\partial_+(W)=\bigcup_{i=1}^p\partial_+(W_i)$ and
  $\partial_-(W)=\bigcup_{i=1}^p\partial_-(W_i)$, and the equation
  \eqref{eq:4} is just the sum of the corresponding equations for the
  $W_i$, which are satisfied by the assumption of the lemma.
\end{proof}

\begin{lem}\label{lem:dio2}
  Let $X\subseteq E$ such that $\rk(X)\le\ell$, and let
  $\psi:X\to\Gamma$ be $n/3$-consistent. Then $\psi$ is $2n/3$-consistent. 
\end{lem}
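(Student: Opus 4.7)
My plan is to combine Lemma~\ref{lem:dio1}, which reduces $k$-consistency to verifying the constraint $C(W)$ only on individual connected components of $G-X$ of size at most $k$, with the observation that $G-X$ must possess a single \emph{giant} component $W^*$ of size strictly greater than $2n/3$. Once these are in place, every component other than $W^*$ has fewer than $n/3$ vertices (since their sizes sum to $n-|W^*|<n/3$), and the $n/3$-consistency hypothesis then covers all components of size $\le 2n/3$, all of which necessarily differ from $W^*$.

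To exhibit $W^*$, I would first pass from $X$ to a basis $B\subseteq X$ of the matroid $\CM$; since $\rk(X)\le\ell$ we have $|B|\le\ell$. Applying Lemma~\ref{lem:exp} to $B$ produces a superset $\hat B\supseteq B$ with $|\hat B|\le c\ell$ such that $E\setminus\hat B$ is the edge set of a $2$-connected subgraph $K_B=(W_B,E\setminus\hat B)$, nonempty because $c\ell<m/3<m$. The pivotal step—which I expect to be the conceptual heart of the proof—is to observe that $X\subseteq\hat B$: the $2$-connectedness of $K_B$ means every edge of $E\setminus\hat B$ lies on a cycle within $E\setminus\hat B$, so $\hat B$ is closed in the sense defined earlier in this section, whence $\cl(B)\subseteq\hat B$; since $B$ is a basis of $X$, one has $X\subseteq\cl(B)$, yielding $X\subseteq\hat B$ as desired.

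It follows that $K_B$ is a connected subgraph of $G-X$, so $W_B$ lies within a single connected component $W^*$ of $G-X$. A counting argument using $|E\setminus\hat B|\ge m-c\ell>2m/3$ together with the $3$-regularity of $G$ (each vertex is incident with at most $3$ edges, so $2|E\setminus\hat B|\le 3|W_B|$) yields $|W^*|\ge|W_B|>2n/3$. By Lemma~\ref{lem:dio1}, to establish $2n/3$-consistency it suffices to check $C(W)$ for each component $W$ of $G-X$ with $|W|\le 2n/3$; every such $W$ is distinct from $W^*$ and therefore satisfies $|W|\le n-|W^*|<n/3$, so Lemma~\ref{lem:dio1} applied in the forward direction to the $n/3$-consistency hypothesis delivers $C(W)$. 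Beyond establishing the containment $X\subseteq\hat B$ via the closure operator, the remainder is essentially bookkeeping with the matroid and the $3$-regularity bound.
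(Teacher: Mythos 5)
Your proof is correct and follows essentially the same route as the paper's: pass to a basis of $X$, apply Lemma~\ref{lem:exp} to obtain a $2$-connected subgraph avoiding $X$ whose vertex set exceeds $2n/3$ (the containment $X\subseteq\cl(B)\subseteq\hat B$ being exactly the paper's chain $Y\subseteq X\subseteq\cl(Y)\subseteq\hat Y$), and then use Lemma~\ref{lem:dio1} to conclude that every component of $G-X$ of size at most $2n/3$ is already handled by $n/3$-consistency. The paper phrases this last step as a proof by contradiction, but the content is identical.
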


\begin{proof}
  Let $Y$ be a basis for $X$. Choose ${\hat Y}$ according to
  Lemma~\ref{lem:exp} and note that $Y\subseteq
  X\subseteq\cl(Y)\subseteq {\hat Y}$.
  Let $W_Y$ be the vertex set of the 2-connected graph $K_Y$ with edge
  set $Z_Y=E\setminus {\hat Y}$. Then $|W_Y|>2n/3$ by \eqref{eq:5}.
  Suppose for contradiction that $X$ is not $2n/3$-consistent. Then
  there is a set $W\subseteq V$ such that $|W|\le 2n/3$ and
  $\Bd(W)\subseteq X$ and $\psi$ does not satisfy $C(W)$. By
  Lemma~\ref{lem:dio1}, we may assume that $W$ is the vertex set of a
  connected component of $G-X$. As $\psi$ is
  $n/3$-consistent, we have $|W|>n/3$. Hence $W\cap
  W_Y\neq\emptyset$. As $K_Y$ is a connected subgraph of $G-X$, it follows that $W_Y\subseteq W$. Hence $|W|>2n/3$, which is
  a contradiction.
\end{proof}

We call $\psi$ \emph{robustly consistent} if it is $n/3$-consistent.

\begin{lem}\label{lem:cl-ext}
  Let $X\subseteq E$ such that $\rk(X)\le\ell$, and let
  $\psi:X\to\Gamma$ be robustly consistent. Then
  there is a unique 
  $\hat\psi:\cl(X)\to\Gamma$ such that $\psi\subseteq\hat\psi$ and
  $\hat\psi$ is robustly consistent.
\end{lem}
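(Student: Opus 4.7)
The plan is to build $\hat\psi$ one edge at a time. The key observation is that any $e\in\cl(X)\setminus X$ is a bridge of $G-X$: indeed, $E\setminus X$ contains no cycle through $e$ by definition of the closure. Let $K$ be the connected component of $G-X$ containing $e$, and $V_1,V_2$ the two vertex sets into which $V(K)$ splits when $e$ is removed. Both $V_j$ satisfy $\Bd(V_j)\subseteq X\cup\{e\}$ with $e\in\Bd(V_j)$, since edges of $K$ leaving $V_j$ are either $e$ itself or lead out of $V(K)$ and hence lie in $X$; the first task is to argue that one of $V_1,V_2$ has size $<n/3$.

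To find this small side, pick a basis $Y\subseteq X$ of the matroid $\CM$; since $|Y|=\rk(X)\le\ell$, Lemma~\ref{lem:exp} produces a $2$-connected subgraph $K_Y$ on $W_Y$ with $|W_Y|>2n/3$ and edge set $E\setminus\hat Y\subseteq E\setminus X$ (using $X\subseteq\cl(X)=\cl(Y)\subseteq\hat Y$). Hence $G-X$ has a ``giant'' component containing $W_Y$ of size $>2n/3$ and all other components together have fewer than $n/3$ vertices. If $e$ lies in a non-giant component, both $V_1,V_2$ are small. If $e$ lies in the giant component, connectedness of $K_Y$ together with $e\notin E(K_Y)$ forces $W_Y$ entirely onto one side of the bridge, so the opposite side has size $<n/3$. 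Call the small side $W_e$.

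Now define $\hat\psi(e)$ to be the unique value in $\Gamma$ for which $\psi\cup\{(e,\hat\psi(e))\}$ satisfies $C(W_e)$; this is well-defined because $x_e$ appears with coefficient $\pm1$ in~\eqref{eq:4} and all other variables of $C(W_e)$ lie in $X$. The core technical claim is that the extended assignment $\psi':=\psi\cup\{(e,\hat\psi(e))\}$ is still $n/3$-consistent. To check this, take any $W\subseteq V$ with $|W|\le n/3$ and $\Bd(W)\subseteq X\cup\{e\}$; the case $e\notin\Bd(W)$ is immediate from $n/3$-consistency of $\psi$, so assume $e\in\Bd(W)$. A standard reduction (as in the proof of Lemma~\ref{lem:dio1}) writes $W$ as a union of components of $G-X-e$, which forces $W$ to contain exactly one of $V_1,V_2$; in the giant-component case this must be the small side $W_e$, because the other side is too large. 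I split into the subcases ``same side'' ($W\supseteq W_e$) and ``opposite side''. In the first, $W\setminus W_e$ has boundary in $X$ and size $\le n/3$, so $\psi$ satisfies $C(W\setminus W_e)$, and $C(W)=C(W\setminus W_e)+C(W_e)$ by additivity of signed boundaries over the disjoint decomposition $W=(W\setminus W_e)\cup W_e$. In the second, $K$ must be non-giant, $|W\cup W_e|\le 2n/3$, and the directed edge $e$ contributes opposite signs to $\partial(W)$ and $\partial(W_e)$ (it crosses the bridge oppositely from the two sides), so it cancels in the signed sum and $C(W\cup W_e)=C(W)+C(W_e)$ has boundary contained in $X$; now Lemma~\ref{lem:dio2}, applicable since $\rk(X)\le\ell$, promotes $n/3$- to $2n/3$-consistency of $\psi$ and delivers $C(W\cup W_e)$. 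Either way $\psi'$ satisfies $C(W)$.

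Because $e\in\cl(X)$, $\rk(X\cup\{e\})=\rk(X)\le\ell$ and $\cl(X\cup\{e\})=\cl(X)$, so the induction hypothesis reapplies to $(X\cup\{e\},\psi')$ and the process terminates when the domain reaches $\cl(X)$. Uniqueness is immediate: any robustly consistent extension of $\psi$ on $\cl(X)$ satisfies each constraint $C(W_e)$ and is therefore forced to agree with $\hat\psi$ at $e$. The main obstacle is the opposite-side subcase above: it is the only place where $n/3$-consistency alone is insufficient and one has to invoke Lemma~\ref{lem:dio2} — which is also exactly where the rank bound $\rk(X)\le\ell$ and the expansion of $G$ enter the argument.
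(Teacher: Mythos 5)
Your proof is correct and follows essentially the same route as the paper's: reduce to adding a single edge $e\in\cl(X)\setminus X$, observe that $e$ is a bridge splitting its component of $G-X$ into two sides, let the constraint $C(\cdot)$ on one side force the value $\hat\psi(e)$, and verify robust consistency of the extension by decomposing constraints additively and invoking Lemma~\ref{lem:dio2} exactly where the two sides must be combined. The only (harmless) deviation is that you rerun the giant-component argument from Lemma~\ref{lem:exp} to locate a side of size $<n/3$, whereas the paper simply takes the smaller side (of size $\le n/2$) and lets the $n/3$-to-$2n/3$ upgrade of Lemma~\ref{lem:dio2} supply the forcing for uniqueness.
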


\begin{proof}
  It suffices to prove that we can uniquely extend $\psi$ to a domain
  $X\cup\{x\}$ for an $x\in\cl(X)\setminus X$. 
  So let us pick such an $x$. 
    Let $(W,Z)$ be the connected component of $G-X$ that contains
    $x$. As $x$ is not contained in a cycle in $G-X$ (otherwise it would
    not be in $\cl(X)$), the edge
    $x$ is a bridge (that is, separating edge) of the graph
    $(W,Z)$. Let $(W_1,Z_1)$ and $(W_2,Z_2)$ be the two connected
    components of $(W,Z-\{x\})$. Without loss of generality we
    assume that $|W_1|\le |W_2|$. Then $|W_1|\le n/2$, and in order to
    be robustly satisfiable, the mapping
    $\psi'$ we shall define must satisfy the constraint $C(W_1)$.
    As $\Bd(W_1)\subseteq X\cup\{x\}$ and the values
    $\psi'(x')=\psi(x')$ are fixed for all $x'\in X$, there is a unique
    $\gamma\in\Gamma$ such that $\psi':=\psi\cup\{(x,\gamma)\}$
    satisfies the constraint $C(W_1)$. 

    If $|W_2|\le n/3$, then
    $|W|=|W_1|+|W_2|\le 2n/3$, and as $\psi$ is robustly consistent,
    it satisfies the constraint $C(W)$. This implies that $\psi'$
    satisfies the constraint $C(W_2)$.

   All other connected components $(W',Z')$ of $G-(X\cup\{x\})$ are also
   connected components of $G-X$. Thus $\psi'$ satisfies the
    constraint $C(W')$ if and only $\psi$ does, and this implies that
    $\psi'$ is robustly consistent.
\end{proof}

\begin{lem}\label{lem:free-ext}
  Let $X\subseteq E$ such that $\rk(X)\le\ell-1$, and let
  $\psi:X\to\Gamma$ be robustly consistent. Let $x\in
  E\setminus\cl(X)$. Then every  
  $\psi':X\cup\{x\}\to\Gamma$ such that $\psi\subseteq\psi'$ is robustly consistent.
\end{lem}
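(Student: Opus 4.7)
The plan is to reduce robust consistency of $\psi'$ to robust consistency of $\psi$ by showing that adjoining the value at $x$ does not create any new "small" constraints that need to be checked. The reformulation in Lemma~\ref{lem:dio1} is the right tool: it suffices to verify, for every vertex set $W$ of size at most $n/3$ that is the vertex set of a connected component of $G - (X\cup\{x\})$, that $\psi'$ satisfies $C(W)$.

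First I would compare the connected components of $G-(X\cup\{x\})$ with those of $G-X$. Since $x\notin\cl(X)$, by the characterisation of the closure discussed just before the matroid definition there is a cycle $Z\subseteq E\setminus X$ with $x\in Z$. In particular $x$ lies on a cycle of $G-X$, so $x$ is not a bridge of $G-X$; removing $x$ from $G-X$ therefore leaves the connected components unchanged. Consequently, every connected component of $G-(X\cup\{x\})$ with vertex set $W$ is also a connected component of $G-X$.

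The key consequence is that for such a $W$ the boundary $\partial(W)$ in $G$ is entirely contained in $X$: if some edge in $\partial(W)$ were not in $X$ it would connect $W$ to $V\setminus W$ in $G-X$, contradicting that $W$ is a component of $G-X$. In particular $x\notin\partial(W)$, so the constraint $C(W)$ involves only variables in $X$. Since $\psi$ is robustly consistent and $|W|\le n/3$, Lemma~\ref{lem:dio1} implies that $\psi$ satisfies $C(W)$. Because $\psi'$ extends $\psi$ and $C(W)$ uses only variables in $\dom(\psi)$, the assignment $\psi'$ also satisfies $C(W)$. Applying Lemma~\ref{lem:dio1} once more, $\psi'$ is $n/3$-consistent, i.e.\ robustly consistent.

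The only real subtlety is the opening step: identifying $x\notin\cl(X)$ with "$x$ lies on a cycle in $G-X$" and then concluding that deleting $x$ does not split any component of $G-X$. Once that topological fact is in place, the rest is a direct application of Lemma~\ref{lem:dio1} and the hypothesis that $\psi$ is robustly consistent; no counting of cycles, no new constraints at any vertex set $W$ that actually touches $x$ need to be checked.
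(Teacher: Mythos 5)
Your proof is correct and follows essentially the same route as the paper's: both observe that $x\notin\cl(X)$ places $x$ on a cycle of $G-X$, so deleting $x$ does not change the vertex sets of the connected components, and then transfer satisfaction of the component constraints $C(W)$ from $\psi$ to $\psi'$ via Lemma~\ref{lem:dio1}. The only cosmetic difference is that the paper first reduces to the case that $X$ is closed (via Lemma~\ref{lem:cl-ext}) and treats the component containing $x$ separately from the others, whereas you handle all components uniformly; your version is, if anything, slightly cleaner since that reduction is not needed.
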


\begin{proof}
  By Lemma~\ref{lem:cl-ext}, we may assume without loss of generality
  that $X$ is closed.  Let $\gamma\in\Gamma$ and
  $\psi':=\psi\cup\{(x,\gamma)\}$. Let $(W,Z)$ be a connected
  component of the graph $G-X$ that contains $x$. Then $(W,Z-\{x\})$
  is connected, because $x\not\in X=\cl(X)$. Note that $\psi'$ satisfies
  $C(W)$ because $\psi$ does and $\Bd(W)\subseteq X$.

  All other connected components $(W',Z')$ of $G-(X\cup\{x\})$ are also connected components of $G-X$. Thus $\psi'$
  satisfies the constraint $C(W')$ if and only $\psi$ does, and this
  implies that $\psi'$ is robustly consistent.
\end{proof}

\begin{cor}
  Let $Y$ be an independent set of the matroid $\CM$ of size
  $|Y|\le\ell$. Then every mapping $\psi:Y\to\Gamma$ is robustly consistent. 
\end{cor}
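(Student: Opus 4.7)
The plan is to prove the corollary by induction on $|Y|$, using Lemma~\ref{lem:free-ext} as the inductive step. The idea is that since $Y$ is independent, we can expose any element $y \in Y$ as being outside the closure of the remaining independent set $Y \setminus \{y\}$, which is exactly the hypothesis needed to invoke Lemma~\ref{lem:free-ext} for extending a robustly consistent assignment by a free choice.

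For the base case $|Y| = 0$, I need to argue that the empty mapping is robustly consistent. The only subsets $W \subseteq V$ with $\Bd(W) \subseteq \emptyset$ are unions of connected components of $G$. Since $G \in \CE$ is $2$-connected, these are just $W = \emptyset$ and $W = V$. The constraint $C(\emptyset)$ reads $0 = \sigma(\emptyset) = 0$ and is trivially satisfied, while $|V| = n > n/3$ so $W = V$ is irrelevant for $n/3$-consistency.

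For the inductive step, let $|Y| = k \ge 1$ and assume the claim for all independent sets of size at most $k-1$. Pick an arbitrary $y \in Y$ and set $Y' := Y \setminus \{y\}$. Since $Y$ is independent, so is $Y'$, and $|Y'| = k - 1 \le \ell - 1$; in particular $\rk(Y') = |Y'| \le \ell - 1$. By the induction hypothesis, the restriction $\psi\!\upharpoonright\! Y'$ is robustly consistent. Moreover, independence of $Y$ gives $y \notin \cl(Y')$, so $y \in E \setminus \cl(Y')$. Applying Lemma~\ref{lem:free-ext} with $X = Y'$ and $x = y$, the extension $\psi = (\psi\!\upharpoonright\! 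Y') \cup \{(y, \psi(y))\}$ is robustly consistent, as desired.

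The main (minor) obstacle is just noting that the hypotheses of Lemma~\ref{lem:free-ext} line up cleanly: we need the rank bound $\rk(Y') \le \ell-1$, which follows from $|Y| \le \ell$ together with $|Y'| = |Y| - 1$ and the fact that independent sets have rank equal to their cardinality; and we need $y \notin \cl(Y')$, which is precisely the definition of $Y$ being independent. No case analysis on $\Gamma$ or $\sigma$ is needed, since Lemma~\ref{lem:free-ext} already handles an arbitrary group element extension.
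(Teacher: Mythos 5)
Your proof is correct and is exactly the argument the paper leaves implicit: induct on $|Y|$, starting from the robust consistency of the empty mapping (which the paper itself verifies in the proof of the subsequent lemma) and extending one element at a time via Lemma~\ref{lem:free-ext}, using that independence of $Y$ gives $y\notin\cl(Y\setminus\{y\})$ and $\rk(Y\setminus\{y\})=|Y|-1\le\ell-1$. No issues.
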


We are now ready to turn to the linear program
$\LL:=\Lcsp^\ell(\CC)$. Recall that the variables of $\LL$ are $\var\psi$
for $\psi\subseteq \Var(\CC)\times\Gamma$ of size at most $\ell$. 
We define an assignments $\Psi:\Var(\LL)\to \mathbb Q$ by
\begin{equation}
\label{eq:7}
\Psi(\var\psi):=
\begin{cases}
  \displaystyle\frac{1}{|\Gamma|^r}&\text{if $\psi$ is robustly consistent and
    $\rk(\dom(\psi))=r$},\\
  0&\text{if $\psi$ is not robustly consistent}.
\end{cases}
\end{equation}

\begin{lem}
  $\Psi$ is a solution to $\LL$.
\end{lem}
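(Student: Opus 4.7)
The plan is to verify each of the three families of equations \eqref{eq:lcsp1}--\eqref{eq:lcsp3} defining $\Lcsp^\ell(\CC)$ for the assignment $\Psi$. Equation \eqref{eq:lcsp2} reads $\var\emptyset = 1$, and holds because the empty map is vacuously robustly consistent with $\rk(\emptyset) = 0$, so $\Psi(\var\emptyset) = 1/|\Gamma|^0 = 1$. Equation \eqref{eq:lcsp3} requires $\Psi(\var\psi) = 0$ whenever $\psi$ is not a partial solution; here a partial solution is exactly a $1$-consistent partial mapping, and since $1$-consistency is weaker than $n/3$-consistency, such a $\psi$ is not robustly consistent and the definition immediately gives $\Psi(\var\psi) = 0$.

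The bulk of the argument is equation \eqref{eq:lcsp1}: for every $\psi$ of size $|\psi| \le \ell - 1$ and every $x \in \Var(\CC)$, I must show $\sum_{\gamma \in \Gamma}\Psi(\var{\psi, x\mapsto \gamma}) = \Psi(\var\psi)$. First I would dispose of the case where $\psi$ is not robustly consistent: then $\Psi(\var\psi) = 0$, and every partial-mapping extension $\psi \cup \{(x,\gamma)\}$ inherits the same violated constraint $C(W)$ (since $\Bd(W) \subseteq \dom(\psi)$ and the values of $\psi$ on $\Bd(W)$ are unchanged in the extension), so every term on the left is also $0$. Next, if $x \in \dom(\psi)$, only $\gamma = \psi(x)$ yields a set that is a partial mapping; the corresponding extension equals $\psi$ and the sum collapses to $\Psi(\var\psi)$.

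The interesting case is $\psi$ robustly consistent with $x \notin \dom(\psi)$; set $X := \dom(\psi)$, so $\rk(X) \le |X| \le \ell - 1$. Here I would split on whether $x \in \cl(X)$. If $x \in \cl(X)$, Lemma~\ref{lem:cl-ext} applied to $\psi$ yields a unique robustly consistent extension $\hat\psi : \cl(X)\to\Gamma$; restricting to $X\cup\{x\}$ gives one $\gamma$, namely $\gamma = \hat\psi(x)$, for which $\psi \cup \{(x,\gamma)\}$ is robustly consistent. Uniqueness at the level of $X \cup \{x\}$ follows because any robustly consistent extension to $X \cup \{x\}$ would itself extend uniquely by Lemma~\ref{lem:cl-ext} to a robustly consistent map on $\cl(X \cup \{x\}) = \cl(X)$, contradicting the uniqueness of $\hat\psi$. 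Since $x \in \cl(X)$ we have $\rk(X \cup \{x\}) = \rk(X)$, so the single nonzero term contributes $1/|\Gamma|^{\rk(X)} = \Psi(\var\psi)$. If $x \notin \cl(X)$, Lemma~\ref{lem:free-ext} tells us that every $\psi \cup \{(x, \gamma)\}$ is robustly consistent; moreover $\rk(X \cup \{x\}) = \rk(X) + 1$ because adding an element outside the closure strictly grows a basis, so the sum equals $|\Gamma| \cdot 1/|\Gamma|^{\rk(X)+1} = 1/|\Gamma|^{\rk(X)} = \Psi(\var\psi)$.

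The main obstacle is really the \emph{design} of $\Psi$ rather than its verification. The matroid-theoretic point is that a closure-dependent extension of $\psi$ is forced (yielding exactly one robustly consistent value) while a closure-independent one is entirely free (yielding $|\Gamma|$ robustly consistent values), and the weights $|\Gamma|^{-\rk(\cdot)}$ are tuned so that both cases average out to the value at $\psi$. Lemmas~\ref{lem:cl-ext} and \ref{lem:free-ext} package precisely these two dichotomies, so once they are in hand the remainder of the verification is essentially arithmetic.
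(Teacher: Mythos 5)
Your proof is correct and follows essentially the same route as the paper's: dispose of \eqref{eq:lcsp2} and \eqref{eq:lcsp3} directly, and split \eqref{eq:lcsp1} according to whether $x\in\cl(\dom(\psi))$, invoking Lemma~\ref{lem:cl-ext} (unique robustly consistent extension, rank unchanged) or Lemma~\ref{lem:free-ext} (all $|\Gamma|$ extensions robustly consistent, rank increases by one). The only nitpick is that the robust consistency of the empty map is not quite vacuous --- it uses that $G$ is connected, so the only $W$ with $\Bd(W)=\emptyset$ and $|W|\le n/3$ is $W=\emptyset$ --- but this matches the paper's own one-line justification.
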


\begin{proof}
  As all robustly consistent $\psi$ are partial solutions, $\Psi$
  satisfies the equations \eqref{eq:lcsp3}. 

  The empty mapping is robustly consistent, because $G$ is connected
  and thus the only component of $(V,E\setminus\dom(\emptyset))=G$
  contains more than $(1/3)n$ vertices. As $\rk(\emptyset)=0$, we have
  $\Psi(\var\emptyset)=1$, and thus $\Psi$ satisfies \eqref{eq:lcsp2}. 
  
  To see that $\Psi$ satisfies the equations \eqref{eq:lcsp1}, let
  $\psi\subseteq\Var(\CC)\times\Gamma$ of size $|\psi|\le\ell-1$ and $x\in
  E$. We have to prove that
  \begin{equation}
    \label{eq:6}
    \sum_{\gamma\in\Gamma}\Psi(\var{\psi\cup\{(x,\gamma)\}})=\Psi(\var\psi).
  \end{equation}
Let $X:=\dom(\psi)$ and $r:=\rk(X)$. If $\psi$ is not robustly
  consistent then neither is $\psi\cup\{(x,\gamma)\}$ for any $\gamma$, and
  both sides of equation \eqref{eq:6} are zero. Suppose that $\psi$ is
  robustly consistent. Then
  \[
  \Psi(\var\psi)=\frac{1}{|\Gamma|^r}.
  \]
  Suppose first that $x\in\cl(X)$. Then by Lemma~\ref{lem:cl-ext} there is a unique
  $\gamma_x\in\Gamma$ such that $\psi':=\psi\cup\{(x,\gamma_x)\}$ is robustly
  consistent. As $x\in\cl(X\cup\{x\})$, we have $\rk(X\cup\{x\})=r$ and
  thus $\Psi(\var{\psi'})=\frac{1}{|\Gamma|^r}$. Hence
  \[
  \sum_{\gamma\in\Gamma}\Psi(\var{\psi\cup\{(x,\gamma)\}})=\Psi(\var{\psi'})=\frac{1}{|\Gamma|^r}=\Psi(\var\psi).
  \]
Suppose next that $x\not\in\cl(X)$. Then by Lemma~\ref{lem:free-ext},
for all $\gamma\in\Gamma$ the mapping
$\psi\cup\{(x,\gamma)\}$ is robustly consistent. Moreover,
$\rk(X\cup\{x\})=r+1$ and thus $\Psi(\var{\psi\cup\{(x,\gamma)\}})=\frac{1}{|\Gamma|^{r+1}}$. Hence
  \[
  \sum_{\gamma\in\Gamma}\Psi(\var{\psi\cup\{(x,\gamma)\}})=|\Gamma|\cdot \frac{1}{|\Gamma|^{r+1}}=\frac{1}{|\Gamma|^r}=\Psi(\var\psi).
  \]
  Thus $\Psi$ satisfies \eqref{eq:6} and hence all equations \eqref{eq:lcsp1}.
\end{proof}

Recall that \emph{$p$-solution} for
a system of linear equations is a rational solution that only takes
values $p^z$ for integers $z$. A $p$-group is a
group of order $p^k$ for a nonnegative integer $k$.

\begin{cor}\label{cor:p-group}
  Let $p$ be a prime and $\Delta\le\Gamma$ be a $p$-group. Suppose that
  $\sigma(v)\in\Delta$ for all $v\in V(H)$. Then
  $\Lcsp^\ell(\CC^{H,\Gamma,\sigma})$ has a $p$-solution.
\end{cor}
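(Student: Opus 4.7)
The plan is to reduce to the $\Delta$-valued CSP and apply the previous lemma with $\Gamma$ replaced by $\Delta$, which gives denominators that are powers of $|\Delta|=p^k$ and hence powers of $p$.

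Specifically, since $\sigma(v)\in\Delta$ for all $v$, we can also view $\CC^{H,\Delta,\sigma}$ as a well-defined Tseitin CSP over the base group $\Delta$ (with the same underlying directed graph $H$ and signing $\sigma$). First I would apply the preceding lemma to $\CC_\Delta:=\CC^{H,\Delta,\sigma}$ instead of $\CC^{H,\Gamma,\sigma}$; all the matroid/closure machinery depends only on $H$ and the expander structure of $G$, not on the group, so nothing changes. This yields a solution $\Psi_\Delta$ to $\Lcsp^\ell(\CC_\Delta)$ with
\[
\Psi_\Delta(\var\psi)\in\{0\}\cup\Set{|\Delta|^{-r}\mid r\ge 0}
=\{0\}\cup\Set{p^{-kr}\mid r\ge 0},
\]
where $|\Delta|=p^k$; in particular all nonzero values are powers of~$p$.

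Next I would lift $\Psi_\Delta$ to a solution $\Psi$ of $\Lcsp^\ell(\CC^{H,\Gamma,\sigma})$ by
\[
\Psi(\var\psi)\defeq\begin{cases}\Psi_\Delta(\var\psi)&\text{if }\im(\psi)\subseteq\Delta,\\ 0&\text{otherwise.}\end{cases}
\]
This is clearly still a $p$-solution, so it only remains to verify the three types of equations \eqref{eq:lcsp1}--\eqref{eq:lcsp3}. For \eqref{eq:lcsp2}, the empty mapping has image in $\Delta$ and $\Psi_\Delta(\var\emptyset)=1$. For \eqref{eq:lcsp3}, note that a $\psi$ with $\im(\psi)\subseteq\Delta$ is a partial solution of $\CC^{H,\Gamma,\sigma}$ if and only if it is a partial solution of $\CC_\Delta$, since each Tseitin constraint $\sum_{e\in\Bd_+(v)}x_e-\sum_{e\in\Bd_-(v)}x_e=\sigma(v)$ lives entirely inside the subgroup $\Delta$ once all participating values are in $\Delta$; if $\im(\psi)\not\subseteq\Delta$ we have $\Psi(\var\psi)=0$ by definition.

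For \eqref{eq:lcsp1}, fix $\psi$ and $x$. I would split the sum over $\gamma\in\Gamma$ into $\gamma\in\Delta$ and $\gamma\in\Gamma\setminus\Delta$. The latter contributes $0$ because $\psi\cup\{(x,\gamma)\}$ has image not contained in $\Delta$. If $\im(\psi)\not\subseteq\Delta$ then every term $\Psi(\var{\psi,x\mapsto\gamma})$ is zero and also $\Psi(\var\psi)=0$, so the equation holds. If $\im(\psi)\subseteq\Delta$, then
\[
\sum_{\gamma\in\Gamma}\Psi(\var{\psi,x\mapsto\gamma})
=\sum_{\gamma\in\Delta}\Psi_\Delta(\var{\psi,x\mapsto\gamma})
=\Psi_\Delta(\var\psi)=\Psi(\var\psi),
\]
using that $\Psi_\Delta$ satisfies the corresponding equation in $\Lcsp^\ell(\CC_\Delta)$. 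There is no real obstacle here: the content of the corollary is entirely captured by the observation that the $\sigma$-Tseitin system lives inside the $p$-subgroup $\Delta$, so the previous lemma automatically produces a $p$-solution, and the extension to $\Gamma$-valued variables is forced to be zero on assignments leaving $\Delta$.
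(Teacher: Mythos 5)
Your proposal is correct and follows essentially the same route as the paper: apply the preceding lemma to the Tseitin CSP over the subgroup $\Delta$ (yielding values $|\Delta|^{-r}=p^{-kr}$, hence powers of $p$) and extend by zero to all partial assignments whose image leaves $\Delta$. The paper states this extension works without verification; your explicit check of the equations \eqref{eq:lcsp1}--\eqref{eq:lcsp3} is a correct elaboration of the same argument.
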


\begin{proof}
  We note that every solution $\Psi$ to
  $\Lcsp^\ell(\CC^{H,\Delta,\sigma})$ can be extended to a solution
  $\Psi'$ to $\Lcsp^\ell(\CC)$ by letting
  $\Psi'(\var\psi):=\Psi(\var\psi)$ for all
  $\psi\subseteq\Var(\CC)\times\Delta$ and $\Psi'(\var\psi):=0$ for all
  $\psi\not\subseteq\Var(\CC)\times\Delta$. Thus we can apply the
  previous lemma to the group $\Delta$ and the CSP $\CC^{H,\Delta,\sigma}$.
\end{proof}

\begin{proof}[Proof of Theorem~\ref{thm:diophantine_lowerbound}] We
  let $\Gamma$ be the group $\ZZ_2\times\ZZ_3$. Let $\Delta_2$ be the
  subgroup $\ZZ_2\times\{0\}$ and $\Delta_3$ the subgroup
  $\{0\}\times\ZZ_3$. Moreover, let $\iota_2:=(1,0)$ and
  $\iota_3:=(0,1)$. We continue to work with the same graph $G$ and
  orientation $H$ of $G$ as before. We choose an arbitrary $v^*\in
  V$.
  We let $\sigma_2,\sigma_3:V\to\Gamma$ by
  $\sigma_2(v):=\sigma_3(v):=(0,0)$ for $v\in V\setminus\{v^*\}$ and
  $\sigma_p(v^*):=\iota_p$.

  For $p=2,3$, we let $\Psi_p$ be the $p$-solution to
  $\Lcsp^\ell(\CC^{H,\Gamma,\sigma_p})$ obtained from
  Corollary~\ref{cor:p-group} applied to the $p$-group
  $\Delta=\Delta_p$ and $\sigma=\sigma_p$.

  We now build a 2-extended $\Gamma$-CSP $\CC^*$ that is essentially
  the disjunction between $\CC^{H,\Gamma,\sigma_2}$ and
  $\CC^{H,\Gamma,\sigma_3}$. Note that $\CC^{H,\Gamma,\sigma_2}$ and
  $\CC^{H,\Gamma,\sigma_3}$ both have the constraints \eqref{eq:2}
  with $\sigma(v)=(0,0)$ for all $v\in V\setminus\{v^*\}$; they only
  differ in the constraints for $v^*$. To define $\CC^*$ we add a new
  variable $x^*$ and replace the constraints \eqref{eq:2} for $v^*$ by
  \[
  \sum_{e\in\partial_+(v)}x_{e}-\sum_{e\in\partial_-(v)}x_{e}=x^*,
  \]
  which still is a $\Gamma$-constraint. Now we add the unary
  constraint $\big(x^*,\{\iota_2,\iota_3\}\big)$, which is not a group
  constraint. $\CC^*$ is the resulting 2-extended $\Gamma$-CSP. For
  $p=2,3$, we let $\CC^*_p:=\CC^*_{\iota_p}=\CC^*\cup\{(x^*,\{\iota_p\})$.
  Furthermore, we let $\Psi_p^*:\Var(\Lcsp^\ell(\CC^*))\to\QQ$ be the assignment
  defined by 
  \[
  \Psi^*_p(\var\psi):=
  \begin{cases}
    0&\text{if $\psi$ is not a partial mapping},\\
    0&\text{if $p=2$ and $(x^*,\iota_3)\in\psi$ or $p=3$ and
      $(x^*,\iota_2)\in\psi$},\\
    \Psi_p\big([\var{\psi\setminus\{(x^*,\iota_p)\}}\big)&\text{otherwise}.
  \end{cases}
  \]
  It is easy to see that $\Psi^*_p$ is a $p$-solution to $\Lcsp^\ell(\CC^*_p)$.
  
  Note that all constraints of $\CC$ are ternary, because the graph
  $G$ is 3-regular. Let $\ell':=\floor{\ell/3}$ and
  \[
  (G,\tilde G):=\bigvee_{p\in\{2,3\}}\big(G(\CC^*_p),\tilde
  G(\CC^*_p)\big).
  \]
  By
  Lemma~\ref{lem:extended_group_to_gi}\ref{item:extended_group_to_gi-a},
  the graphs $G$ and $\tilde G$ are non-isomorphic, and by
  Lemma~\ref{lem:extended_group_to_gi}\ref{item:extended_group_to_gi-d},
  the system $\Liso^{\ell'}(G,\tilde G)$ has a $p$-solution. Thus by
  Lemma~\ref{lem:p-solution}, the system has an integral solution.
\end{proof}

\section{Polynomial Calculus}
\label{sec:PCdefinitions}

We now turn to an algebraic approach and encode instances of the
isomorphism problem by systems of polynomial equations, which we may
interpret over any field. Then we try to derive the non-solvability
of the system using algebraic reasoning. Again, we obtain an
algorithm that is sound, but not necessarily
complete. The algorithm is parameterized by the degree $\ell$ of the polynomials that
we see during the derivation, and its running time
$n^{O(\ell)}$.  We
shall prove a lower bound by exhibiting non-isomorphic graphs that
requiring degree $\ell=\Omega(n)$. The proper framework for phrasing
these results is propositional proof complexity.

\emph{Polynomial Calculus (PC)} \cite{Clegg.1996} is a proof system to
prove that a given system of (multivariate) polynomial equations $\SP$
over a field $\mathbb F$ has no $\set{0,1}$-solution.  We always normalise
polynomial equations to the form $p=0$ and just write $p$ to denote
the equation $p=0$.  Consequently, we view $\SP$ as a set of
polynomials.
Polynomials are derived line by line according to
the following derivation rules (for polynomials $p \in\SP$,
polynomials $f, g$, variables $x$ and field elements $a,b$):
$$
\frac{}{p}, \quad\frac{}{x^2-x}, \quad\frac{f}{xf}, \quad\frac{g\quad f}{ag+bf}.
$$
The \emph{axioms} of the systems are all $p\in\SP$ and
$x^2-x$ for all variables $x$.
A PC refutation of $\SP$ is a derivation of $1$ (the contradiction $1=0$).
The \emph{degree} of a PC derivation is the maximal degree of every polynomial in the derivation. 
If an instance $\SP$ is unsatisfiable and has a refutation of degree $d$, then it can be found in  time $n^{O(d)}$ by a bounded degree variant of the Gr\"obner basis algorithm \cite{Clegg.1996}.
To solve a combinatorial problem by this algebraic approach, one first
encodes the instance into a set of low degree polynomials $\SP$ and
then tries to find a PC refutation of degree $\degd$
over some field $\mathbb F$.
If such a refutation is found, we know that the instance is
unsatisfiable and the algorithm rejects.
Otherwise, the algorithm outputs ``possibly satisfiable''.
As for the Diophantine isomorphism test and other related approaches
such as linear and semi-definite programming hierarchies this
algorithm is sound but not necessarily complete.
It can be shown, however, that completeness is achieved for $d=n+1$
over any field (where $n$
is the total number of variables in $\SP$).

For polynomials $f_1,\ldots,f_\ell$ and $g$ over $\mathbb F$ we write
$\{f_1,\ldots,f_\ell\}\models_{\mathbb F} g$ if $g$ follows
semantically from $f_1,\ldots,f_\ell$, that is, for every
$\set{0,1}$-assignment $\assignI$ it holds that $\assignI(f_1)=0$, \ldots,
$\assignI(f_\ell)=0$ implies $\assignI(g)=0$.  By
$\{f_1,\ldots,f_\ell\}\vdash_{\mathbb F} g$ we denote that there is a
PC derivation of $g$ from $f_1,\ldots,f_\ell$ over $\mathbb F$ and use
$\{f_1,\ldots,f_\ell\}\vdash^d_{\mathbb F} g$ if there is a refutation
of degree at most $d$.  For prime fields $\mathbb F_p$ we abbreviate
$\models_{\mathbb F_p}$, $\vdash_{\mathbb F_p}$,
$\vdash^d_{\mathbb F_p}$ by $\models_{p}$, $\vdash_{p}$,
$\vdash^d_{p}$. The following theorems will be useful for us.
\begin{theo}[Derivational completeness (Theorem 5.2 in \cite{BussIPRS97})]\label{thm:derivational_completeness}
  Let $f_1,\ldots,f_\ell$ and $g$ be polynomials in $n$ variables and $p$ a prime. Then 
\begin{equation}
\{f_1,\ldots,f_\ell\}\models_p g     
\quad \Longleftrightarrow\quad \{f_1,\ldots,f_\ell\}\vdash_p g
\quad \Longleftrightarrow\quad \{f_1,\ldots,f_\ell\}\vdash^{n+1}_p g.
\end{equation}
\end{theo}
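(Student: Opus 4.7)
Two of the three equivalences are straightforward. The implication $\{f_1,\ldots,f_\ell\}\vdash^{n+1}_p g \Rightarrow \{f_1,\ldots,f_\ell\}\vdash_p g$ is immediate from the definitions, since a degree-bounded derivation is a derivation. Soundness, $\vdash_p g \Rightarrow \models_p g$, follows by a routine induction on derivation length: each axiom $f_i$ vanishes on every common $\{0,1\}$-zero $\assignI$ of $f_1,\ldots,f_\ell$ by hypothesis, each axiom $x^2-x$ vanishes on every $\{0,1\}$-assignment, and the two derivation rules $f\mapsto xf$ and $(f,g)\mapsto af+bg$ preserve the property of vanishing at $\assignI$.

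For the nontrivial direction $\models_p g \Rightarrow \vdash^{n+1}_p g$, the plan is first to convert the semantic statement into ideal membership, and then to realize that membership by a bounded-degree derivation. Consider the quotient ring $R := \mathbb{F}_p[x_1,\ldots,x_n]/(x_1^2-x_1,\ldots,x_n^2-x_n)$. Evaluation on $\{0,1\}^n$ gives an $\mathbb{F}_p$-algebra isomorphism $R \cong \mathbb{F}_p^{\{0,1\}^n}$ whose inverse sends the standard basis vector at $\alpha\in\{0,1\}^n$ to the indicator polynomial
\begin{equation*}
  e_\alpha \;:=\; \prod_{i:\alpha_i=1} x_i \;\cdot\; \prod_{i:\alpha_i=0}(1-x_i),
\end{equation*}
which is multilinear of degree at most $n$. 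Because $\mathbb{F}_p^{\{0,1\}^n}$ is a product of fields, every ideal in $R$ is exactly the set of functions supported off the common zeros of its generators. Hence the hypothesis $\models_p g$ forces the image $\bar g$ of $g$ in $R$ to lie in the ideal generated by $\bar f_1,\ldots,\bar f_\ell$, and lifting back yields $g \in (f_1,\ldots,f_\ell, x_1^2-x_1,\ldots,x_n^2-x_n)$ in $\mathbb{F}_p[x_1,\ldots,x_n]$.

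To promote this ideal membership to a PC derivation of degree at most $n+1$, my plan is to use the indicator basis directly. Let $V = \{\alpha\in\{0,1\}^n : f_i(\alpha)=0 \text{ for all } i\}$. For every $\alpha\notin V$, fix some $f_{i(\alpha)}$ with $f_{i(\alpha)}(\alpha)\neq 0$; from $f_{i(\alpha)}$ one can derive a scalar multiple of $e_\alpha$ by multiplying with an appropriate monomial product and successively ``reducing'' occurrences of $x_j^2$ via the axiom $x_j^2-x_j$. Since $g$ vanishes on $V$, we have $g \equiv \sum_{\alpha\notin V} g(\alpha)\, e_\alpha \pmod{(x_1^2-x_1,\ldots,x_n^2-x_n)}$, so a suitable $\mathbb{F}_p$-linear combination of these building blocks, together with axioms $x_j^2-x_j$ bridging the congruence, yields $g$.

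The main obstacle, and the reason for the bound $n+1$ rather than $n$, is degree control throughout the simulation. A naive expansion of $g = \sum_i h_i f_i + \sum_j q_j(x_j^2-x_j)$ can involve intermediate polynomials of degree far larger than $n$. One must therefore interleave each multiplication step $f\mapsto x_jf$ with reductions using $x_j^2-x_j$, keeping every polynomial produced in the derivation at degree at most the multilinear ceiling $n$ plus a single ``scratch'' degree for the currently pending product. This careful constructive multilinearization inside PC, and the verification that it fits within degree $n+1$, is the technical heart of the argument in \cite{BussIPRS97}.
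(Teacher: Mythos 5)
The paper does not actually prove this statement: it imports it verbatim as Theorem~5.2 of \cite{BussIPRS97}, where the proof proceeds by induction on the number of variables using the case-analysis rule (the companion Theorem~\ref{thm:cut_elimination}): restrict $x_n$ to $0$ and to $1$, obtain degree-$n$ derivations of the restricted $g$ by induction, and recombine at the cost of one extra degree. Your argument reaches the same conclusion by a genuinely different, interpolation-based route: you identify $\mathbb F_p[x_1,\ldots,x_n]/(x_1^2-x_1,\ldots,x_n^2-x_n)$ with $\mathbb F_p^{\{0,1\}^n}$, note that in this product of fields ideals are cut out by zero sets, and then assemble $g$ explicitly from the indicator polynomials $e_\alpha$ of the falsifying assignments. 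This is correct, and the one point that genuinely needs care --- that multiplying a multilinear polynomial by a monomial and re-multilinearizing can be carried out inside PC with every intermediate line of degree at most $n+1$ --- is exactly the point you isolate; your interleaving of single-variable multiplications with reductions by $(x_j^2-x_j)\cdot m$ (each such product having degree at most $2+(n-1)=n+1$) does work, and the linear-combination rule never raises degree. The inductive proof buys a shorter argument and a derivation of size polynomial in the truth table; your interpolation proof is more explicit about which polynomial appears at each stage and makes the role of the Boolean axioms transparent, at the cost of an explicitly exponential-length derivation (harmless, since only degree is at stake). Two small remarks: the ideal-membership paragraph is logically redundant, since the indicator construction that follows already produces the derivation directly; and the bound $n+1$ only makes sense under the implicit assumption $\deg f_i,\deg g\le n+1$ (otherwise even downloading an axiom or writing the final line violates the bound) --- in the paper's applications all input polynomials are linear or quadratic, so this is moot.
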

\begin{theo}[Cut-elimination (Theorem 5.1 (2) in \cite{BussIPRS97})]\label{thm:cut_elimination}
  Let $F:=\{f_i(\vec x,y)\}$  be a set of polynomials $f_i(\vec x,y)$ in variables $x_1,\ldots,x_\ell,y$ and $p$ a prime number. 
  Let  $F_0:=\{f_i(\vec x,0)\}$, $F_1:=\{f_i(\vec x,1)\}$ and $g$ be a polynomial. Then
  $$
  F_0 \vdash^{d}_p g \text{ and } F_1 \vdash^{d}_p g \Longrightarrow  F \vdash^{d+1}_p g.
  $$
\end{theo}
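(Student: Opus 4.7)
The plan is to mimic the two one-sided derivations inside a single derivation from $F$ by guarding each step with the indicator polynomials $(1-y)$ and $y$ respectively, and then to combine the results via the identity $g=(1-y)g+yg$.

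First, I would prove the following key lemma by induction on the length of the derivation: if a polynomial $h(\vec x)$ has a PC derivation from $F_0$ of degree at most $d$, then $(1-y)h(\vec x)$ has a PC derivation from $F$ (equivalently, from $F\cup\{y^2-y\}$) of degree at most $d+1$. The interesting step is the axiom case: given $f_i(\vec x,0)\in F_0$, I would write $f_i(\vec x,y)=f_i(\vec x,0)+y\,q_i(\vec x,y)$, which is the Taylor expansion of $f_i$ at $y=0$, so that
\begin{equation*}
(1-y)f_i(\vec x,0)=(1-y)\,f_i(\vec x,y)+(y^2-y)\,q_i(\vec x,y).
\end{equation*}
Starting from the axioms $f_i(\vec x,y)\in F$ and $y^2-y$, both terms on the right-hand side are derivable by a single application of the multiplication rule followed by one linear combination, yielding $(1-y)f_i(\vec x,0)$. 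The multiplication rule $h=xh'$ transfers directly since $(1-y)xh'=x\,(1-y)h'$, and the linear-combination rule commutes with multiplication by $(1-y)$; the Boolean axiom $x^2-x$ for a variable $x\in\vec x$ only contributes when $d\ge 2$, in which case multiplying by $(1-y)$ yields degree at most $3\le d+1$.

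The symmetric argument, using the Taylor expansion at $y=1$, would derive $y\cdot g(\vec x)$ from $F$ in degree at most $d+1$. One final application of the linear-combination rule then yields
\begin{equation*}
(1-y)g+y\,g=g,
\end{equation*}
producing a derivation of $g$ from $F$ of total degree at most $d+1$.

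The main obstacle is the degree bookkeeping in the axiom case: the Taylor remainder $q_i$ has degree one less than $f_i(\vec x,y)$ in $y$ (and the same in $\vec x$), so the argument depends on the implicit assumption that the full polynomial $f_i(\vec x,y)$ has degree at most $d$. This is automatic in the intended setting because higher-degree axioms cannot meaningfully appear in a degree-$d$ derivation at all. Once this is granted, $\deg((y^2-y)q_i)\le d+1$ and the induction goes through without further subtlety.
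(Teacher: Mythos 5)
The paper does not prove this statement at all: it is quoted verbatim as Theorem~5.1(2) of Buss et al.\ \cite{BussIPRS97}, so there is no in-paper proof to compare against. Your argument is the standard simulation proof of that result (and essentially the one in the cited source): guard the two restricted derivations by $(1-y)$ and $y$, handle the axiom case via $f_i(\vec x,y)=f_i(\vec x,0)+y\,q_i(\vec x,y)$ together with the Boolean axiom $y^2-y$, and recombine with $g=(1-y)g+yg$. The inductive cases are all handled correctly (note only that intermediate lines of the $F_0$-derivation may themselves contain $y$, so the lemma should be stated for polynomials $h(\vec x,y)$ rather than $h(\vec x)$; nothing changes in the argument).

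Your closing caveat is the one genuinely delicate point, and your diagnosis is right but your dismissal of it is slightly off target. The danger is not that a high-degree axiom $f_i(\vec x,y)$ appears in the degree-$d$ derivation from $F_0$ --- it cannot --- but that its \emph{restriction} $f_i(\vec x,0)$ can be of low degree even when $f_i(\vec x,y)$ is not (e.g.\ $f=x_1+(y^2-y)h$ with $h$ of huge degree restricts to $x_1$ at both $y=0$ and $y=1$, yet $f$ itself is unusable in any low-degree derivation from $F$). In that regime the implication as literally stated is actually false, not merely hard to prove, so the hypothesis $\deg f_i\le d$ is genuinely part of the theorem rather than a convenience of your simulation. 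Since the paper only ever applies the theorem to the constant-degree axioms of $\Pcsp$ with $d$ growing linearly, the assumption is satisfied in every use, and with it granted your degree bookkeeping ($\deg((y^2-y)q_i)\le\deg f_i+1\le d+1$) is correct and the proof is complete.
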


To compare the power of the polynomial calculus for different systems
of polynomials, we use \emph{low degree reductions} \cite{Buss.2001}.
Fix a field $\mathbb F$ and let $\SP$ and $\SQ$
be two sets of polynomials in the variables
$\CX$ and $\CY$, respectively. A
\emph{degree-$(d_1,d_2)$ reduction} from $\SP$ to $\SQ$ 
is a set of degree-$d_1$ polynomials $\{f_y\mid y\in\CY\}$ in the variables  $\CX$ such that
\begin{align}
&\SP\vdash^{d_2}_{\mathbb F} q(f_{y_1},\ldots,f_{y_\ell})  
&&\text{ for all }q(y_1,\ldots,y_\ell)\in\SQ\text{ and } \\
&\SP\vdash^{d_2}_{\mathbb F}f_y^2-f_y 
&&\text{ for all }y\in\CY.
\end{align}

\begin{lem}[Lemma~1 in \cite{Buss.2001}]\label{lem:lowdeg}
If there is a degree-$(\refdegd_1,\refdegd_2)$ reduction from $\SP$ to $\SQ$ and $\SQ$ has a PC refutation of degree $\refdegd$, then $\SP$ has a PC refutation of degree \mbox{$\max(\refdegd_2,\refdegd\cdot\refdegd_1)$}. 
\end{lem}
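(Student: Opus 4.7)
The plan is to start from a PC refutation $\pi$ of $\SQ$ of degree $\refdegd$ and ``lift'' it to a refutation of $\SP$ by substituting $f_y$ for each variable $y\in\CY$ throughout. For a polynomial $g(\vec y)$ occurring in $\pi$, write $g^\ast$ for the polynomial obtained by replacing every $y$ with $f_y$. Since $g$ has degree at most $\refdegd$ in the $\CY$-variables and each $f_y$ has degree at most $\refdegd_1$ in the $\CX$-variables, $g^\ast$ has degree at most $\refdegd\cdot\refdegd_1$ in the $\CX$-variables. Because $\pi$ eventually derives the constant $1$, which is fixed under substitution, it suffices to prove by induction along $\pi$ that for every line $g$ one has $\SP\vdash^{D}_{\mathbb F}g^\ast$ with $D:=\max(\refdegd_2,\refdegd\cdot\refdegd_1)$.

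Three of the four PC rules are easy to simulate. For an axiom $g=q\in\SQ$, the substituted polynomial $g^\ast=q(\vec f)$ is derivable from $\SP$ in degree at most $\refdegd_2\le D$ by the first clause of the reduction. For a Boolean axiom $g=y^2-y$ with $y\in\CY$, the polynomial $g^\ast=f_y^2-f_y$ is derivable in degree at most $\refdegd_2\le D$ by the second clause. A linear combination $g=ag_1+bg_2$ lifts to $g^\ast=ag_1^\ast+bg_2^\ast$ by a single application of the PC linear combination rule, introducing no additional degree; note that along the way we are free to use any Boolean axiom $x^2-x$ for $x\in\CX$, since those are axioms of the PC system for $\SP$ itself.

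The only nontrivial case is the variable-multiplication rule $g=y\cdot h$ with $\deg h\le\refdegd-1$: here $g^\ast=f_y\cdot h^\ast$ is the product of $h^\ast$ with a polynomial rather than with a single variable, so one application of the PC multiplication rule does not suffice. The fix is to expand $f_y=\sum_\alpha c_\alpha m_\alpha$ into its monomials $m_\alpha$ in the $\CX$-variables, each of degree at most $\refdegd_1$. Starting from $h^\ast$, which has degree at most $(\refdegd-1)\refdegd_1$ by the induction hypothesis, we apply the PC multiplication rule $\deg(m_\alpha)\le\refdegd_1$ times to derive $m_\alpha\cdot h^\ast$; every intermediate polynomial has degree at most $\refdegd_1+(\refdegd-1)\refdegd_1=\refdegd\cdot\refdegd_1\le D$. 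The linear combination $\sum_\alpha c_\alpha(m_\alpha\cdot h^\ast)=f_y\cdot h^\ast=g^\ast$ then concludes the step without any further increase in degree. This monomial bookkeeping is the main obstacle; once it is carried out, applying the induction to the final line $g=1$ of $\pi$ yields the desired PC refutation of $\SP$ of degree at most $D=\max(\refdegd_2,\refdegd\cdot\refdegd_1)$.
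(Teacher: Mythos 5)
Your proof is correct, and it is the standard simulation argument: substitute $f_y$ for $y$ throughout the degree-$\refdegd$ refutation of $\SQ$, handle the substituted axioms $q(\vec f)$ and $f_y^2-f_y$ via the two clauses of the reduction (degree $\refdegd_2$), and simulate the variable-multiplication rule monomial by monomial so that every intermediate line stays within degree $\refdegd\cdot\refdegd_1$. The paper itself gives no proof of this lemma -- it is quoted as Lemma~1 of \cite{Buss.2001} -- and your reconstruction matches the argument in that reference, including the one genuinely nontrivial point (multiplying by the polynomial $f_y$ rather than by a single variable).
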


\subsection{Polynomial Encodings for Isomorphism and Constraint Satisfaction}

Following \cite{BerGro15}, for graphs $G,H$, we define a system of
(multivariate) polynomials $\Piso(G,H)$ in variables $\givar{v}{w}, v\in V(G),w\in V(H)$. 
A $\{0,1\}$-solution to the system is
intended to describe an isomorphism $\iota$ from $G$ to $H$, where
$\givar{v}{w}\mapsto 1$ if $\iota(v)=w$ and $\givar{v}{w}\mapsto0$ otherwise.
The system $\Piso(G,H)$ consists of the following linear and quadratic
polynomials:
\begin{align}
  &-1 + \textstyle\sum_{v\in V(G)} \givar{v}{w}  & &\text{for all }w \in V(H) \label{eq:cont1}\\
  &-1 + \textstyle\sum_{w\in V(H)} \givar{v}{w}  & &\text{for all }v \in V(G) \label{eq:cont2} \\
&\givar{v}{w}\cdot\givar{v'}{w'}  & &\parbox[t]{6cm}{for all $v,v'\in V(G),w,w'\in
                        V(H)$ such that $\{(v,w),(v',w')\}$ is no partial isomorphism.}\label{eq:comp}
\end{align}
Similarly, for every CSP $\CC$, we define a system of
(multivariate) polynomials $\Pcsp(\CC)$ in variables
$\cspvar{x}{\gamma}$ for $x\in\Var(\CC),\gamma\in\Dom(\CC)$.
A $\{0,1\}$-solution to the system is
intended to describe an solution $\alpha$, where
$\cspvar{x}{\gamma}\mapsto 1$ if $\alpha(x)=\gamma$ and
$\cspvar{x}{\gamma}\mapsto 0$ otherwise. 
The system $\Pcsp(\CC)$ consists of the following linear and quadratic polynomials:
\begin{align}
  &-1+\textstyle\sum_{\gamma\in\domD}\cspvar{x_i}{\gamma} &&\text{for all }x_i,\\
 &\cspvar{x_i}{\gamma}\cdot \cspvar{x_i}{\gamma'} &&\text{for all }x_i\text{ and }\gamma \neq \gamma',\\
  &\textstyle\prod^k_{i=1}\cspvar{x_i}{\gamma_i}&&\text{for all
                                                     constraints
                                                     $((x_1,\ldots,x_k),R)\in\CC$}\\
\notag
&&&\text{and all }(\gamma_1,\ldots,\gamma_\ell)\notin \relR.
\end{align}
Again,  $\Pcsp(\CC)$ has a
$\{0,1\}$-solution over some field $\mathbb F$ if and only if $\CC$ is
satisfiable.
Note the similarities between the  polynomial systems $\Piso(G,H)$ and
  $\Pcsp(\CC)$ and
  the linear systems $\Liso^\ell(G,H)$ and
  $\Lcsp^\ell(\CC)$.
  One formal correspondence is the following.
  Suppose that \LisoLcsp  viewed as a system of linear
  congruencies modulo some prime $p$ has no solution, then
  \PisoPcsp has a degree $\ell$ refutation over
  $\mathbb F_p$.
  Thus, the algebraic approach is stronger than solving the
  linear equations modulo some prime $p$, in that it is able to reject
  more unsatisfiable instances in time $n^{\bigoh{\ell}}$.
  On the other hand, solving the linear system over the integers is
  also more powerful than solving the system modulo some prime $p$ (because there
  might be a solution over $\ZZ_p$ even though the system has no
  solution over $\ZZ$).
  In fact, the Diophantine isomorphism test that solves \LisoLcsp
  over $\ZZ$ is incomparable in its strength with the algebraic
  approach of finding a polynomial calculus refutation of
  degree $\ell$.

\section{Lower Bounds for Polynomial Calculus}
\label{sec:PClowerbounds}

In this section we prove the following lower bound, which implies that
there are non-isomorphic graphs 
that cannot be distinguished in polynomial time by algebraic reasoning
over any prime field.

\begin{theo} \label{thm:PC_lowerbound}
  For every $\ell\ge 1$ there are non-isomorphic graphs $G,\tilde
  G$ of size $|G|=|\tilde G|=O(\ell)$ such that every polynomial
  calculus refutation of $\Piso(G,\tilde G)$ over some prime field
  $\mathbb F_p$ has degree $\Omega(\ell)$.
\end{theo}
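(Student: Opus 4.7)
The plan is to mirror the proof of Theorem~\ref{thm:diophantine_lowerbound}, replacing the $p$-solution argument by a polynomial-calculus argument that handles all prime fields uniformly. We reuse the same $2$-extended $\Gamma$-CSP $\CC^*$ over $\Gamma = \ZZ_2 \times \ZZ_3$ and set $(G,\tilde G) := \bigvee_{q \in \{2,3\}}(G(\CC^*_q), \tilde G(\CC^*_q))$. Since $\CC^*$ is unsatisfiable (both branches $\CC^*_{\iota_2}$ and $\CC^*_{\iota_3}$ are Tseitin tautologies with nonzero total charge), Lemma~\ref{lem:extended_group_to_gi}\ref{item:extended_group_to_gi-a} gives $G \not\cong \tilde G$.

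Fix a prime $p$ and suppose, for contradiction, that $\Piso(G,\tilde G)$ has a PC refutation over $\mathbb F_p$ of degree $d$. Pick $q \in \{2,3\}$ with $q \ne p$. By Alekhnovich--Razborov~\cite{AR01LowerBounds}, every PC refutation of the Boolean $\ZZ_q$-Tseitin tautology $\CB^{H,\ZZ_q,\sigma_q}$ on the $3$-regular expander $H$ over any field of characteristic different from $q$ requires degree $\Omega(n)$. It therefore suffices to build a low-degree reduction from $\Piso(G,\tilde G)$ to $\Pcsp(\CB^{H,\ZZ_q,\sigma_q})$; combined with Lemma~\ref{lem:lowdeg}, this forces $d = \Omega(n)$.

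I would construct this reduction in three stages: (i) $\Piso(G, \tilde G) \to \Pcsp(\CC^*)$; (ii) $\Pcsp(\CC^*) \to \Pcsp(\CC^*_q)$; (iii) $\Pcsp(\CC^*_q) \to \Pcsp(\CB^{H,\ZZ_q,\sigma_q})$. Stage~(ii) is essentially free: applying the cut-elimination Theorem~\ref{thm:cut_elimination} to the finitely many values of $x^*$ (or equivalently substituting $\cspvar{x^*}{\iota_q} := 1$ and $\cspvar{x^*}{\iota_{q'}} := 0$ with $q' \in \{2,3\}\setminus\{q\}$) yields a refutation of the branch $\Pcsp(\CC^*_q)$ at a cost of only $O(1)$ in degree. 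Stage~(iii) exploits that $\sigma_q$ takes values in the order-$q$ subgroup $\Delta_q \le \Gamma$: the group-projection argument that underlies Corollary~\ref{cor:p-group} shows that we may substitute $\cspvar{x_e}{\gamma} := 0$ for every $\gamma \notin \Delta_q$, reducing $\CC^*_q$ to a $\ZZ_q$-Tseitin tautology; a further Boolean encoding of $\ZZ_q$-elements yields the Boolean variant $\CB^{H,\ZZ_q,\sigma_q}$.

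The main obstacle is Stage~(i), the PC analogue of Lemma~\ref{lem:integral_solution_csp_to_gi}\ref{item:integral_solution_csp_to_gi} and Lemma~\ref{lem:extended_group_to_gi}\ref{item:extended_group_to_gi-d}. For each CSP variable $\cspvar{x}{\gamma}$ I would fix an arbitrary bijection between the sequence-graph components of $G$ and $\tilde G$ and define $f_{\cspvar{x}{\gamma}}$ as a sum of graph-isomorphism variables $\givar{\gamma^{(x)}}{w}$ over vertices $w$ in the corresponding ``$x$-block'' of the chosen component. The polynomials~\eqref{eq:comp}, together with the colour/component structure of the CFI-graphs, force cross-component matching variables to zero at constant degree, which reduces Stage~(i) to the single-CFI-pair case. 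In that case the derivation of the $\Pcsp(\CC^*)$-axioms \eqref{eq:lcsp1}--\eqref{eq:lcsp3} from the $\Piso$-axioms \eqref{eq:cont1}--\eqref{eq:comp} follows by a careful translation mirroring the correspondence in the proof of Lemma~\ref{lem:integral_solution_csp_to_gi}\ref{item:integral_solution_csp_to_gi}; the degree overhead of every step is linear in the CFI-arity $\arityk$, which is $O(1)$ since $H$ is $3$-regular. Composing the three stages and applying Lemma~\ref{lem:lowdeg} contradicts $d = o(n)$, completing the proof.
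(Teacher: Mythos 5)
Your overall architecture is the paper's: the same $2$-extended $(\ZZ_2\times\ZZ_3)$-CSP, the same or-construction over its two branches, and the key observation that for each prime $p$ one can pick $q\in\{2,3\}$ with $q\neq p$ and fall back on the Alekhnovich--Razborov bound for $\CB^{H,\ZZ_q,\sigma}$ in characteristic $\neq q$. The problem is that your Stage~(i) --- which you yourself identify as the crux --- is set up in the wrong direction. To turn a hypothetical degree-$d$ refutation of $\Piso(G,\tilde G)$ into a low-degree refutation of the hard CSP, you must substitute each \emph{graph} variable $\givar{v}{w}$ by a polynomial in the \emph{CSP} variables and derive the substituted $\Piso$-axioms from the CSP axioms; in the terminology of Lemma~\ref{lem:lowdeg} this is a reduction \emph{from} the CSP system \emph{to} $\Piso$, and it is exactly what Lemma~\ref{lem:reduction_csp_gi}\ref{item:low_degree_reduction_csp_to_gi} provides (e.g.\ $f_{\givar{\gamma^{(x)}}{(\gamma-\alpha)^{(x)}}}:=\cspvar{x}{\alpha}$). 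What you describe instead --- substituting each CSP variable $\cspvar{x}{\gamma}$ by a sum of graph variables and deriving the substituted CSP axioms from the $\Piso$-axioms --- is the opposite reduction (the direction of Lemma~\ref{lem:reduction_csp_gi}\ref{item:low_degree_reduction_gi_to_csp}); via Lemma~\ref{lem:lowdeg} it only yields ``if $\Pcsp(\CC^*)$ has a low-degree refutation then so does $\Piso(G,\tilde G)$,'' which transfers upper bounds to the graphs and produces no contradiction with the assumed refutation of $\Piso(G,\tilde G)$. The paper's actual route is: project $\Piso(G,\tilde G)$ onto a single matched pair of sequence graphs by a degree-$(1,2)$ reduction, apply Lemma~\ref{lem:reduction_csp_gi}\ref{item:low_degree_reduction_csp_to_gi} to each branch $\CC_\bgamma$, and recombine the branch refutations into one of $\Pext$ by cut-elimination (Theorem~\ref{thm:cut_elimination}). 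Note that cut-elimination is needed there --- to pass from refutations of the two restrictions to a refutation of the unrestricted system --- and not in your Stage~(ii), where nothing needs to be done because every axiom of $\Pcsp(\CC^*)$ is already an axiom of $\Pcsp(\CC^*_q)$.

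A second, independent gap is the choice of expander. You keep the $3$-regular graph from Section~\ref{sec:diophantine} and assert that Alekhnovich--Razborov gives an $\Omega(n)$ degree bound for $\CB^{H,\ZZ_q,\sigma}$ on it. Theorem~\ref{thm:tseitin_mod_p} only guarantees this for $\arityk$-regular graphs with $\arityk\ge \arityk_0(q)$, and there is no reason to expect $\arityk_0(3)\le 3$: the immunity of a mod-$q$ constraint requires sufficiently many variables per constraint. The paper therefore builds the polynomial-calculus instances on a $\arityk$-regular expander with $\arityk\ge\max(\arityk_0(2),\arityk_0(3))$ and, unlike Theorem~\ref{thm:diophantine_lowerbound}, does not claim $3$-regularity of the resulting graphs. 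With these two points repaired, your plan becomes essentially the paper's proof; indeed, since you only ever use the single branch $q$, you could even bypass cut-elimination by chaining $\Piso(G,\tilde G)\vdash^d_p 1\Rightarrow\Piso\bigl(G(\CC^*_q),\tilde G(\CC^*_q)\bigr)\vdash^d_p 1\Rightarrow\Pcsp(\CC^*_q)\vdash^{O(d)}_p 1\Rightarrow\Pcsp\bigl(\CB^{H,\ZZ_q,\sigma}\bigr)\vdash^{O(d)}_p 1$.
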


One main ingredient in our proof is the 
framework of Alekhnovich and Razborov \cite{AR01LowerBounds} for
proving degree lower bounds.  They consider Boolean CSPs defined over
an expander graph, where the variables correspond to the edges of the
graph and where every constraint is defined over the edges that are
incident to the same vertex $v$.  To show that such CSPs are hard to
refute in polynomial calculus, they introduced the concept of
\emph{immunity}.  A constraint $C$ has high immunity over a field
$\mathbb F$, if it has no non-trivial low degree consequences, that
is, if from $\mathsf P_C\models_{\mathbb F} \polyg$ it follows
that either $\polyg\equiv 1$ or the degree of $\polyg$ is large
(linear in the number of variables in $C$).  The main result of
\cite{AR01LowerBounds} is that if $\CC$ is a Boolean CSP defined over
an expander graph and every constraint $C\in\CC$ is immune over
$\mathbb F$, then $\Pcsp(\CC)$ requires polynomial calculus
refutations of linear degree over $\mathbb F$.
As one example consider the Boolean $\ZZ_p$-Tseitin tautologies (see
Section~\ref{sec:tseitin}) where
for every vertex $v$ there is a constraint stating that the sum of
ingoing minus outgoing edge variables is congruent to $0$ modulo $p$.
Such parity constraints were shown to have high immunity over fields
of characteristic $\neq p$ \cite{AR01LowerBounds}, but they have low
immunity over $\mathbb F_p$ because the constraint
\eqref{eq:Tseitin_constraint} viewed as a linear equation over
$\mathbb F_p$ follows semantically over $\mathbb F_p$.  By the
immunity argument this implies the following lower bound.

\begin{theo}[Corollary 4.6 in \cite{AR01LowerBounds}]\label{thm:tseitin_mod_p}
  For every prime $p$ there is a constant $\arityk_0(p)$ such that the following holds.  
  For every $\arityk$ there is a directed $\arityk$-regular graph
  $\dirH$ on $n$ vertices such that for all $p$ with
  $\arityk_0(p)\leq\arityk$ it holds that every polynomial calculus
  refutation of
  $\Pcsp\bigl(\CB^{H,\ZZ_p,\sigma}\bigr)$
  over a field $\mathbb F$ of characteristic $\neq p$ requires degree $\Omega(\arityk n)$.
\end{theo}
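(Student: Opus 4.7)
The plan is to use the same graph pair $(G,\tilde G)$ constructed in the proof of Theorem~\ref{thm:diophantine_lowerbound}, with the base expander graph allowed to be $k$-regular for a sufficiently large constant $k$ so that Theorem~\ref{thm:tseitin_mod_p} applies for both $p=2$ and $p=3$. Explicitly: take $\Gamma=\ZZ_2\times\ZZ_3$ with $\iota_2=(1,0)$ and $\iota_3=(0,1)$, let $H$ be a directed $k$-regular expander on $n=\Theta(\ell)$ vertices, let $\sigma_2,\sigma_3\colon V(H)\to\Gamma$ be supported on a single vertex $v^*$ with values $\iota_2,\iota_3$, and form the $2$-extended $\Gamma$-CSP $\CC^*$ that combines $\CC^{H,\Gamma,\sigma_2}$ and $\CC^{H,\Gamma,\sigma_3}$ via a new variable $x^*$ and the non-group constraint $(x^*,\{\iota_2,\iota_3\})$. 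Set $(G,\tilde G):=\bigvee_{p\in\{2,3\}}(G(\CC^*_p),\tilde G(\CC^*_p))$; by Lemma~\ref{lem:extended_group_to_gi} the graphs are non-isomorphic and of size $O(\ell)$.

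Fix any prime $q$ and choose $p\in\{2,3\}\setminus\{q\}$, which is possible since $q$ can equal at most one of $2,3$. The argument goes via the chain
\begin{equation*}
\Piso(G,\tilde G)\ \leadsto\ \Pcsp(\CC^*)\ \leadsto\ \Pcsp(\CC^*_p)\ \leadsto\ \Pcsp\bigl(\CB^{H,\ZZ_p,\sigma_p}\bigr)
\end{equation*}
of PC reductions over $\mathbb F_q$, each converting a refutation of degree $d$ on the left into a refutation of degree $O(d)$ on the right. The rightmost system requires degree $\Omega(\ell)$ over $\mathbb F_q$ by Theorem~\ref{thm:tseitin_mod_p} (applicable since $q\neq p$), so the same bound propagates back to $\Piso(G,\tilde G)$.

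The second and third arrows are direct. For the second, substitute $\cspvar{x^*}{\iota_p}=1$ into any refutation of $\Pcsp(\CC^*)$; the completeness axiom $\sum_\gamma\cspvar{x^*}{\gamma}=1$ and the mutual-exclusion axioms $\cspvar{x^*}{\iota_p}\cdot\cspvar{x^*}{\gamma}=0$ force $\cspvar{x^*}{\gamma}=0$ for all $\gamma\neq\iota_p$, yielding a refutation of $\Pcsp(\CC^*_p)$ of no larger degree. For the third, I construct a degree-$(1,O(1))$ reduction in the sense of Lemma~\ref{lem:lowdeg}: interpret $\cspvar{x_e}{\gamma}$ as $\cspvar{x_e}{b}$ when $\gamma=b\cdot\iota_p$ for some $b\in\{0,1\}$ and as the zero polynomial otherwise, and interpret $\cspvar{x^*}{\gamma}$ as $1$ or $0$ according to whether $\gamma=\iota_p$. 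Because $\sigma_p$ takes values in the $\ZZ_p$-subgroup $\Delta_p\cong\ZZ_p$ of $\Gamma$, the $\Gamma$-Tseitin constraints of $\CC^*_p$ project under this interpretation to the Boolean $\ZZ_p$-Tseitin constraints, and all the completeness, mutual-exclusion, and constraint-product axioms of $\Pcsp(\CC^*_p)$ follow in constant degree from those of $\Pcsp(\CB^{H,\ZZ_p,\sigma_p})$.

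The main obstacle is the first arrow, a polynomial calculus analogue of Lemma~\ref{lem:extended_group_to_gi}. The non-extended version---a reduction from $\Pcsp(\CC)$ to $\Piso(G(\CC),\tilde G(\CC))$ for plain group CSPs $\CC$---was established in \cite{BerGro15}. The new difficulty is the or-construction: $G$ and $\tilde G$ are disjoint unions of sequence graphs, and the encoding must simultaneously capture the ordinary CSP values $\cspvar{x_e}{\gamma}$ and the branch-selecting variable $\cspvar{x^*}{\gamma}$. My plan is to build a reduction in the sense of Lemma~\ref{lem:lowdeg} that encodes each $\givar{v}{w}$ as follows: as zero whenever $v,w$ lie in non-corresponding components of matched sequence graphs; by the \cite{BerGro15} encoding within corresponding copies of $G(\CC^*_\bgamma)$ or $\tilde G(\CC^*_\bgamma)$; and as an appropriate linear expression in the $\cspvar{x^*}{\iota_p}$ at the auxiliary or-gadget markers $v_1,v_2$, whose matching determines which sequence-graph pairing is realized. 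The technical heart of this step is then to verify in constant degree that all axioms of $\Piso(G,\tilde G)$---in particular the row/column completeness axioms across sequence graphs and the partial-isomorphism product axioms involving the or-gadget markers---follow from the axioms of $\Pcsp(\CC^*)$ (including the disjunction constraint $(x^*,\{\iota_2,\iota_3\})$).
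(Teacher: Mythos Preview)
The statement you were asked to prove is Theorem~\ref{thm:tseitin_mod_p}, which is quoted from Alekhnovich and Razborov~\cite{AR01LowerBounds}; the paper does not supply its own proof of it. Your proposal is not a proof of Theorem~\ref{thm:tseitin_mod_p} at all---indeed you invoke it as a black box (``by Theorem~\ref{thm:tseitin_mod_p}''). What you have actually sketched is a proof of Theorem~\ref{thm:PC_lowerbound}, the polynomial calculus lower bound for graph isomorphism.

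Viewed as a proof of Theorem~\ref{thm:PC_lowerbound}, your outline is essentially correct and close to the paper's argument, with one structural difference worth noting. The paper factors your ``first arrow'' through the individual branches: it first gives a degree-$(1,2)$ reduction from each $\Piso(G^0_{\bgamma},G^1_{\bgamma})$ to $\Piso(G,\tilde G)$ (handling the or-gadget by a fixed bijection between sequence graphs), then uses the non-extended reduction of Lemma~\ref{lem:reduction_csp_gi} to pass to $\Pcsp(\CC_{\bgamma})$, and finally applies cut-elimination (Theorem~\ref{thm:cut_elimination}) on the variables $\cspvar{x^*}{\gamma}$ to recover a refutation of $\Pcsp(\CC^*)$; this is packaged as Lemma~\ref{lem:extended_group_to_gi_PC}. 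Your plan instead attempts a single direct reduction from $\Pcsp(\CC^*)$ to $\Piso(G,\tilde G)$, encoding the branch-selecting marker vertices by linear expressions in the $\cspvar{x^*}{\iota_p}$. That is plausible but delicate: the row/column completeness axioms of $\Piso$ range over all vertices of $G$ and hence mix contributions from every sequence graph simultaneously, and verifying these substituted axioms in constant degree from $\Pcsp(\CC^*)$ is precisely the part you leave unexecuted. The paper's detour through $\CC_{\bgamma}$ plus cut-elimination sidesteps this difficulty entirely; for a complete argument you should either carry out your direct verification or adopt the paper's factorisation.
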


To obtain lower bounds that hold over any field we cannot apply this
framework directly to $\Gamma$-CSPs, as for every Abelian group
$\Gamma$ the constraints always have low immunity over some prime
field $\mathbb F_p$.
Because of this we have to use extended group CSPs and show that there
is a  2-extended $(\mathbb Z_2\times\mathbb Z_3)$-CSP $\CC$ that is at least
as hard as the Boolean $\ZZ_p$-Tseitin tautologies for
$p\in\set{2,3}$.
This implies that $\Pcsp(\CC)$ 
requires large polynomial calculus degree over every prime field and the same
holds true for its graph encoding via the or-construction over the CFI-graphs. 

In Section~\ref{sec:exgroupcsp} we have already shown that solvability
and $p$-solutions can be transferred from ($e$-extended) group CSPs to
the graph isomorphism problem.  We now prove the corresponding
statements in the algebraic setting, showing that low degree
refutations for the graph isomorphism imply low degree refutations for
the system of polynomials corresponding to the underlying CSP.

\begin{lem}\label{lem:reduction_csp_gi}
  Let $\mathbb F_p$ be a prime field, $\Gamma$ an Abelian group, $\CC$ an $n$-variable $\Gamma$-CSP of arity $\arityk$ and $\refdegd_0 = \bigl(\arityk\setsize{\Gamma}+\setsize{\Gamma}^\arityk\bigr)^2+1$. 
  \begin{enumerate}[label=(\alph*)]
    \item \label{item:low_degree_reduction_csp_to_gi}
    There is a degree-$(\arityk, \refdegd_0)$ reduction from $\Pcsp(\CC)$ to $\Piso(\graphcsp,\graphcsptilde)$.
    \item \label{item:low_degree_reduction_gi_to_csp}
    There is a degree-$(1, \refdegd_0)$ reduction from $\Piso(\graphcsp,\graphcsptilde)$ to  $\Pcsp(\CC)$.
  \end{enumerate}
\end{lem}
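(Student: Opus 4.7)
The plan is to read off the substitution polynomials directly from the bijection between satisfying CSP assignments and graph isomorphisms exhibited in the proof of Lemma~\ref{lem:integral_solution_csp_to_gi}\ref{item:csp_and_gi_isomorphic}. Recall that a satisfying assignment $\phi$ induces the isomorphism $\isopi_\phi$ with $\isopi_\phi(\gamma^{(x)}) = (\gamma - \phi(x))^{(x)}$ and $\isopi_\phi(\bbeta^{(C)}) = (\bbeta - \phi(\vec x))^{(C)}$, and an isomorphism $\isopi$ determines $\phi_\isopi(x)$ via $\isopi(\phi_\isopi(x)^{(x)}) = 0^{(x)}$. Translating each direction into polynomials in the respective encoding yields the two reductions.

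For part (a) I set
\[
f_{\givar{\gamma^{(x)}}{\gamma'^{(x)}}} \defeq \cspvar{x}{\gamma - \gamma'}
\qquad\text{and}\qquad
f_{\givar{\bbeta^{(C)}}{\bbeta'^{(C)}}} \defeq \prod_{i=1}^{\arityk}\cspvar{x_i}{\beta_i - \beta'_i}
\]
for $C=((x_1,\ldots,x_{\arityk}),\Delta+\bgamma)$, and $f_{\givar{v}{w}}\defeq 0$ whenever $v,w$ have different colours. Each $f_y$ has degree at most $\arityk$. I then verify the three families of axioms of $\Piso(\graphcsp,\graphcsptilde)$ inside $\Pcsp(\CC)$. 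Idempotence $f_y^2-f_y$ follows immediately from the Boolean axioms together with the exclusivity axioms $\cspvar{x}{\gamma}\cspvar{x}{\gamma'}$. A sum axiom \eqref{eq:cont1}/\eqref{eq:cont2} at a variable vertex reindexes to a single-variable $\Pcsp(\CC)$-sum axiom, while at a constraint vertex it becomes $-1+\sum_{\bbeta''\in\Delta+\bgamma}\prod_i\cspvar{x_i}{\beta''_i}$, which is obtained by multiplying the $\arityk$ sum axioms for $x_1,\ldots,x_\arityk$ together and then using the constraint axioms of $\Pcsp(\CC)$ to drop the summands outside $\Delta+\bgamma$. Each pair axiom \eqref{eq:comp} is handled by a short case analysis on why $\{(v,w),(v',w')\}$ fails to be a partial isomorphism (function clash, injectivity clash, or adjacency mismatch between a variable and a constraint vertex): in each case the product of the two substituted monomials contains a factor $\cspvar{x_i}{\alpha}\cspvar{x_i}{\alpha'}$ with $\alpha\neq\alpha'$, and is therefore zero by exclusivity.

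For part (b) I use the degree-$1$ substitution $f_{\cspvar{x}{\gamma}}\defeq\givar{\gamma^{(x)}}{0^{(x)}}$. Idempotence is a Boolean axiom, $-1+\sum_\gamma\givar{\gamma^{(x)}}{0^{(x)}}$ is a sum axiom \eqref{eq:cont1} of $\Piso$ at $w=0^{(x)}$, and exclusivity $\givar{\gamma^{(x)}}{0^{(x)}}\givar{\gamma'^{(x)}}{0^{(x)}}$ for $\gamma\neq\gamma'$ is a pair axiom because no isomorphism sends two distinct vertices to the same target. The one substantial step is the constraint axiom $\prod_{i=1}^{\arityk}\givar{\beta_i^{(x_i)}}{0^{(x_i)}}=0$ for $\bbeta\notin\Delta+\bgamma$. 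I derive it by multiplying this product with the sum axiom at $w=\vec 0^{(\tilde C)}$, which after discarding mismatched-colour terms via pair axioms reads $\sum_{\balpha\in\Delta+\bgamma}\givar{\balpha^{(C)}}{\vec 0^{(\tilde C)}}=1$. For each $\balpha$ in this sum, combining $\balpha^{(C)}\mapsto\vec 0^{(\tilde C)}$ with every $\beta_i^{(x_i)}\mapsto 0^{(x_i)}$ preserves the edges at the constraint vertex only if $\alpha_i=\beta_i$ for all $i$; since $\bbeta\notin\Delta+\bgamma$ no such $\balpha$ appears, so every summand is killed by a pair axiom.

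The main obstacle is justifying the uniform degree bound $\refdegd_0=(\arityk|\Gamma|+|\Gamma|^\arityk)^2+1$. Every implication above is local to a single constraint $C$ and its $\arityk$ variables, and so involves at most $\arityk|\Gamma|$ CSP variables together with at most $|\Gamma|^{2\arityk}$ isomorphism variables incident to the constraint vertices of $C$. Rather than tracking exact intermediate degrees through the case analyses, the cleanest route is to observe that each of these implications is a semantic consequence of a bounded-degree subset of axioms confined to this local variable pool and then to invoke derivational completeness (Theorem~\ref{thm:derivational_completeness}) to turn it into a syntactic derivation within the stated bound; Theorem~\ref{thm:cut_elimination} lets us splice in derivations that branch on the auxiliary values appearing in the substitutions.
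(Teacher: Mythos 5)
Your proposal is correct and follows essentially the same route as the paper: the same substitution polynomials read off from $\isopi_\phi$ and $\phi_\isopi$ (your part~(b) substitution $\givar{\gamma^{(x)}}{0^{(x)}}$ is a harmless variant of the paper's $\givar{0^{(x)}}{(-\gamma)^{(x)}}$), with the degree bound obtained by localising each substituted axiom to the variable pool of a single constraint gadget and invoking derivational completeness (Theorem~\ref{thm:derivational_completeness}). The explicit case analyses you give are a fine (and more detailed) supplement to the paper's purely semantic verification; the appeal to cut-elimination at the end is unnecessary here.
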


\begin{proof}
We define the a degree-$(\arityk, \refdegd_0)$ reduction from $\Pcsp(\CC)$ to $\Piso(\graphcsp,\graphcsptilde)$ as follows. 
\begin{align}
\polyf_{\givar{\gamma^{(x)}}{(\gamma-\alpha)^{(x)}}} &\defeq \cspvar{x}{\alpha} \\
\polyf_{\givar{(\beta_1,\ldots,\beta_\aritykalt)^{(C)}}{(\beta_1-\alpha_1,\ldots,\beta_\aritykalt-\alpha_\aritykalt)^{(C)}}} &\defeq \prod^\aritykalt_{i=1}\cspvar{x_i}{\alpha_i} \text{ for }C=((x_1,\ldots,x_\aritykalt),\Delta+\bbeta)\\
\polyf_{\givar{v}{w}} &\defeq 0 \text{ for all other }v,w 
\end{align}
It is easy to see that all polynomials $\polyf^2_{\givar{v}{w}}-\polyf_{\givar{v}{w}}$ are derivable.
Let $\polyg$ be one of the substituted axioms
$\sum_{v\in V(G)} \polyf_{\givar{v}{w}} - 1$, $  \sum_{w\in V(H)} \polyf_{\givar{v}{w}} - 1$, or $\polyf_{\givar{v}{w}}\polyf_{\givar{v'}{w'}}$.
Note that there is a constraint $C\in\CC$ such that all variables in $\polyg$ are of the form $\cspvar{x}{\gamma}$ for some variable $x$ occurring in $C$.
Let $\mathsf P_C\subseteq \Pcsp(\CC)$ be
the set of polynomials for constraint $C$. 
By Theorem~\ref{thm:derivational_completeness} it suffices to show that $\mathsf P_C \models_p \polyg$ as this implies that there is a degree $\arityk\setsize{\Gamma}+1\leq \refdegd_0$ derivation of $\polyg$ from $\Pcsp(\CC)$.
Suppose that there is a $\set{0,1}$-assignment $\assignI$ that satisfies $\mathsf P_C$, we have to show that $\assignI(\polyg)=1$. 
By the definition of $\Pcsp$ it follows that there is a satisfying assignment $\phi$ for the constraint such that $\phi(x)=\gamma$ if and only if $\assignI(\cspvar{x}{\gamma})=1$. 
Let $\isopi_\phi$ be the isomorphism between the corresponding
subgraphs as defined in the proof of Lemma~\ref{lem:integral_solution_csp_to_gi} and note that the definition of the substitution gives $\assignI(\polyf_{\givar{v}{w}}) = 1$ if and only if $\isopi_\phi(v)=w$. 
Hence, $\assignI(\polyg)=1$ as every axiom from  
$\Piso(\graphcsp,\graphcsptilde)$
is satisfied by a $\set{0,1}$-assignment corresponding to an isomorphism.

For the backward direction \ref{item:low_degree_reduction_gi_to_csp} we define the degree-$(1, \refdegd_0)$ reduction from $\Piso(\graphcsp,\graphcsptilde)$ to $\Pcsp(\CC)$ by
   $\polyf_{\cspvar{x}{\gamma}} \defeq \givar{0^{(x)}}{(-\gamma)^{(x)}}$. 
   The argument that this is indeed a low degree reduction is similar to the one for \ref{item:low_degree_reduction_csp_to_gi}. 
Let $\polyg$ be some substituted axiom, $C\in\CC$ the corresponding constraint, and $\mathsf P_C\subseteq \Piso(\graphcsp,\graphcsptilde)$ be
the set of polynomials over two subgraphs of $\graphcsp$ and $\graphcsptilde$ that encode the constraint $C$. 
Note that both subgraphs have at most
$\arityk\setsize{\Gamma}+\setsize{\Gamma}^\arityk$ vertices and therefore
$\mathsf P_C$ contains at most
$\bigl(\arityk\setsize{\Gamma}+\setsize{\Gamma}^\arityk\bigr)^2 =
\refdegd_0-1$ variables.
By Theorem~\ref{thm:derivational_completeness} it now suffices to show that $\mathsf P_C \models_p \polyg$.
Suppose that there is a $\set{0,1}$-assignment $\assignI$ that satisfies $\mathsf P_C$. 
By the definition of $\Piso$ it follows that there is an isomorphism $\isopi$ between the gadgets such that $\isopi(0^{(x)})=(-\gamma)^{(x)}$ if and only if $\assignI\bigl(\givar{0^{(x)}}{(-\gamma)^{(x)}}\bigr)=1$. 
Let $\phi_\isopi$ be the corresponding satisfying assignment for $C$
(from Lemma~\ref{lem:integral_solution_csp_to_gi})  and note that the definition of the substitution gives $\assignI(\polyf_{\cspvar{x}{\gamma}}) = 1$ if and only if $\phi_\isopi(x)=\gamma$. 
Hence, $\assignI(\polyg)=1$ as every axiom from  
$\Pcsp(\CC)$
is satisfied by a $\set{0,1}$-assignment corresponding to a satisfying assignment for the CSP.
\end{proof}

The next Lemma transfers Lemma~\ref{lem:extended_group_to_gi} to the
algebraic setting and  provides a reduction from
$\extensionparameter$-extended group CSPs to graph isomorphism.

\begin{lem}\label{lem:extended_group_to_gi_PC}
  Fix a prime field $\mathbb F_p$.
  Suppose that $\CCext = \CC\cup (\vec x, \Rarb)$ is an
  $\extensionparameter$-extended group CSP of arity $\arityk$ and let
  $(G^0_\CCext,G^1_\CCext) \defeq \bigvee_{\bgamma\in
    \Rarb}(G(\CC_{\bgamma}),\tilde G(\CC_{\bgamma}))$.  If
  $\Piso(G^0_\CCext,G^1_\CCext)$ has a degree-$\refdegd$ refutation,
  then $\Pext$ has a refutation of degree
  $\bigoh{\refdegd\arityk\setsize{\Gamma}}$.
\end{lem}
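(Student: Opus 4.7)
The plan is to construct, in analogy with Lemma~\ref{lem:reduction_csp_gi}\ref{item:low_degree_reduction_csp_to_gi}, a low-degree reduction (in the sense of Lemma~\ref{lem:lowdeg}) from $\Pext$ to $\Piso(G^0_\CCext, G^1_\CCext)$; a degree-$d$ refutation of the latter then yields a $\Pext$-refutation of the claimed degree. For each $\bgamma \in \Rarb$, abbreviate $\pi_\bgamma := \prod_{i=1}^k \cspvar{x_i}{\gamma_i}$; this degree-$k$ polynomial expresses, on $\{0,1\}$-assignments satisfying $\Pext$, that $\vec x = \bgamma$.

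The substitution $\polyf_{\givar{v}{w}}$ is designed to mirror the combinatorial isomorphism underlying Lemma~\ref{lem:extended_group_to_gi}\ref{item:extended_group_to_gi-a}: a satisfying assignment of $\CCext$ selects some $\bgamma \in \Rarb$ (say the $j$-th element of $\Rarb$), then yields an isomorphism that pairs sequence graphs in $G^0_\CCext$ with those in $G^1_\CCext$ by flipping coordinate $j$, acts by the identity on each non-$j$ component, and applies the CFI-isomorphism from Lemma~\ref{lem:integral_solution_csp_to_gi}\ref{item:csp_and_gi_isomorphic} on the $j$-th component. Accordingly, for $(v,w) \in V(G^0_\CCext) \times V(G^1_\CCext)$, write $v$ as sitting at position $i$ of some sequence graph $S_{\vec a}$ of $G^0_\CCext$ and $w$ at position $i'$ of some $S_{\vec b}$ of $G^1_\CCext$. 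Set $\polyf_{\givar{v}{w}} := 0$ unless $i = i'$ and $\vec a, \vec b$ differ in exactly one coordinate $j$; let $\bgamma$ denote the $j$-th element of $\Rarb$. For $i \neq j$, where the $i$-th components of $S_{\vec a}$ and $S_{\vec b}$ coincide as labelled copies of the same graph, set $\polyf_{\givar{v}{w}} := \pi_\bgamma$ when $v$ matches $w$ under the canonical identification, and $0$ otherwise. For $i = j$, where the components are $G(\CC_\bgamma)$ and $\tilde G(\CC_\bgamma)$, set $\polyf_{\givar{v}{w}}$ equal to $\pi_\bgamma$ times the polynomial assigned by the CFI substitution of Lemma~\ref{lem:reduction_csp_gi}\ref{item:low_degree_reduction_csp_to_gi} applied to $\CC_\bgamma$. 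The overall degree is $O(k)$.

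Each substituted axiom of $\Piso(G^0_\CCext, G^1_\CCext)$ depends on only a bounded number of $\Pext$-variables (those of one or two CFI gadgets together with the variables appearing in the relevant $\pi_\bgamma$), so by Theorem~\ref{thm:derivational_completeness} any semantically valid substituted axiom admits a $\Pext$-derivation of degree $O(k|\Gamma|)$ (absorbing constants depending only on $k$, $|\Gamma|$, $\extensionparameter$). Two identities do the main work. First, $\sum_{\bgamma \in \Rarb} \pi_\bgamma = 1$ is derivable in $\Pext$ in degree $O(k)$ from the $\Pcsp$-axioms $\sum_\gamma \cspvar{x_i}{\gamma} - 1$ and the $\Rarb$-exclusion axioms $\pi_{\vec\gamma}$ for $\vec\gamma \notin \Rarb$; this discharges the substituted sum axioms $\sum_u \givar{u}{w} - 1$ because, for a fixed $w$, exactly one non-zero contribution appears per $\bgamma$. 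Second, $\pi_\bgamma \pi_{\bgamma'} = 0$ for distinct $\bgamma, \bgamma' \in \Rarb$ is derivable in degree $2k$ from the Boolean and exclusion axioms of $\Pcsp(\CC)$; this kills the cross-case contributions to the non-partial-isomorphism axioms. The same-case contributions collapse to CFI non-iso axioms inside a single $\CC_\bgamma$ and are handled exactly as in the proof of Lemma~\ref{lem:reduction_csp_gi}\ref{item:low_degree_reduction_csp_to_gi}. The Booleanness axioms $\polyf_{\givar{v}{w}}^2 - \polyf_{\givar{v}{w}}$ reduce to $\pi_\bgamma^2 = \pi_\bgamma$ and idempotency of the CFI component, both derivable in degree $O(k)$.

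Altogether this gives a degree-$(O(k), O(k|\Gamma|))$ reduction from $\Pext$ to $\Piso(G^0_\CCext, G^1_\CCext)$, and Lemma~\ref{lem:lowdeg} converts a degree-$d$ refutation of the latter into a $\Pext$-refutation of degree $\max(O(k|\Gamma|),\, d \cdot O(k)) = O(d k |\Gamma|)$. The main obstacle is the substitution itself: one must disentangle, in polynomial terms, how an isomorphism $G^0_\CCext \to G^1_\CCext$ chooses a coordinate $j$ to flip (thereby selecting $\bgamma \in \Rarb$) and then decomposes into the identity on the non-$j$ sequence-graph components, a single CFI-isomorphism on the $j$-th component, and the identity on the auxiliary sequence-graph vertices $v_1, \ldots, v_\ell$. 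Once this decomposition is encoded by the gated CFI substitution above, the axiom verifications are routine adaptations of those in Lemma~\ref{lem:reduction_csp_gi}\ref{item:low_degree_reduction_csp_to_gi}.
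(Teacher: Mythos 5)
Your argument is correct, but it takes a genuinely different route from the paper. The paper factors the implication into three modular steps: a degree-$(1,2)$ ``restriction'' reduction from each $\Piso(G^0_\bgamma,G^1_\bgamma)$ to $\Piso(G^0_\CCext,G^1_\CCext)$ (fixing the matched sequence graphs and all non-$\bgamma$ components to the identity), then the already-proved reduction of Lemma~\ref{lem:reduction_csp_gi}\ref{item:low_degree_reduction_csp_to_gi} to pull each refutation back to $\Pcsp(\CC_\bgamma)$, and finally cut-elimination (Theorem~\ref{thm:cut_elimination}) on the at most $\arityk|\Gamma|$ variables $\cspvar{x_i}{\gamma}$ of the extra constraint to merge the $|\Rarb|$ branch refutations into a single refutation of $\Pext$, at an additive cost of $\arityk|\Gamma|$ in the degree. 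You instead build one monolithic degree-$(O(\arityk),O(\arityk|\Gamma|))$ reduction from $\Pext$ to $\Piso(G^0_\CCext,G^1_\CCext)$, encoding the branch selection multiplicatively via the indicator monomials $\pi_\bgamma$ rather than proof-theoretically via cut-elimination; a single application of Lemma~\ref{lem:lowdeg} then gives the same $O(\refdegd\arityk|\Gamma|)$ bound. What the paper's route buys is modularity (Lemma~\ref{lem:reduction_csp_gi} is reused verbatim and the disjunction is discharged by a black-box theorem); what yours buys is the elimination of Theorem~\ref{thm:cut_elimination} from the argument, at the price of having to re-verify every axiom of $\Piso(G^0_\CCext,G^1_\CCext)$ under the gated substitution. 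Your verification sketch is essentially sound: the cross-branch products die on $\pi_\bgamma\pi_{\bgamma'}$, the same-branch products reduce semantically to the CFI case conditioned on $\pi_\bgamma=1$ (note that the exclusion axioms of $\Pcsp(\CC_\bgamma)$ are not themselves in $\Pext$, so this conditioning is what makes the semantic implication go through), and each substituted axiom mentions only $O(\arityk|\Gamma|)$ variables of $\Pext$, so Theorem~\ref{thm:derivational_completeness} applies. Two small points to tidy up: the substitution must also be defined on the auxiliary connector vertices of the sequence graphs (gated identity, as you indicate only in passing), and the derivation of $\sum_{\bgamma\in\Rarb}\pi_\bgamma-1$ via derivational completeness costs degree $O(\arityk|\Gamma|)$ rather than $O(\arityk)$, which is harmlessly absorbed into $\refdegd_2$.
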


\begin{proof}
  For $\bgamma\in \Rarb$ consider the $\Gamma$-CSPs $\CC_{\bgamma}$
  and let $G^0_{{\bgamma}}\defeq G(\CC_{\bgamma})$ and
  $G^1_{{\bgamma}}\defeq \tilde G(\CC_{\bgamma})$ be the corresponding
  CFI-graphs.
We first apply cut-elimination to $\Pext$ and the at most $\arityk\setsize{\Gamma}$ variables $\cspvar{x_i}{\gamma}$ where $x_i$ is a variable occurring in the additional constraint $(\vec x,\Rarb)$.
  By Theorem~\ref{thm:cut_elimination} it follows that
  \begin{equation}
    \label{eq:proofstmt1}
    \Pcsp(\CC_{\bgamma})\vdash^{\refdegd\arityk}_{p} 1 \quad \Longrightarrow \quad \Pext\vdash^{\refdegd\arityk+\arityk\setsize{\Gamma}}_{p}1.
  \end{equation}
  By Lemma~\ref{lem:reduction_csp_gi} there are degree-$(\arityk, \refdegd_0)$ reductions from $\Pcsp(\CC_{\bgamma})$ to $\Piso(G^0_\bgamma,G^1_\bgamma)$ for some constant $\refdegd_0$.
  By Lemma~\ref{lem:lowdeg} this implies for every $\bgamma\in\Rarb$ and sufficiently large $\refdegd\geq\refdegd_0$
\begin{equation}
  \label{eq:proofstmt2}
  \Piso(G^0_\bgamma,G^1_\bgamma)\vdash^\refdegd_{p} 1 \quad \Longrightarrow \quad \Pcsp(\CC_{\bgamma})\vdash^{\refdegd\arityk}_{p} 1
\end{equation}
Finally we show that there is a degree-$(1,2)$ reduction from $\Piso(G^0_\bgamma,G^1_\bgamma)$ to $\Piso(G^0_\CCext,G^1_\CCext)$.
As mentioned in the proof of Lemma~\ref{lem:or_construction} there is a bijection $\pi$ between the sequence graphs $\widehat G$ contained in $G^0_\CCext$ and $G^1_\CCext$ such that every pair of sequence graphs differs only in component $\bgamma$. 
By fixing all other components we reduce the isomorphism test for $G^0_\bgamma$ and $G^1_\bgamma$ to testing isomorphism of one component. 
We denote vertices in $G^0_\CCext$ and $G^1_\CCext$ by $(v, \widehat G, G^j_\balpha)$ referring to the vertex $v$ in the corresponding copy of $G^j_\balpha$ that is contained in the sequence $\widehat G$.
\begin{align}
  f_{\givar{(v,\widehat G, G^j_\bgamma)}{(w,\pi(\widehat G), G^{1-j}_\bgamma)}} &\defeq \givar{v}{w} & &\text{for $v\in V(G^0_\bgamma)$, $w\in V(G^1_\bgamma)$, $j\in\{0,1\}$,} \\
  f_{\givar{(v,\widehat G, G^j_\balpha)}{(v,\pi(\widehat G), G^j_\balpha)}} &\defeq 1 & &\text{for $v\in V(G^0_\balpha)$, $\balpha\in\Rarb\setminus\{\bgamma\}$, $j\in\{0,1\}$,} 
\end{align}
  and $f_{\givar{\mathsf v}{\mathsf w}} \defeq 0$ in all other cases. 
 As this reduction turns every axiom of $\Piso(G^0_\CCext,G^1_\CCext)$ into a trivial polynomial or an axiom of 
$\Piso(G^0_\bgamma,G^1_\bgamma)$ they can be derived immediately in degree $2$. 
By Lemma~\ref{lem:lowdeg} it follows that
\begin{equation}
  \label{eq:proofstmt3}
  \Piso(G^0_\CCext,G^1_\CCext)\vdash^\refdegd_{p} 1
  \quad\Longrightarrow\quad 
  \Piso(G^0_\bgamma,G^1_\bgamma)\vdash^\refdegd_{p} 1 \text{
    for all } \bgamma\in\Rarb. 
\end{equation}
The lemma follows by combining \eqref{eq:proofstmt1}, \eqref{eq:proofstmt2}, and \eqref{eq:proofstmt3}. 
\end{proof}

Now we have everything in hand to prove our lower bound.

\begin{proof}[Proof of Theorem~\ref{thm:PC_lowerbound}]
  Let be $\dirH$ a $\arityk$-regular directed graph such that
  $\arityk\geq\max(\arityk_0(2),\arityk_0(3))$ satisfies the
  conditions of Theorem~\ref{thm:tseitin_mod_p} for $p=2$ and $p=3$.
  We choose an arbitrary vertex $v^\ast\in V(\dirH)$, let $\sigma(v^\ast):= 1$ and $\sigma(v):= 0$ for all $v\in
  V(\dirH)\setminus\set{v^\ast}$, and consider the Boolean Tseitin
  CSPs $\CB^{H,\ZZ_2,\sigma}$ and $\CB^{H,\ZZ_3,\sigma}$ in the 
  variable set $\setdescr{x_e}{e\in E(H)}$.
    
  We define
  the unsatisfiable 2-extended $\Gamma$-CSP $\CCext$
  for $\Gamma=\mathbb Z_2\times \mathbb Z_3$ in variables
  $\setdescr{y_e}{e \in E(\dirH)}\cup\set{y^\ast}$ as in the proof of
  Theorem~\ref{thm:diophantine_lowerbound}.
  That is, we let $\CCext$ be the Tseitin tautology
  $\CC^{H,\Gamma,\sigma^\ast}$ for $\sigma^\ast\equiv (0,0)$ where we
  replace 
  the constraints \eqref{eq:2} for $v^*$ by the $\Gamma$-constraint 
  \[
  \sum_{e\in\partial_+(v)}y_{e}-\sum_{e\in\partial_-(v)}y_{e}=y^*
  \]
  and add the unary non-group
  constraint $\big(y^*,\{\iota_2,\iota_3\}\big)$ for $\iota_2:=(1,0)$
  and $\iota_2:=(0,1)$.
  Intuitively, $\CC^\ast$ is the Tseitin tautology
  $\CC^{H,\Gamma,\sigma^\ast}$ where we have $\sigma^\ast(v) = (0,0)$ for
  all $v\in V(H)\setminus\set{v^\ast}$ and the additional constraint
  that either $\sigma^\ast(v^\ast) = (0,1)$ or  $\sigma^\ast(v^\ast) = (1,0)$.
We
construct simple low degree reductions from $\Pcsp\bigl(\CB^{H,\ZZ_2,\sigma}\bigr)$ as well
as from $\Pcsp\bigl(\CB^{H,\ZZ_3,\sigma}\bigr)$ to $\Pext$, which are
in fact just restrictions. Fix $p\in\{2,3\}$. For the
reduction from $\Pcsp\bigl(\CB^{H,\ZZ_p,\sigma}\bigr)$ we set for all
$(u,v)\in E(\dirH)$
\begin{align}
  f_{\cspvar{y_{(u,v)}}{(0,0)}} &:= \cspvar{x_{(u,v)}}{0}, \\ 
  f_{\cspvar{y_{(u,v)}}{\iota_p}} &:= \cspvar{x_{(u,v)}}{1},  \\
  f_{\cspvar{y_{(u,v)}}{\gamma}} &:= 0, \quad\text{ if }\gamma \notin \{(0,0),\iota_p\}, 
\end{align}
Furthermore, for the additional variable $y^\ast$ we set
\begin{align} 
  f_{\cspvar{y^\ast}{\iota_p}} &:= 1,  \\
  f_{\cspvar{y^\ast}{\gamma}} &:= 0, \quad\text{ if }\gamma \neq \iota_p, 
\end{align} 
We have to check that this substitution fulfils the requirements of
low degree reductions.  As every variable $y$ from
$\Pext$ is substituted by a polynomial $f_{y}$ of the form $0$,
$1$, $\cspvar{x_e}{0}$, or
$\cspvar{x_e}{1}$, the equations
$f_y^2-f_y$ follow immediately.
Furthermore, the additional constraint
$\big(y^*,\{\iota_2,\iota_3\}\big)$ is satisfied.
In order to verify
that all substituted axioms from vertex constraints
$C_v$ have a constant degree derivation from $\Pcsp\bigl(\CB^{H,\ZZ_p,\sigma}\bigr)$, we apply Theorem~\ref{thm:derivational_completeness} and note
that each substituted constraint
$C_v$ follows semantically from the corresponding vertex constraint
\eqref{eq:Tseitin_constraint} in
$\Pcsp\bigl(\CB^{H,\ZZ_p,\sigma}\bigr)$.
As every such constraint involves at most
$\arityk$ variables we know that the substituted equations can be
derived in degree $\arityk+1$.
As both low degree reductions hold over every prime field, it follows by Lemma~\ref{lem:lowdeg}
and Theorem~\ref{thm:tseitin_mod_p} that every polynomial calculus
refutation of $\Pext$ over a prime field requires degree
$\Omega(\setsize{\Pext})$.
Because $\Pext$ is a 2-extended group CSP, the lower bound for $\Piso(G,\tilde G)$ follows from Lemma~\ref{lem:extended_group_to_gi_PC}.
\end{proof}

%
%
%
%


\begin{thebibliography}{10}

\bibitem{AR01LowerBounds}
Michael Alekhnovich and Alexander~A. Razborov.
\newblock Lower bounds for polynomial calculus: {N}on-binomial case.
\newblock In {\em Proceedings of the 42nd Annual {IEEE} Symposium on
  Foundations of Computer Science ({FOCS}'01)}, pages 190--199, 2001.

\bibitem{alo86}
N.~Alon.
\newblock Eigenvalues and expanders.
\newblock {\em Combinatorica}, 6(2):83--96, 1986.

\bibitem{atsman13}
A.~Atserias and E.~Maneva.
\newblock {S}herali--{A}dams relaxations and indistinguishability in counting
  logics.
\newblock {\em SIAM Journal on Computing}, 42(1):112--137, 2013.

\bibitem{Babai16GraphIsomorphism}
L{\'{a}}szl{\'{o}} Babai.
\newblock Graph isomorphism in quasipolynomial time.
\newblock In {\em Proceedings of the 48th Annual {ACM} Symposium on Theory of
  Computing ({STOC}~'16)}, pages 684--697, June 2016.

\bibitem{Beame.1994}
P.~Beame, R.~Impagliazzo, J.~Krajicek, T.~Pitassi, and P.~Pudlak.
\newblock Lower bounds on {H}ilbert's {N}ullstellensatz and propositional
  proofs.
\newblock In {\em Proceedings of the 35th Annual Symposium on Foundations of
  Computer Science}, pages 794--806, 1994.

\bibitem{BerGro15}
Christoph Berkholz and Martin Grohe.
\newblock Limitations of algebraic approaches to graph isomorphism testing.
\newblock In {\em Proceedings of the 42nd International Colloquium on Automata,
  Languages, and Programming ({ICALP} 2015)}, pages 155--166, 2015.

\bibitem{Buss.2001}
S.~Buss, D.~Grigoriev, R.~Impagliazzo, and T.~Pitassi.
\newblock Linear gaps between degrees for the polynomial calculus modulo
  distinct primes.
\newblock {\em Journal of Computer and System Sciences}, 62(2):267--289, 2001.

\bibitem{BussIPRS97}
Samuel~R. Buss, Russell Impagliazzo, Jan Kraj{\'{\i}}cek, Pavel Pudl{\'{a}}k,
  Alexander~A. Razborov, and Jir{\'{\i}} Sgall.
\newblock Proof complexity in algebraic systems and bounded depth frege systems
  with modular counting.
\newblock {\em Computational Complexity}, 6(3):256--298, 1997.

\bibitem{caifurimm92}
J.~Cai, M.~F{\"u}rer, and N.~Immerman.
\newblock An optimal lower bound on the number of variables for graph
  identification.
\newblock {\em Combinatorica}, 12:389--410, 1992.

\bibitem{Clegg.1996}
M.~Clegg, J.~Edmonds, and R.~Impagliazzo.
\newblock Using the {G}roebner basis algorithm to find proofs of
  unsatisfiability.
\newblock In {\em Proceedings of the 28th annual ACM symposium on Theory of
  computing}, pages 174--183, 1996.

\bibitem{codschsno14}
P.~Codenotti, G.~Schoenbeck, and A.~Snook.
\newblock Graph isomorphism and the {L}asserre hierarchy.
\newblock {\em CoRR}, arXiv:1107.0632v2, 2014.

\bibitem{ell83}
M.N. Ellingham.
\newblock The asymptotic connectivity of labelled coloured regular bipartite
  graphs.
\newblock In L.R.A. Casse, editor, {\em Combinatorial Mathematics X}, volume
  1036 of {\em Lecture Notes in Mathematics}, pages 177--188. Springer Verlag,
  1983.

\bibitem{GroheOttoJournal.2016}
Martin Grohe and Martin Otto.
\newblock Pebble games and linear equations.
\newblock {\em J. Symb. Log.}, 80(3):797--844, 2015.

\bibitem{hoolinwig06}
S.~Hoory, N.~Linial, and A.~Wigderson.
\newblock Expander graphs and their applications.
\newblock {\em Bulletin of the AMS}, 43:439--561, 2006.

\bibitem{kobtorsch93}
J.~K{\"o}bler, J.~Tor{\'a}n, and U.~Sch{\"o}ning.
\newblock {\em The Graph Isomorphism Problem: Its Structural Complexity}.
\newblock Birkh{\"a}user, 1993.

\bibitem{mal14}
P.N. Malkin.
\newblock Sherali--{A}dams relaxations of graph isomorphism polytopes.
\newblock {\em Discrete Optimization}, 12:73--97, 2014.

\bibitem{DWWZ.2013}
R.~O'Donnell, J.~Wright, C.~Wu, and Y.~Zhou.
\newblock Hardness of robust graph isomorphism, {L}asserre gaps, and asymmetry
  of random graphs.
\newblock In {\em Proceedings of the 25th Annual {ACM-SIAM} Symposium on
  Discrete Algorithms}, pages 1659--1677, 2014.

\bibitem{oxl11}
J.~Oxley.
\newblock {\em Matroid Theory}.
\newblock Cambridge University Press, 2nd edition, 2011.

\bibitem{schri86}
A.~Schrijver.
\newblock {\em Theory of Linear and Integer Programming}.
\newblock Wiley, 1986.

\bibitem{Sherali.1990}
H.~D. Sherali and W.~P. Adams.
\newblock A hierarchy of relaxations between the continuous and convex hull
  representations for zero-one programming problems.
\newblock {\em SIAM Journal on Discrete Mathematics}, 3(3):411--430, 1990.

\bibitem{tin86}
G.~Tinhofer.
\newblock Graph isomorphism and theorems of {B}irkhoff type.
\newblock {\em Computing}, 36:285--300, 1986.

\bibitem{Tseitin1983}
G.~S. Tseitin.
\newblock {\em Automation of Reasoning: 2: Classical Papers on Computational
  Logic 1967--1970}, chapter On the Complexity of Derivation in Propositional
  Calculus, pages 466--483.
\newblock Springer Berlin Heidelberg, Berlin, Heidelberg, 1983.

\end{thebibliography}

  \appendix
  \section{Expanders}
\label{sec:expander}

We just review the bare essentials of expander graphs and refer the
reader to the survey~\cite{hoolinwig06} for background.
The \emph{expansion ratio} of a graph $G$ is
\[
h(G):=\min_{\substack{W\subseteq V(G)\\0<|W|\le|G|/2}}\frac{|\partial(W)|}{|W|}.
\]
The \emph{expansion ratio} of a family 
$\CC$ of graphs is
\[
h(\CC)=\inf_{G\in\CC}h(G).
\]
If $\CC$ is infinite and $h(\CC)>0$ we call $\CC$ a \emph{family of expander graphs}
(Typically, we only use this terminology if $\CC$ is infinite.)

\begin{fact}[Folklore]
  For every $d$ there exists a family of $d$-regular $d$-connected
  expander graphs.
\end{fact}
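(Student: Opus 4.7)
The plan is to produce such a family $\CC$ by a probabilistic construction on uniformly random $d$-regular graphs (for $d\ge 3$; the cases $d\le 2$ are either trivial or incompatible with nontrivial expansion). I would sample $G_n$ uniformly at random among the $d$-regular simple graphs on $n$ labelled vertices---in practice, via the configuration model, conditioning on the outcome being simple, which occurs with probability bounded below by a positive constant depending only on $d$---and show that the expander property and $d$-connectivity both hold with probability tending to $1$ as $n\to\infty$. Extracting for each sufficiently large $n$ one graph satisfying both events then yields the desired infinite family.

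For expansion I would invoke Alon's spectral bound~\cite{alo86}: with high probability the second largest eigenvalue $\lambda_2(G_n)$ is bounded away from $d$ by a positive constant $\epsilon_d$, and the discrete Cheeger inequality then gives $h(G_n)\ge h_d$ for a positive constant $h_d$ depending only on $d$. For vertex-connectivity, a $d$-regular graph always has connectivity at most $d$, so I only need to rule out vertex cuts of size strictly less than $d$. Ellingham's theorem~\cite{ell83} establishes precisely that random regular graphs attain maximum connectivity asymptotically almost surely; the analogue for simple $d$-regular graphs admits in any case a direct first-moment argument in the configuration model, enumerating candidate cut sets $S\subseteq V(G_n)$ with $|S|<d$ and bounding the expected number of $d$-regular graphs in which $V(G_n)\setminus S$ disconnects.

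The main obstacle is to arrange both high-probability statements inside the same probabilistic model, so that their intersection is nonempty for large $n$. The cleanest route is to carry out all estimates in the configuration model and then transfer them to the uniform random simple $d$-regular graph via the standard constant lower bound on the probability of sampling a simple graph. Once the two events hold simultaneously with probability approaching $1$, choosing one witness $G_n$ for each sufficiently large $n$ produces the desired family $\CC$ with $h(\CC)\ge h_d>0$ and with vertex-connectivity exactly $d$ throughout, which is all that is needed in the application of Lemma~\ref{lem:exp} (stated in the appendix as Corollary~\ref{cor:exp}).
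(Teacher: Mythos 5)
Your proposal is essentially the paper's own justification: the paper disposes of this Fact with a one-line remark that random $d$-regular (bipartite) graphs are a.a.s.\ $d$-connected by Ellingham~\cite{ell83} and have positive expansion by Alon~\cite{alo86}, which is exactly the pair of ingredients you invoke. The only cosmetic difference is that the paper samples random $d$-regular \emph{bipartite} graphs (matching the setting of~\cite{ell83}) whereas you work with general random $d$-regular graphs and patch the connectivity step with a first-moment argument; both routes are fine.
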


Maybe the easiest way to obtain such a family is by taking random $d$-regular
bipartite graphs with both parts of the same size, which
asymptotically almost surely are $d$-connected \cite{ell83} and
have positive expansion \cite{alo86}.

Recll that a graph $G$ is \emph{$d$-connected} if $|V(G)|>d$ and for
every set $S\subseteq V(G)$ the graph $G\setminus S$ is connected. We
are only interested in 2-connected graphs here. It is well known that
every graph has a nice decomposition into its $2$-connected
components. It is convenient to state this result using tree
decompositions. A \emph{tree decomposition} of a graph $G$ is a pair
$(T,\beta)$, where $T$ is a tree and $\beta:V(T)\to 2^{V(G)}$ such
that: (i) for every $v\in V(G)$ the set of all $t\in V(T)$ such that
$v\in\beta(t)$ is connected in $T$, and (ii) for every edge $vw\in
E(G)$ there is a $t\in V(T)$ such that $v,w\in\beta(t)$. The
\emph{adhesion} of a tree decomposition $(T,\beta)$ is $\max_{tu\in
  E(T)}|\beta(t)\cap\beta(u)|$ if $E(T)\neq\emptyset$ and $0$ if $E(T)=\emptyset$.

\begin{fact}[Folklore]
  Every graph $G$ has a tree decomposition $(T,\beta)$ of adhesion at
  most $1$ such that for all $t\in V(T)$, either the induced subgraph
  $G[\beta(t)]$ is 2-connected or $|\beta(t)|\le 2$.
\end{fact}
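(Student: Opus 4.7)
The plan is to prove the statement by induction on $|V(G)|$, using cut vertices to split $G$ into smaller pieces that can be glued back together along a single shared vertex (yielding adhesion exactly $1$). For the base case $|V(G)| \le 2$, a single-node tree with $\beta(t) := V(G)$ trivially satisfies everything: $|\beta(t)| \le 2$, no tree edges exist, and both tree-decomposition axioms hold vacuously. To handle disconnected $G$, I would write $G = G_1 \uplus G_2$ with both $G_i$ nonempty, apply induction to each, and join the resulting trees by a single edge between arbitrary representatives; since $V(G_1) \cap V(G_2) = \emptyset$ the new edge has adhesion $0$, and no vertex or edge of $G$ crosses components so the axioms carry over.

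Assume then that $G$ is connected with $|V(G)| \ge 3$. If $G$ is $2$-connected, use a single-node decomposition again. Otherwise pick a cut vertex $v$, let $C_1,\ldots,C_k$ (with $k \ge 2$) be the vertex sets of the components of $G \setminus \{v\}$, and set $G_i := G[C_i \cup \{v\}]$. Each $G_i$ is connected with $|V(G_i)| \le |V(G)|-1$, so induction supplies a tree decomposition $(T_i,\beta_i)$ of $G_i$ satisfying the claim. In each $T_i$ choose a node $t_i$ with $v \in \beta_i(t_i)$ (such a node exists because $v \in V(G_i)$ and the second tree-decomposition axiom forces some bag to contain $v$). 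Form $T$ as the disjoint union of the $T_i$ together with new edges $t_1 t_i$ for $i \ge 2$; define $\beta$ as the disjoint union of the $\beta_i$.

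The verification is the main routine step and should be carried out carefully: (i) for every $u \in V(G)$ the set of bags containing $u$ is a connected subtree of $T$ — if $u = v$ the star through $t_1$ connects the previously connected subtrees inside each $T_i$, while if $u \neq v$ then $u$ lies in a unique $G_i$ and its subtree is already connected there; (ii) every edge $e \in E(G)$ lies in some $G_i$ (since $v$ is the only vertex in $V(G) \cap \bigcup_{j\ne i} V(G_j)$ for $e$'s component) and hence is covered by $\beta_i$; (iii) for each $t \in V(T_i)$ we have $\beta_i(t) \subseteq V(G_i)$, and since $G_i$ is an induced subgraph of $G$ we get $G[\beta_i(t)] = G_i[\beta_i(t)]$, so the $2$-connectedness-or-size-$\le 2$ property transfers unchanged from the inductive hypothesis to $G$; (iv) old tree edges already satisfy adhesion $\le 1$ by induction, and each new edge $t_1 t_i$ has $\beta(t_1) \cap \beta(t_i) \subseteq V(G_1) \cap V(G_i) = \{v\}$, so adhesion is at most $1$.

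The only real obstacle is (iii): one must ensure that passing from the induced subgraph $G_i$ back to the ambient graph $G$ does not smuggle in extra edges inside a bag that would break $2$-connectedness (or the size-$\le 2$ alternative). This is exactly why the decomposition is performed along a cut vertex $v$ and why one uses induced subgraphs $G_i = G[C_i \cup \{v\}]$: every pair of vertices inside a bag $\beta_i(t)$ lies in $V(G_i)$, and $G_i$ inherits its adjacencies from $G$, so no new edges appear. With that observation the induction closes cleanly, giving the desired tree decomposition of adhesion at most $1$ whose bags are either $2$-connected subgraphs of $G$ or of size at most $2$.
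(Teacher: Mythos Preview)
The paper does not actually supply a proof of this fact: it is stated as ``Folklore'' and left unproved, so there is nothing to compare your argument against. Your inductive argument via cut vertices is the standard block--cut-tree construction and is correct.

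One tiny imprecision worth tightening: you justify the existence of a node $t_i$ with $v\in\beta_i(t_i)$ by appealing to ``the second tree-decomposition axiom'', but axiom~(ii) in the paper's formulation concerns edges, not vertices, and axiom~(i) by itself does not rule out the empty subtree. The conclusion is nevertheless fine because $v$, being a cut vertex of the connected graph $G$, has a neighbour in every $C_i$, so $G_i$ contains an edge incident to $v$ and axiom~(ii) then does force $v$ into some bag. With that clarification the induction closes exactly as you describe.
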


We call the decomposition $(T,\beta)$ of the fact a
\emph{decomposition of $G$ into 2-connected components}.

\begin{lem}
  Let $\CE$ be a family of $3$-regular 2-connected expander graphs.

  Then there is  constant $c>0$ such that for every $G\in\CE$ and
  every set $W\subseteq V(G)$ there is a set
  $\hat W\supseteq W$ of size $|\hat W|\le c|W|$ such that $G\setminus
  \hat W$ is either empty or 2-connected. 
\end{lem}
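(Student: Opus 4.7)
The plan is to enlarge $W$ in three stages, each time adding $O(|W|)$ vertices, where the hidden constant depends only on the expansion ratio $h := h(\CE) > 0$ and the degree $3$.

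Stage~1 (Connecting up). Set $H_0 := G[V(G) \setminus W]$ and examine its connected components. If no component has more than $n/2$ vertices, summing the edge-expansion bound $|\partial_G(D)| \geq h|D|$ over all components $D$ (noting all boundary edges go to $W$ and $G$ is $3$-regular) yields $hn \leq 3|W|$; I then simply take $\hat W := V(G)$. Otherwise a unique ``big'' component $C^*$ exists; each other component $D$ satisfies $|D| \leq n/2$ with $\partial_G(D)$ consisting only of edges to $W$, so expansion gives $\sum_{D \neq C^*} |D| \leq 3|W|/h$. I absorb all non-$C^*$ components into $W$.

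Stage~2 (Minimum degree $2$). Now $H_0 = G \setminus W$ is connected; let $H_1$ be its $2$-core, obtained by iteratively pruning degree-$1$ vertices. Each connected component $T$ of $V(H_0) \setminus V(H_1)$ is a tree attached to $H_1$ by exactly one edge: otherwise $H_1$ together with $T$ and its two attachment edges would be a strictly larger subgraph of minimum degree $\geq 2$, contradicting maximality of the $2$-core. For such a pendant tree $T$ with $|V(T)| = n_T$, the degree sum $3n_T = 2(n_T - 1) + 1 + e_T$ (where $e_T$ is the number of $G$-edges from $V(T)$ to $W$) gives $e_T = n_T + 1$. Summing over trees, $|V(H_0) \setminus V(H_1)| \leq \sum_T e_T \leq 3|W|$, and I absorb the pruned vertices into $W$.

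Stage~3 (Two-connectivity). Now $H_1 = G \setminus W$ is connected with minimum degree $\geq 2$. Apply the tree-decomposition-into-2-connected-components fact above to obtain $(T,\beta)$ of adhesion $\leq 1$ where each bag is either $2$-connected or has $\leq 2$ vertices. By a standard weighted centroid argument I pick $t^* \in V(T)$ so that for every neighbor $s$ of $t^*$, the side $N_s := \bigl(\bigcup_{t \in T'_s}\beta(t)\bigr) \setminus \{v_s\}$ (with $v_s$ the separator on $\{t^*,s\}$) satisfies $|N_s| \leq |V(H_1)|/2$. When $|\beta(t^*)| \geq 3$ so that $\beta(t^*)$ is $2$-connected, the separator $v_s$ has $\geq 2$ neighbors inside $\beta(t^*)$ and therefore at most one neighbor outside (by $3$-regularity); hence $|\partial_{H_1}(N_s)| \leq 1$. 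Since each vertex of $H_1$ has at most one edge to $W$, edge expansion yields $h|N_s| \leq 1 + e_W(N_s)$. The crucial observation is that every non-empty $N_s$ has $|N_s| \geq 3$: a side of size $1$ or $2$ would force one of its vertices to have $H_1$-degree below $2$, as the only external neighbor $v_s$ contributes at most one edge into $N_s$. Hence the number of non-empty sides $N$ satisfies $N \leq \tfrac{1}{3}\sum_s |N_s|$, and summing the expansion inequality gives $(h - \tfrac{1}{3}) \sum_s |N_s| \leq 3|W|$, so $\sum_s |N_s| = O(|W|)$. If $|\beta(t^*)| \geq 3$, I set $\hat W := V(G) \setminus \beta(t^*)$, so that $G \setminus \hat W = G[\beta(t^*)]$ is $2$-connected and $|\hat W| = O(|W|)$; if $|\beta(t^*)| \leq 2$ the same bound applied to the at most two sides forces $|V(H_1)| = O(|W|)$ and I take $\hat W := V(G)$.

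The main obstacle is the lower bound $|N_s| \geq 3$, which is the step that converts the additive $1$ in the expansion inequality $h|N_s| \leq 1 + e_W(N_s)$ into a clean multiplicative bound; it requires combining the $2$-connectedness of $\beta(t^*)$ (which pins $v_s$ to at most one external neighbor) with the minimum-degree-$2$ property of the $2$-core. A secondary technical point is that the argument as stated only closes cleanly when $h > 1/3$; for smaller expansion ratios one iterates the $2$-core--plus--$2$-connected-decomposition analysis inside each $N_s$ to boost the lower bound to $|N_s| \geq k_0$ for any desired constant $k_0$, which in turn yields the conclusion with a larger constant $c$ depending on $h(\CE)$.
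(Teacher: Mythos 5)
Your route is at heart the paper's: decompose $G\setminus W$ into $2$-connected components via a tree decomposition of adhesion $\le 1$, pick a centroid bag $\beta(t^*)$ so that every side $N_s$ contains at most half the vertices, and use edge expansion to bound $\sum_s|N_s|$ by $O(|W|)$ (your Stages~1 and~2 are extra preprocessing the paper does without). The gap is in how you dispose of the additive $1$ in $h|N_s|\le 1+e_W(N_s)$. Absorbing it via $|N_s|\ge 3$ only closes the argument when $h>1/3$, as you note, and your proposed repair for $h\le 1/3$ --- iterating the $2$-core/block analysis to force $|N_s|\ge k_0$ for arbitrarily large $k_0$ --- cannot succeed: a side consisting of a single triangle attached to $v_s$ by one edge, with the other two triangle vertices sending their third edges into $W$, has minimum degree $2$ inside $H_1$, survives every further round of pruning and decomposition, and is perfectly compatible with expansion. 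So $|N_s|=3$ is unavoidable in general, and since $3$-regular expander families (e.g.\ the random cubic graphs the paper invokes) need not satisfy $h>1/3$, the lemma is not established in the generality required.

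The missing observation --- the one the paper uses at exactly this point --- is that $2$-connectivity of $G$ itself forces $e_W(N_s)\ge 1$ for every nonempty side: if $e_W(N_s)=0$, then $\partial_G(N_s)$ is contained in the at most one edge to $v_s$, so deleting $v_s$ disconnects $G$. Hence $1+e_W(N_s)\le 2\,e_W(N_s)$, and summing gives $h\sum_s|N_s|\le 2|\partial(W)|\le 6|W|$ for every $h>0$, with no case distinction on $h$. Two smaller slips, neither fatal: in Stage~2, the reason a pendant tree has at most one attachment edge is that two attachments would leave the path between them inside the $2$-core (the whole tree together with both edges need not have minimum degree $2$); and in Stage~3, a bag with $|\beta(t^*)|\le 2$ can have more than two nonempty sides (though only boundedly many, by $3$-regularity), and for such a bag the bound $|\partial_{H_1}(N_s)|\le 1$ no longer follows from $2$-connectivity of the bag.
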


\begin{proof}
  Let $\epsilon:=\min\{1,h(\CE)\}$ and 
  \[
  c:=\frac{30}{\epsilon}.
  \]
  Let $n:=|V(G)$, and let $W\subseteq V(G)$ and $k:=|W|$. Without loss
  of generality we may assume that
  \begin{equation}
    \label{eq:3}
    \frac{30}{\epsilon}k< n;
  \end{equation}
  otherwise we let $\hat W:=V(G)$.

  Let $(T,\beta)$ be a tree decomposition of
  $G\setminus W$ into 2-connected components. For every edge
  $tu\in E(T)$, we let $T(t,u)$ be the connected component of
  $T-\{tu\}$ (the tree obtaineed from $T$ by deleting the edge $tu$)
  that contains $u$, and we let $\gamma(t,u):=\bigcup_{s\in
    V(T(t,u))}\beta(s)$. We define $T(u,t)$ and $\gamma(u,t)$
  similarly. 

  Now we orient every edge $tu$ in such a way that it points to the
  larger of the two sets $\gamma(t,u)$ and $\gamma(u,t)$, breaking
  ties arbitrarily. Then there
  is a node $s\in V(T)$ such that all edge $st$ are oriented towards
  $s$. That is, for all $t\in N(s)$ (the set of neighbours of $s$ in $T$) we have
  $|\gamma(s,t)|\le|\gamma(t,s)|$. For every $t\in N(s)$, we let
  $\alpha(t):=\gamma(s,t)\setminus\beta(s)$. Note that 
  \begin{equation}
    \label{eq:1}
    |\alpha(t)|\le \frac{|V(G)\setminus W|}{2}=\frac{n-k}{2}
  \end{equation}
  Without loss of generality we assume that $\alpha(t)\neq\emptyset$
  for all $t\in N(s)$.

  Suppose for contradiction that
  $|\beta(s)|<3$. Let $W'=W\cup\beta(t)$. It follows from \eqref{eq:1}
  that there is a partition $(X,Y)$
  of $V(G)\setminus W'=\bigcup_{t\in N(s)}\alpha(t)$ such that there
  is no edge from $X$ to $Y$ in $G$ and 
  \[
  \frac{|V(G)\setminus W'|}{3}\le|X|\le|Y|\le \frac{2|V(G)\setminus
    W'|}{3}
  \]
  (both $X$ and $Y$ are unions of sets $\alpha(t)$ for $t\in N(s)$).
  Then $|X|\le
  n/2$ and thus 
  \[
  3|W'|\ge\partial(X)\ge \epsilon|X|\ge \frac{\epsilon|V(G)\setminus
    W'|}{3}=\frac{\epsilon}{3}(n-|W'|).
  \]
  This implies 
  \[
  \frac{10}{\epsilon}(k+2)\ge\left(\frac{9}{\epsilon}+1\right)|W'|\ge
  n,
  \]
  which
  contradicts \eqref{eq:3}. Thus $|\beta(s)|\ge 3$, and this means that
  $G[\beta(t)]$ is 2-connected.

  Next, we observe that for every $t\in N(s)$ there is at most one
  edge $e=vw\in E(G)$ such that $v\in\alpha(t)$ and $w\in
  V(G)\setminus(W\cup\alpha(t))$. To see this, suppose for
  contradiction that there are two such edges $v_1w_1$ and
  $v_2w_2$. Then $w_1=w_2=:w$ is the unique vertex in
  $\beta(s)\cap\beta(t)$, and therefore $v_1\neq v_2$. As $G[\beta(s)]$ is 2-connected, $G$ has at
  least two neighbours in $\beta(s)$. But then the degree of $w$ is at
  least $4$, which contradicts $G$ being 3-regular. 

  Hence
  \begin{equation}
    \label{eq:20}
    \epsilon|\alpha(t)|\le|\partial(\alpha(t))|\le 1+e(\alpha(t),W),
  \end{equation}
  where $e(\alpha(t),W)$ is the number of edges between $\alpha(t)$ and $W$.
  Moreover, for every $t\in N(W)$ we have $e(\alpha(t),W)\ge 1$,
  because otherwise the set $\beta(s)\cap\beta(t)$ of size at most $1$
  separates $G$, which contradicts $G$ being 2-connected. Note that
  here we use the assumption $\alpha(t)\neq\emptyset$.

  As $|\partial (W)|\le 3k$, it follows that $|N(s)|\le 3k$. Then
  \begin{align*}
  \Big|\bigcup_{t\in N(s)}\alpha(t)\Big|&=\sum_{t\in
    N(s)}|\alpha(t)|\\
    &\le\frac{|N(s)|+\sum_{t\in N(s)}e(\alpha(t),W)}{\epsilon}&\text{by
                                                               \eqref{eq:20}}\\
    &\le\frac{|N(s)|+|\partial(W)|}{\epsilon}\\
    &\le\frac{6k}{\epsilon}
  \end{align*}
  We let $\hat W:=W\cup\bigcup_{t\in N(s)}\alpha(t)$. Then
  $G\setminus \hat W=G[\beta(s)]$ is 2-connected, and 
  \[
  |\hat W|\le
  \Big(1+\frac{6}{\epsilon}\Big)k\le\frac{7}{\epsilon}k\le ck.
  \qedhere
  \]
\end{proof}

\begin{cor}\label{cor:exp}
  Let $\CE$ be a family of $3$-regular 2-connected expander graphs.

  There is constant $c>0$ such that for every $G\in\CE$ and every set
  $X\subseteq E(G)$ there is a set $X^*\supseteq X$ of size
  $|X^*|\le c|X|$ such that $E(G)\setminus X^*$ is either empty or the
  edge set of a 2-connected subgraph of $G$.
\end{cor}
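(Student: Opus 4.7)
The plan is to reduce the edge version to the vertex version established in the preceding lemma. Given $X\subseteq E(G)$, I would first define $W\subseteq V(G)$ to be the set of all endpoints of edges in $X$. Since each edge contributes at most two endpoints we have $|W|\le 2|X|$.

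Next I would apply the preceding lemma to $W$: there is a constant $c'>0$ (the one produced by that lemma) and a set $\hat W\supseteq W$ with $|\hat W|\le c'|W|\le 2c'|X|$ such that $G\setminus \hat W$ is either empty or $2$-connected. I then set
\[
X^\ast := E(\hat W),
\]
the set of edges of $G$ incident to at least one vertex of $\hat W$. The containment $X\subseteq X^\ast$ is immediate because every edge of $X$ has both endpoints in $W\subseteq \hat W$. Because $G$ is $3$-regular, the number of edge-endpoint incidences at $\hat W$ is $3|\hat W|$, so $|X^\ast|\le 3|\hat W|\le 6c'|X|$, and we may take $c:=6c'$.

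It remains to identify $E(G)\setminus X^\ast$. By construction, removing from $E(G)$ all edges incident to $\hat W$ leaves exactly those edges with both endpoints in $V(G)\setminus \hat W$, i.e.\ $E(G\setminus \hat W)$. If $\hat W=V(G)$ this is empty; otherwise the preceding lemma guarantees $G\setminus \hat W$ is $2$-connected, so $E(G)\setminus X^\ast$ is the edge set of a $2$-connected subgraph of $G$, as required. There is no real obstacle here once the vertex version of the lemma is in hand; the only point to keep in mind is that the $3$-regularity of $G$ is what allows us to pay only a constant blow-up when passing from the vertex cut $\hat W$ to the edge set $E(\hat W)$ that contains it.
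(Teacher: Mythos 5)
Your proof is correct and is exactly the intended derivation: the paper leaves the corollary as an immediate consequence of the preceding lemma, obtained by taking $W$ to be the endpoints of $X$, applying the lemma, and letting $X^*=E(\hat W)$, with $3$-regularity giving the constant-factor bound $|E(\hat W)|\le 3|\hat W|$. No issues.
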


\end{document}